\newtheorem{theorem}{Theorem}[section]
\newtheorem{remark}[theorem]{Remark}
\newtheorem{example}[theorem]{Example}
\theoremstyle{definition}
\newtheorem{definition}[theorem]{Definition}
\newtheorem{lemma}[theorem]{Lemma}
\newtheorem{corollary}[theorem]{Corollary}
\newenvironment{numberedlemma}[1]
  {\innernumberedlemma}
  {\endinnernumberedlemma}
\newenvironment{numberedtheorem}[1]
  {\innernumberedtheorem}
  {\endinnernumberedtheorem}
\newcommand{\lemref}[1]{Lem.~\ref{lem:#1}}
\newcommand{\lemsref}[2]{Lems.~\ref{lem:#1} and~\ref{lem:#2}}
\newcommand{\defref}[1]{Def.~\ref{def:#1}}
\newcommand{\secref}[1]{Sec.~\ref{sec:#1}}
\newcommand{\thmref}[1]{Thm.~\ref{thm:#1}}
\newcommand{\thmsref}[2]{Thms.~\ref{thm:#1} and \ref{thm:#2}}
\newcommand{\exaref}[1]{Ex.~\ref{exa:#1}}
\newcommand{\figref}[1]{Fig.~\ref{fig:#1}}
\newcommand{\criref}[1]{{\bfseries C\ref{cri:#1}}}
\renewcommand{\eqref}[1]{(\ref{eq:#1})}
\newcommand{\textaction}[1]{$\mathsf{#1}$}
\newcommand{\m}[1]{\mathsf{#1}}
\newcommand{\mc}[1]{\mathcal{#1}}
\renewcommand{\vec}[1]{\obar{#1}}
\renewcommand{\phi}{\varphi}
\renewcommand{\obar}[1]{\makebox[0pt]{$\phantom{#1}\overline{\phantom{#1}}$}#1}
\newcommand{\inquotes}[1]{\ensuremath{\text{\textgravedbl}\!{#1}\!\text{\textacutedbl}}}
\newcommand{\inn}{\,{\in}\,} 
\newcommand{\eqn}{\,{=}\,} 
\newcommand{\cseq}[1]{\mathbf{#1}}
\newcommand{\domino}[4]{\text{\tikz[baseline=-0.5ex]{\node[scale=.6, inner sep=0pt]{$%
\left[\begin{array}{@{\,}l@{=}l@{\,}}{#1}&{#2}\\{#3}&{#4}\\\end{array}\right]$}}}}
\tikzstyle{state}=[draw, circle, inner sep=1.5pt, line width=.7pt, scale=.6]
\tikzstyle{edge}=[draw, ->, line width=.5pt]
\tikzstyle{action}=[scale=.55]
\tikzstyle{caption}=[scale=.9]
\renewcommand{\AA}{\mc A} 
\newcommand{\BB}{\mc B} 
\newcommand{\CC}{\mc C} 
\newcommand{\LL}{\mc L} 
\newcommand{\NN}{\mc N} 
\newcommand{\equivBC}{\sim} 
\newcommand{\equivGC}[1][K]{\sim_{#1}}
\newcommand{\trans}[1]{\Delta_{#1}} 
\newcommand{\hist}{h} 
\newcommand{\cutoff}[2][K]{\lfloor #2\rfloor_{#1} } 
\newcommand{\Const}{{\mc K}} 
\newcommand{\MC}{\textup{MC}} 
\newcommand{\GC}{\textup{GC}} 
\newcommand{\until}{\mathrel{\mathsf{U}}} 
\newcommand{\update}{\mathit{update}} 
\newcommand{\guard}{\mathit{guard}} 
\newcommand{\vwrite}{\mathit{write}} 
\newcommand{\constr}{\mathit{constr}} 
\newcommand{\LBC}{\LL_{\BB\CC}} 
\newcommand{\last}{\lambda} 
\newcommand{\eqc}[1]{\llbracket{#1}\rrbracket} 
\newcommand{\NFApsi}[1][\psi]{{\NN}_{#1}}
\newcommand{\hasproperty}{admits a finite summary\xspace}
\newcommand{\haveproperty}{admit a finite summary\xspace}
\newcommand{\property}{finite summary\xspace}
\newcommand{\Property}{Finite summary\xspace}
\newcommand{\goto}[1]{\mathrel{\raisebox{-2pt}{$\xrightarrow{#1}$}}}
\newcommand{\pcnode}[3]{{#1}\nodepart{two}{#2}\nodepart{three}{#3}}
\newcommand{\yes}{\checkmark}
\newcommand{\no}{\textsf x}
\newcommand{\mydds}{DDSA\xspace}
\newcommand{\myddss}{DDSAs\xspace}
\newcommand{\tool}{\texttt{ada}\xspace}
\newcommand{\BLUE}[1]{{\color{blue!80!black}{#1}}}
\newcommand{\RED}[1]{{\color{red!80!black}{#1}}}
\newcommand{\GREEN}[1]{{\color{green!50!black}{#1}}}
\newcommand{\savespace}[1]{
\todo[linecolor=gray!80!black,backgroundcolor=gray!25,bordercolor=gray]{\footnotesize drop to save space?}
{\color{gray}#1}}
\renewcommand{\savespace}[1]{}
\title{Linear-Time Verification of Data-Aware Dynamic Systems with Arithmetic}
\author{Paolo Felli, Marco Montali, Sarah Winkler\thanks{This work is partially supported by the UNIBZ projects DaCoMan, QUEST, SMART-APP, VERBA, and WineId.}
}
\begin{document}

\maketitle

\begin{abstract}
Combined modeling and verification of dynamic systems and the data they operate on has gained momentum in AI and in several application domains. We investigate the expressive yet concise framework of data-aware dynamic systems (DDS), extending it with linear arithmetic, and provide the following contributions. 
First, we introduce a new, semantic property of ``finite summary'', which guarantees the existence of a faithful finite-state abstraction. We rely on this to show that checking whether a witness exists for a linear-time, finite-trace property is decidable for DDSs with finite summary. 
Second, we demonstrate that several decidability conditions studied in formal methods and database theory can be seen as concrete, checkable instances of this property. This also gives rise to new decidability results.
Third, we show how the abstract, uniform property of finite summary leads to modularity results: a system enjoys finite summary if it can be partitioned appropriately into smaller systems that possess the property. 
Our results allow us to analyze systems that were out of reach in earlier approaches. 
Finally, we demonstrate the feasibility of our approach in a prototype implementation.
\end{abstract}

\section{Introduction}

The analysis of complex dynamic systems is a core research topic in AI.
While process analysis has long focused on the control-flow perspective,
in recent years a multi-perspective approach gained momentum,
studying the interplay between control flow and data~\cite{Reichert12,CGM13,CDMP18,DHLV18}.
Verification in this setting is challenging, as it must deal with potentially infinitely many states.

This is aggravated in the presence of arithmetic, notwithstanding that
it is essential for practical applications \cite{DHLV18}:
model checking of transition systems operating over simple data with arithmetic constraints is known to be undecidable, as it is easy to model a two-counter system.
However, restrictions on the transition system have been shown to render certain verification tasks decidable.
In particular, decidability has been obtained by confining the constraint language, as in the case of \emph{monotonicity} constraints~\cite{DD07} (e.g. $x\:{\leq}\:y$) and 
\emph{gap-order} constraints~\cite{MT16,BP14} (e.g. $x\,{-}\,y\:{\geq}\:2$), or by limiting the control flow, as in the case of \emph{feedback freedom}~\cite{DDV12}.

In this work, we focus on the framework of data-aware dynamic systems (DDSs) \cite{LFM20}, an expressive yet concise model for process analysis, which we enrich with linear arithmetic. We call the resulting systems \emph{DDSs with arithmetic (\myddss)}, and study the verification problem for the linear-time, finite-trace temporal logic LTL$_f$~\cite{dGV13} extended with arithmetic constraints.
The following is a motivating example.
\begin{example}
\label{exa:auction}
Consider the process of an auction at an online market place. 
Its data variables are a timer $d$,
the offer $o$ by the last bidder, identified by $b$,
a threshold price $t$ for which the item can be sold immediately,
and the sum $s$.
\resizebox{\columnwidth}{!}{
\begin{tikzpicture}[node distance=42mm,>=stealth']
\tikzstyle{action}=[scale=.6]
\node[state] (1)  {};
\node[state, below of=1, xshift=-0mm, yshift=26mm] (2) {};
\node[state, left of=2, xshift=-10mm] (3) {};
\node[state, right of=2, xshift=2mm] (4) {};
\node[state, right of=4, double, xshift=-0mm] (5) {};
\node[scale=.8, yshift=-3mm] at (5) {$\mathsf{sold}$};
\draw[edge] (1) to node[right,action, very near start]
{$\mathsf{init}\colon[\,\RED{d^w\,{>}\,0}\wedge \BLUE{t^w\,{>}\,0}\,]$} (2);
\draw[edge] (3) to node[above,action] {$\mathsf{bid}\colon[\,\RED{0 < b^w} \wedge \BLUE{o^w > o^r}\,]$} (2);
\draw[edge, rounded corners] (2) -- ($(2) + (-.2,.4)$) -- node[above,action]{$\mathsf{check}\colon[\,\RED{d^r > 0}\,]$} ($(3) + (.2,.4)$) -- (3);
\draw[edge, rounded corners] (3) -- ($(3) + (.2,-.4)$) to node[above,action]
{$\mathsf{dec}\colon[\,\RED{d^r\,{-}\,d^w\,{\geq}\,1}\,]$} ($(2) + (-.2,-.4)$) -- (2);
\draw[edge, rounded corners] (2) -- ($(2) + (.2,.2)$)
  -- node[above,action]{$\mathsf{exp}\colon[\,\RED{d^r \leq 0} \wedge \RED{b^r > 0}\,]$}  ($(4) + (-.2,.2)$)  -- (4);
\draw[edge, rounded corners] (2) -- ($(2) + (.2,-.2)$)
  -- node[above,action]{$\mathsf{sell\: now}\colon[\,\BLUE{o^r > t^r}\,]$}  ($(4) - (.2,.2)$)  -- (4);
\draw[edge, rounded corners] (4) -- ($(4) + (.2,.2)$) -- node[above,action]{$\mathsf{fee}\colon[\,\GREEN{s^w = o^r + 10}\,]$} ($(5) + (-.2,.2)$) -- (5);
\end{tikzpicture}
}\\
The timer $d$ is initialized to a number of days, and $t$ is fixed (action \textaction{init}).
Then, while the timer did not expire (\textaction{check}), bids are taken (\textaction{bid}) or the timer may be decremented (\textaction{dec}).
The auction ends if the timer expires and a bid was set (\textaction{exp}),
or the offer exceeds $t$ (\textaction{sell\:now}).
Finally, \textaction{fee} sets $s$ to the offer plus an auction fee. We will use our approach to verify that
$\psi = \Box(\mathsf{sold} \wedge d\,{>}\,0 \to o\,{>}\,t)$ holds, i.e., if the auction
ends
before the timer expires, the offer exceeds the threshold.
The meaning of the colors will be clarified later.
\end{example}

\noindent
Our contribution is as follows. 
\begin{inparaenum}[(1)]
\item First, we introduce the novel property of \emph{\property}, and show that the above restrictions studied in the literature (i.e. $(i)$ monotonicity constraints, $(ii)$ gap-order constraints, $(iii)$ feedback freedom) are instances of this property. We further generalize feedback freedom introducing a new, expressive property called $(iv)$ \emph{bounded lookback}. 
\item Second, we prove that finite summary guarantees the existence of a \emph{faithful, finite-state abstraction} for a \mydds by representing sets of states as logical constraints. This is used to show that checking existence of a witness for an LTL$_f$ property is decidable. 
\item Third, we illustrate a \emph{modularity} result: if a \mydds $\BB$ represents either the sequential, or parallel but variable-disjoint, execution of \myddss with finite summary (possibly according to the different criteria $(i)$-$(iv)$), then also $\BB$ enjoys this property and is thus amenable to our verification technique.
\end{inparaenum}

To the best of our knowledge, LTL$_f$ model checking of such  combinations of $(i)-(iv)$ is shown decidable for the first time (and the result is new for $(ii),(iv)$ individually).

To demonstrate feasibility, we implemented our approach in the tool \tool, which tests for finite summary using $(i)-(iv)$, computes finite-state abstractions, and handles LTL$_f$ model checking using an SMT solver as backend.

\smallskip
\noindent
\textbf{Related Work.}
Verification of transition systems with arith\-metic constraints
has been studied in many areas including 
formal methods, database theory, and BPM.
For monotonicity constraint (MC) systems, LTL model checking 
was proven decidable in~\cite{DD07}, even comparing variables multiple steps apart.
An extended language is studied in~\cite{Demri06}.
\mydds{s} with MCs are also considered in \cite{FLM19} from the perspective of
a finite-run semantics, giving an explicit procedure to compute finite, faithful abstractions.
For gap-order constraint (GC) systems, reachability was shown decidable~\cite{BGI09}.
Also the existential fragment of CTL$^*$ with \GC s is decidable,
while the universal one is not~\cite{BP14}.
A similar di\-cho\-tomy was discovered for the EF and EG fragments of CTL~\cite{MT16}.
We here consider LTL$_f$ model checking, 
a task suited to many applications~\cite{dGV13}:
For \mydds{s} with a finite summary, we prove decidability of our verification task, i.e., 
to check existence of a witness for an LTL$_f$ formula with constraints. 
\Property is based on the notion of \emph{history constraints} from \cite{DDV12},
and we show that it generalizes their feedback freedom property, though their
constraints may refer to a read-only database, a feature that we leave for future work.
\myddss generalize timed automata, and in fact our abstraction shares with the famous region graph the representation of a ``region'' of reachable states by a formula~\cite{AD94}.
The finite summary property does not cover timed automata with multiple clocks, whereas the one-clock-case is captured by MCs.
Abstracting reachable states by formulas is an approach that was also pursued in~\cite{BDD13}.
However, our results are incomparable to both of these works.
Our method can be seen as a form of predicate abstraction,
subject to a long line of research 
(e.g., \cite{ClarkeKSY04,ColonU98});
but in contrast to most works there, our abstraction is 
\emph{strongly} preserving, i.e., our verification task is decidable.

\smallskip
\noindent
\textbf{Paper structure.} 
In \secref{dds} we formalize \mydds{s} and our verification language and task. 
In \secref{finite summary} we develop the notion of \property and show how it yields finite state abstractions.
\secref{model checking} is devoted to our verification technique.
In \secref{conditions} we demonstrate four concrete classes implying finite summary, and in \secref{modularity} we present modulari\-ty results. \secref{implementation} describes our tool \tool and concludes with directions for future work.  
All proofs and further examples can be found in an extended version~\cite{adax}.

\section{DDSs with Arithmetic}
\label{sec:dds}

In this section we fix our model and verification language:
we enrich \emph{data-aware dynamic systems (DDSs)} from \cite{LFM20}
with linear arithmetic constraints, and extend the linear-time verification language correspondingly.

\smallskip
\noindent
\textbf{Model.}
We start by defining the set of 
arithmetic constraints over a domain $D$,
which may be $\mathbb Z$, $\mathbb Q$, or $\mathbb R$:
\begin{definition}
\label{def:constraint}
A \emph{constraint} $c$ over a set $V$ of variables is defined by the following grammar, where $k\in D$ and $v \in V$: \\
\begin{tabular}{l@{~}l}
$e$ &:= $v\ \mid\ k\ \mid\ e + e\ \mid\ e - e\ $ \\
$c$ &:= $e = e\ \mid\ e \neq e\ \mid\ e < e\ \mid\ e \leq e\ \mid\ c \wedge c$ 
\end{tabular}
\label{def:constraints}
\end{definition}
\noindent
The set of all constraints over domain $D$ is denoted by $\CC_D$.
E.g., $x \neq 1$, $x < y\,{-}\,z$, and $x\,{-}\,y = 2$ are constraints over 
$\{x, y, z\}$ for domain $\mathbb Z$, $\mathbb Q$, or $\mathbb R$. 
From now on, $V$ will be a fixed, finite set of variables. 
Two disjoint copies $V^r$ and $V^w$ of $V$, called the \emph{read} and \emph{write} variables,
denote the variable values before and after a transition, respectively. 
We also write $\vec V$ for a vector that contains the variables $V$ in an arbitrary but fixed order, and $\vec V^r$ and $\vec V^w$ for $V^r$ and $V^w$ ordered in the same way.
Throughout this paper, by a \emph{formula} $\phi$ we mean a boolean formula
whose atoms are either propositional or constraints as in \defref{constraint}.
We are thus in the realm of SMT with linear arithmetic, which is decidable and admits \emph{quantifier elimination}:
if $\phi$ is a formula with free variables $X \cup \{y\}$, 
and atoms in $\CC_D$ (cf. \defref{constraints}),
there is some $\phi'$ with free
vari\-ables $X$ that is equivalent to $\exists y. \phi$, 
i.e., $\phi'\,{\equiv}\,\exists y. \phi$~\cite{Presburger29}.
Here the relation $\equiv$ denotes logical equivalence.
For a set $C$ of constraints and a formula $\phi$, 
we sometimes write $\phi \wedge C$ for the formula $\phi \wedge \bigwedge C$.

A \emph{state variable assignment} $\alpha$ is a total function
$\alpha \colon V \mapsto D$;  we say that $\alpha$ \emph{satisfies} a constraint $c$ over $V$, 
written $\alpha \models c$, if the evaluation of $c$ under $\alpha$ is true in $D$.
\begin{definition}
\label{def:DDSA}
A \emph{DDS with arithmetic (\mydds)}  is
a labelled transition system $\langle B, b_0, \AA, T, F, V, \alpha_0, guard\rangle$,  where:
\begin{compactitem}[$\bullet$]
\item $B$ is a finite set of \emph{states}, with $b_0\in B$ the initial one;
\item $\AA$ is a finite set of \emph{actions};
\item $T\colon B \times \AA \mapsto B$ is a \emph{transition function};
\item $F \subseteq B$ is the set of \emph{final states};
\item $\alpha_0$ is the \emph{initial state variable assignment}; and
\item $guard\colon \AA \mapsto \CC_D$ specifies \emph{executability constraints} 
on actions over variables $V^r\cup V^w$.
\end{compactitem}
\end{definition}

\noindent
In \defref{DDSA} we restrict to conjunctive guards: disjunction can be captured by 
multiple transitions between the same states.
With this convention, the system in \exaref{auction} can be transformed into an
equivalent \mydds, and \figref{examples} shows further examples of \myddss.
Note that a guard simultaneously expresses a condition on the read variables, and an update on the written ones:
for instance, $v^r<7$ requires the current value of $v$ to be less than $7$, while $v^w - v^r \leq 7$ demands that the new value of $v$ exceeds the current value by at most $7$. 

We denote a transition from state $b$ to $b'$ by executing an action $a\inn \AA$ as 
$b \goto{a} b'$.
A \emph{configuration} of $\BB$ is a pair $(b, \alpha)$ where $b\inn B$
and $\alpha$ is a state variable assignment.

A \emph{guard assignment} $\beta$ is a function $\beta\colon V^r \cup V^w \mapsto D$.
As defined next, an action $a$ transforms a configuration $(b, \alpha)$ into a new configuration $(b', \alpha')$ by changing state as defined by  action $a$, and updating the state variable assignment in agreement with the action guard. In the new assignment $\alpha'$, variables that are not written keep their previous value as per $\alpha$, whereas written variables are updated according to the guard.
%
Let $write(a) = \{x \mid x^w\in V^w\text{ occurs in }guard(a)\}$.

\begin{definition}
A \mydds $\BB\,{=}\,\langle B, b_0, \AA, T, F, V, \alpha_0, \guard\rangle$
\emph{admits a step} from configuration $(b, \alpha)$ to 
$(b', \alpha')$ via action $a$,
denoted $(b, \alpha) \goto{a} (b', \alpha')$,
if $b \goto{a} b'$ and
the guard assignment $\beta$ given by
$\beta(v^r) = \alpha(v)$ and
$\beta(v^w) = \alpha'(v)$ for all $v \in V$
satisfies the guard of $a$, i.e., $\beta \models \guard(a)$ holds.
\end{definition}
\noindent
A \emph{run} of length $n$ is a sequence of steps
$\smash{\rho\colon(b_0, \alpha_0) 
\goto{a_1}}$
$\smash{ (b_1, \alpha_1)
\goto{a_2} \dots
\goto{a_n} (b_n, \alpha_n)}$,
and $\rho_i$ refers to $(b_i, \alpha_i)$.
Note that a run always starts in the initial state $(b_0, \alpha_0)$.

\smallskip
\noindent
\textbf{Specification language.}
\label{sec:lang}
For a constraint set $\CC$ over $V$ and
\mydds $\BB = \langle B, b_0, \AA, T, F, V, \alpha_0, guard\rangle$,
let $\LBC$ be the language defined by the following grammar: \\[.5ex]
%
\begin{tabular}{@{\quad}c@{\quad}}
$c \mid b \mid 
\psi {\wedge} \psi \mid  \psi {\vee} \psi \mid 
\langle a\rangle \psi \mid \langle\cdot\rangle \psi \mid 
\Diamond \psi \mid \Box \psi \mid \psi \until \psi$
\end{tabular}
where $a \inn\AA$, $c \inn \CC$, and $b\inn B$.
%
Note that $\LBC$ does not support negation as we will also consider fragments 
where decidability is lost if constraints can be negated.
However, if the set $\CC$ is closed under negation,\footnote{Here a constraint set $\CC$ is \emph{closed under negation} if for all $c\,{\in}\,\CC$ there is some $c'\,{\in}\,\CC$ such that $c' \equiv \neg c$.} $\LBC$ can express an
arbitrary formula in negation normal form. We adapt LTL$_f$ semantics~\cite{dGV13}:

\begin{figure}[t]
\centering
\resizebox{\columnwidth}{!}{
\begin{tikzpicture}[node distance=22mm,>=stealth']

\begin{scope}[shift={(0mm,0mm)}, node distance=24mm]
\node[caption] at (-.7,0) {$\BB_1$};
\node[state] (1) {$\mathsf 1$};
\node[state, right of=1, double] (2) {$\mathsf 2$};
\draw[edge] ($(1) + (-.4,0)$) -- (1);
\draw[edge, rounded corners] (1) -- node[above,action]{$\mathsf{a}_1\colon [x^w > y^r]$} (2);
\draw[edge, rounded corners] (2) -- ($(2) - (0,.3)$)
  -- node[below,action]{$\mathsf{a}_2\colon [y^w > x^r]$}  ($(1) - (0,.3)$)  -- (1);
\end{scope}

\begin{scope}[shift={(31mm,-2mm)},node distance=27mm]
\node[caption] at (-.7,0) {$\BB_2$};
\node[state] (1) {$\m 1$};
\node[state, right of=1,xshift=-5mm] (2) {$\m 2$};
\node[state, right of=2, double] (3) {$\m 3$};
\draw[edge] ($(1) + (-.4,0)$) -- (1);
\draw[edge] (1) -- node[pos=.3,above,action]{$[x^w > 0]$} (2);
\draw[edge] (2) to[loop left, out=120,in=60, looseness=6] node[action, xshift=-1mm,yshift=1mm]{$[y^w > x^r]$} (2);
\draw[edge] (2) -- node[above,action]{$[y^r \leq x^r+7]$} (3);
\end{scope}

\begin{scope}[shift={(0mm,-12mm)},node distance=24mm]
\node[caption] at (-.7,0) {$\BB_3$};
\node[state] (1) {$\m 1$};
\node[state, right of=1, double] (2) {$\m 2$};
\draw[edge] ($(1) + (-.4,0)$) -- (1);
\draw[edge] (1) -- node[above,action]{$[x^w - y^r \geq 2]$} (2);
\draw[edge, rounded corners] (2) -- ($(2) - (0,.4)$)
  -- node[above,action]{$[y^w - y^r \geq 3]$}  ($(1) - (0,.4)$)  -- (1);
\end{scope}

\begin{scope}[shift={(31mm,-12mm)}, node distance=27mm]
\node[caption] at (-.7,0) {$\BB_4$};
\node[state] (1)  {$\m 1$};
\node[state, right of=1, xshift=-5mm] (2) {$\m 2$};
\node[state, right of=2, double] (3) {$\m 3$};
\draw[edge] ($(1) + (-.4,0)$) -- (1);
\draw[edge] (1) to node[above,action] {$[s^w = a^r]$} (2);
\draw[edge] (2) to node[above,action]{$[s^w = s^r + b^r]$} (3);
\draw[edge, rounded corners] (3) -- ($(3) - (0,.4)$)
  -- node[above,action, pos=0.55]{$[a^w = 0 \wedge b^w = 0]$}  ($(1) - (0,.4)$)  -- (1);
\draw[edge] (1) to[loop above, out=120,in=60, looseness=6] node[action, yshift=-1mm]{$[a^w > 0]$} (1);
\draw[edge] (2) to[loop above, out=120,in=60, looseness=6] node[action, yshift=-1mm]{$[b^w > 0]$} (2);
\end{scope}
\end{tikzpicture}
}
\caption{Simple \mydds{s} (with \property).}
\label{fig:examples}
\end{figure}

\begin{definition}
\label{def:witness}
A run $\rho$ of length $n$ \emph{satisfies} $\psi \in \LBC$, denoted 
$\rho \models \psi$, iff $\rho,0 \models \psi$ holds, 
where for $0 \leq i \leq n$:

\noindent
\begin{tabular}{@{~}l@{ ~ }l}
$\rho,i \models c$  & iff $\rho_i = (b,\alpha)$ for some $b$
 and $\alpha\models c$\\
$\rho,i \models b$ & iff $\rho_i = (b,\alpha)$ for   
 some $\alpha$\\
$\rho,i \models \psi_1 \wedge \psi_2$ & iff 
$\rho,i \models \psi_1$ and $\rho,i \models \psi_2$\\
\end{tabular}
\noindent
\begin{tabular}{@{~}l@{ ~ }l}
$\rho,i \models \psi_1 \vee \psi_2$ & iff 
$\rho,i \models \psi_1$ or $\rho,i \models \psi_2$\\
$\rho,i \models \langle a\rangle\psi$ & iff $i<n$, $\exists \beta$
$\smash{\rho_i \goto{a,\beta} \rho_{i{+}1}}$ and $\rho,i{+}1 \models \psi$\\
$\rho,i \models \langle\cdot\rangle\psi$ & iff $i<n$ and 
 $\rho,i{+}1 \models \psi$\\
$\rho,i \models \Diamond\psi$ & iff 
$\rho,i \models \psi$ or ($i<n$ and
$\rho,i{+}1\models \Diamond\psi$)\\
$\rho,i \models \Box\psi$ & iff 
$\rho,i \models \psi$ and ($i=n$ or
$\rho,i{+}1\models \Box\psi$)\\
$\rho,i \models \psi_1 \until \psi_2$ & iff 
$\rho,i \models \psi_2$, or ($i\,{<}\,n$ and both\\
& \quad 
$\rho,i \models \psi_1$ and
$\rho,i{+}1\models \psi_1 \until \psi_2$)
\end{tabular}
\end{definition}

\noindent
\textbf{Verification problem.} We use $\LBC$ to express properties over the finite traces of a \mydds $\BB$. 
A run $\rho$ is a \emph{witness} for $\psi \in \LBC$ if $(i)$ $\rho$ ends in a final state of $\BB$ and $(ii)$ $\rho \models \psi$.

\begin{definition}[Verification task]
Given a \mydds $\BB$ and $\psi \in \LBC$, check whether there exists a witness $\rho$ for $\psi$ in $\BB$.
\end{definition}

\noindent If $\CC$ is closed under negation, one can model check $\psi$ by looking for a witness for $\neg \psi$, i.e., a counterexample.

Unsurprisingly, \mydds{s} can directly encode 2-counter Minsky machines, making the verification task undecidable.

\begin{remark}
\label{rem:undecidability}
It is undecidable to check whether there exists a witness for a property of the form $\Diamond b$ in a \mydds, for $b\in B$.
\end{remark}

\section{DDSAs with Finite Summary}
\label{sec:finite summary}

Instead of taming undecidability of verification by directly looking for decidable fragments, we introduce a \emph{semantic property} called \emph{\property}, and show that DDSAs with this property admit a faithful finite-state abstraction that preserves all properties expressible in our verification language. 
Throughout the section, we fix a
DDSA $\BB = \langle B, b_0, \AA, T, F, V, \alpha_0, guard\rangle$ and a finite constraint set $\CC$.
We first consider paths in $\BB$, called \emph{symbolic runs}:

\begin{definition}A \emph{symbolic run} $\sigma$ is a transition sequence
$\smash{b_0 \goto{a_1} b_1 \goto{a_2} \dots \goto{a_n} b_n}$ 
where $b_i\inn B$ and $a_i \inn \AA$;
it \emph{abstracts} any run of the form
$\smash{(b_0, \alpha_0) 
\goto{a_1} (b_1, \alpha_1)
\goto{a_2}}$
$\smash{\dots \goto{a_n} (b_n, \alpha_n)}$
i.e., a run with the same state and action sequence.
The prefix of $\sigma$ of $i$ steps is denoted $\sigma|_i$.
\end{definition}

\noindent
For instance, for the DDSA $\BB_1$ in \figref{examples} the sequence
$\m 1 \goto{\m a_1} \m 2 \goto{\m a_2} \m 1$ is a symbolic run. 
In this section, we aim to construct an abstract representation of the reachable configurations of $\BB$, where we capture a set of configurations
by a pair $(b,\varphi)$ of a system state $b \inn B$ and a formula $\varphi$
with free variables $V$ that describes the current state of the data. 
Our aim is to find a \emph{finite} set of such pairs that covers all reachable 
configurations while being precise enough to decide our verification task.
%
To that end, we next define the $\update$ function as a uniform way to express how the current state, captured by a formula $\phi$, changes by executing an action.

First, we define the \emph{transition formula} $\trans{a}$ of action $a$
as $\trans{a}(\vec V^r, \vec V^w)\:{=}\:
\guard(a) \wedge \bigwedge_{v\not\in write(a)} v^{w}\,{=}\,v^{r}$. 
Intuitively, this formula states the conditions on variables \emph{before and af\-ter} executing $a$: $\guard(a)$ must be true and the values of all variables that are not written are propagated by inertia. 
Note that $\trans{a}$ has free variables $\vec V^r$ and 
$\vec V^w$; for variable vectors $\vec X$ and $\vec Y$ of the same length,
let $\trans{a}(\vec X, \vec Y)$ be the formula obtained from $\trans a$ 
by replacing $\vec V^r$ by $\vec X$ and $\vec V^w$ by $\vec Y$.

\begin{definition}
\label{def:update}
For a formula $\phi$ with free variables $V$ and an action $a$, let 
$\update(\phi, a) = \exists \vec U. \phi(\vec U) \wedge \Delta_a(\vec U, \vec V)$,
where $\vec U$ is a variable vector of the same length as $\vec V$ such that
$U$ is disjoint from $V$ and variables in $\phi$, to avoid variable capture.
\end{definition}

For instance, for action $\m a_1$ in DDSA $\BB_1$ of \figref{examples},
$\Delta_{\m a_1} = (x^w\,{>}\,y^r) \wedge (y^w\,{=}\,y^r)$; and for 
$\phi = (x\,{>}\,0) \wedge(y\,{>}\,x)$ we get
$\update(\phi, \m a_1) = \exists x'\,y'. (x'\,{>}\,0) \wedge(y'\,{>}\,x') \wedge
(x\,{>}\,y') \wedge (y\,{=}\,y')$. Using quantifier elimination, we get an
equivalent, quantifier-free formula, for instance
$(y\,{>}\,0) \wedge(x\,{>}\,y)$.

A key notion for our approach are \emph{history constraints}:
formulas that sum up constraints collected along symbolic runs,
possibly in combination with additional constraints that are needed for verification
(i.e., constraints that occur in the property $\psi$ to be checked).
To express the latter, we consider \emph{verification constraint sequences} 
$\cseq C$ over constraint set $\CC$,
i.e., sequences
$\cseq C = \langle C_0,\dots, C_n\rangle$  of sets 
$C_i \subseteq \CC$.
A prefix $\langle C_0,\dots, C_m\rangle$ of $\cseq C$ is denoted by 
$\cseq C|_m$.
Moreover, we denote by $C_{\alpha_0} = \{v\,{=}\,\alpha_0(v) \mid v\inn V\}$
the set of \emph{initial constraints}, to capture in a formula the initial assignment.

\begin{definition}\label{def:history constraint}
For a symbolic run $\sigma\colon b_0 \goto{a_1} b_1 \goto{a_2} \dots \goto{a_n} b_n$,
and verification constraint sequence $\cseq C= \langle C_0,\dots, C_n\rangle$,
the \emph{history constraint} $\hist(\sigma, \cseq C)$ 
is inductively defined by setting
$\hist(\sigma, \cseq C) = \bigwedge (C_{\alpha_0} \cup C_0)$ if $n\,{=}\,0$, and 
$\hist(\sigma, \cseq C) = \update(\hist(\sigma|_{n-1}, \cseq C|_{n-1}), a_{n}) \wedge C_n$ if $n>0$.
\end{definition}

\noindent
Informally, the history constraint of a symbolic run is a formula that captures all variable constraints that must hold in the last state, i.e., it is a \emph{summary} of the symbolic run, taking into account
additional verification constraints $\cseq C$ that will become relevant in \secref{model checking}.
Note that symbolic runs may in fact feature a sequence of actions that is not executable
due to guard conditions. In these cases history constraints are unsatisfiable. 
For simplicity, in what follows we do not rule out these explicitly 
(as it does not affect our results), 
though it is possible and in fact done in our implementation.
We call $\hist(\sigma, \cseq C)$ a history constraint \emph{of} $\BB$ and $\CC$ 
if $\sigma$ is a symbolic run of $\BB$ and $\cseq C$ is a 
constraint sequence over $\CC$.
If no verification constraints are needed,
we write $\hist(\sigma)$ for $\hist(\sigma, \langle \emptyset, \dots, \emptyset\rangle)$.
\begin{example}
\label{exa:history constraints}
For $\BB_1$ in \figref{examples} with domain $\mathbb Q$
and $\alpha_0(x) = \alpha_0(y) = 0$,
let $\sigma_k$ be the (unique) symbolic run of $k$ steps,
e.g. $\sigma_2 \colon \m 1 \goto{\m a_1} \m 2 \goto{\m a_2} \m 1$.
We get the history constraints
\begin{footnotesize}
\begin{align*}
\hist(\sigma_0) &=x\,{=}\,0 \wedge y\,{=}\,0 &(\phi_0)\\
\hist(\sigma_1) &= 
 \exists x_0 y_0.\: x_0\,{=}\,0 \wedge y_0\,{=}\,0 \wedge x\,{>}\,y_0 \wedge y\,{=}\,y_0 \\[-.5ex]
&\equiv x\,{>}\,0 \wedge y\,{=}\,0 & (\phi_1)\\[-.5ex]
\hist(\sigma_2) &= \exists x_1 y_1 x_0 y_0.\: x_0\,{=}\,0 \wedge y_0\,{=}\,0 \wedge x_1\,{>}\,y_0 \wedge{}\\[-.5ex]
&\quad  y_1\,{=}\,y_0 \wedge y\,{>}\,x_1 \wedge x\,{=}\,x_1 \\
&\equiv x\,{>}\,0 \wedge y\,{>}\,x & (\phi_2)
\end{align*}
\end{footnotesize}
where $x_0$, $y_0$, $x_1$, $y_1$ are fresh variables.
The equivalence steps are obtained by simplification
and quantifier elimination. 
In a similar way, we get for
$\sigma_3 \colon \m 1 \goto{\m a_1} \m 2 \goto{\m a_2} \m 1 \goto{\m a_1} \m 2 $
the constraint $\hist(\sigma_3) \equiv (y\,{>}\,0) \wedge (x\,{>}\,y)$, and
for
$\sigma_4 \colon \m 1 \goto{\m a_1} \m 2 \goto{\m a_2} \m 1 \goto{\m a_1} \m 2 \goto{\m a_2} \m 1$ we get $\hist(\sigma_4) \equiv (x\,{>}\,0) \wedge (y\,{>}\,x) $. 
The fact that $\hist(\sigma_2)$ and $\hist(\sigma_4)$ are equivalent reflects that 
$\sigma_2$ and $\sigma_4$ are equivalent in our finite-state abstraction.
\end{example}

\noindent
Next we relate history constraints and
assignments in runs.

\begin{lemma}
\label{lem:abstraction}
For any symbolic run $\sigma$ of length $n$ and $\cseq C= \langle C_0,\dots, C_n\rangle$,
$\smash[t]{\hist(\sigma,\cseq C)}$
is satisfied by assignment $\alpha$ iff there is a run
$\smash{(b_0, \alpha_0) \goto{a_1} \dots 
\goto{a_n} (b_n, \alpha_n)}$
that is abstracted by $\sigma$ such that $\alpha=\alpha_n$ and
$\alpha_i \models C_i$ for all $i$, $0\leq i \leq n$.
\end{lemma}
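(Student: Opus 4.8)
The plan is to prove both directions by induction on the length $n$ of the symbolic run, mirroring exactly the recursive definition of $\hist(\sigma,\cseq C)$ in \defref{history constraint}. The base case $n=0$ is immediate: $\hist(\sigma,\cseq C) = \bigwedge(C_{\alpha_0} \cup C_0)$, and an assignment $\alpha$ satisfies this iff $\alpha(v) = \alpha_0(v)$ for all $v\inn V$ (i.e.\ $\alpha = \alpha_0$) and $\alpha \models C_0$, which is exactly the statement that the length-$0$ run $(b_0,\alpha_0)$ is abstracted by $\sigma$ with $\alpha = \alpha_0$ and $\alpha_0 \models C_0$.

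For the inductive step, write $\sigma = \sigma|_{n-1} \goto{a_n} b_n$ and $\cseq C = \cseq C|_{n-1}$ extended by $C_n$, so that $\hist(\sigma,\cseq C) = \update(\hist(\sigma|_{n-1},\cseq C|_{n-1}), a_n) \wedge C_n$. Unfolding \defref{update}, $\update(\psi, a_n) = \exists \vec U.\, \psi(\vec U) \wedge \trans{a_n}(\vec U, \vec V)$. So $\alpha \models \hist(\sigma,\cseq C)$ holds iff $\alpha \models C_n$ and there is an assignment $\alpha'$ (the witness for $\vec U$) with $\alpha' \models \hist(\sigma|_{n-1},\cseq C|_{n-1})$ and such that the pair $(\alpha',\alpha)$, read as a guard assignment $\beta$ with $\beta(v^r) = \alpha'(v)$ and $\beta(v^w) = \alpha(v)$, satisfies $\trans{a_n}$. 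By the induction hypothesis, $\alpha' \models \hist(\sigma|_{n-1},\cseq C|_{n-1})$ is equivalent to the existence of a run $(b_0,\alpha_0)\goto{a_1}\cdots\goto{a_{n-1}}(b_{n-1},\alpha_{n-1})$ abstracted by $\sigma|_{n-1}$ with $\alpha_{n-1} = \alpha'$ and $\alpha_i \models C_i$ for $0 \leq i \leq n-1$. It remains to check that $\beta \models \trans{a_n}$ is equivalent to $(b_{n-1},\alpha')\goto{a_n}(b_n,\alpha)$ being a legal step of $\BB$. This is precisely where the definition of $\trans{a_n}$ does its work: $\trans{a_n} = \guard(a_n) \wedge \bigwedge_{v\notin write(a_n)} v^w = v^r$ captures both that $\beta \models \guard(a_n)$ and that unwritten variables are propagated by inertia, which is exactly the condition in the definition of ``admits a step''. (A small point to note: $\update$ introduces a fresh vector $\vec U$ precisely to avoid variable capture, so the quantifier witness $\alpha'$ is unambiguously the previous-state assignment.) Composing the run of length $n-1$ with this step gives the desired run of length $n$, and conversely any such run decomposes this way.

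I expect no serious obstacle here; the statement is essentially a soundness-and-completeness bookkeeping lemma where the two inductive hypotheses and the two definitional unfoldings (of $\update$/$\trans{a_n}$ and of ``admits a step'') are designed to match. The one place demanding a little care is the handling of free and bound variables under quantifier elimination and the renaming in \defref{update}: one must be careful that $\hist(\sigma|_{n-1},\cseq C|_{n-1})$ genuinely has free variables exactly $V$ (so that substituting $\vec U$ is meaningful), which follows by a side induction showing every history constraint has free variables among $V$, using that $\update$ existentially quantifies away $\vec U$ and $C_n \subseteq \CC$ has variables in $V$. With that invariant in hand, the equivalences above go through and both directions of the ``iff'' follow.
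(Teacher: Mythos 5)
Your proposal is correct and follows essentially the same route as the paper: induction on the length of the symbolic run, unfolding $\update$ and $\trans{a_n}$ in the inductive step and matching the existential witness for $\vec U$ with the penultimate assignment of a run (the paper merely writes the two directions of the biconditional as two separate inductions rather than one chain of equivalences). Your side remark about the free variables of history constraints being exactly $V$ is a reasonable extra precaution that the paper leaves implicit.
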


\noindent
This shows that history constraints faithfully summarize accumulated constraints 
in symbolic runs, and their satisfying assignments correspond to the results
of actual runs.
Both directions 
are proven by straightforward induction proofs.
For instance, \lemref{abstraction} states that since the assignment $\alpha(x)=9$,
$\alpha(y) = 7$ satisfies $\hist(\sigma_3)$ in \exaref{history constraints},
there is a run abstracted by $\sigma_3$ ending with this assignment. This is true, e.g.,
for
$(\m 1, \domino{x}{0}{y}{0}) \goto{\m{a}_1}
(\m 2, \domino{x}{1}{y}{0}) \goto{\m{a}_2}
(\m 1, \domino{x}{1}{y}{7})\goto{\m{a}_1}
(\m 2, \domino{x}{9}{y}{7})$.

Our \property property will express that all 
(infinitely many) symbolic runs can be faithfully described by a
\emph{finite} set of states $(b,\varphi)$
of a system state $b\in B$ and a formula $\varphi$ that summarizes accumulated constraints.
To that end, we first define a \emph{history set} as a 
set of such states that contains a representative for every history constraint:

\begin{definition}
\label{def:history set}
A \emph{history set} $\Phi$ for $\BB,\CC$ is a 
set of pairs $(b,\phi)$ of $b\inn B$ and a formula $\phi$ such that
for every history constraint $\hist(\sigma,\cseq C)$ of $\BB,\CC$
where $\sigma$ has final state $b$,
there is a $(b, \phi) \inn \Phi$ with $\hist(\sigma,\cseq C)\,{\equiv}\, \phi$ 
and $\Phi$ contains no other pairs.
\end{definition}

\noindent
The next result turns out to be convenient in the sequel to characterize history sets:

\begin{lemma}
\label{lem:history set}
$\Phi$ is a history set iff
(1) for all $C \subseteq \CC$, there is some $(b_0,\phi_0) \in \Phi$ 
such that $\phi_0 \equiv \bigwedge (C_{\alpha_0} \cup C)$, and
(2) for all $(b,\phi)\in\Phi$, $b \goto{a} b'$, and $C \subseteq \CC$,
there is some $(b',\phi')\in \Phi$ such that $\phi' \equiv \update(\phi,a) \wedge C$.
\end{lemma}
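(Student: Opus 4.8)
The statement is a biconditional, and in both directions the engine will be the inductive definition of the history constraint (\defref{history constraint}). The plan is to prove ``only if'' by exhibiting, for each pair required by (1) and (2), a concrete symbolic run together with a verification constraint sequence whose history constraint is exactly the formula in question, and to prove ``if'' by induction on the length of a symbolic run.

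For the ``only if'' direction I would assume $\Phi$ is a history set. For condition (1), I would fix $C\subseteq\CC$ and take the length-$0$ symbolic run $\sigma$ consisting of the single state $b_0$, together with the one-element verification constraint sequence $\cseq C = \langle C\rangle$; by \defref{history constraint} we then have $\hist(\sigma,\cseq C) = \bigwedge(C_{\alpha_0}\cup C)$ and $\sigma$ has final state $b_0$, so \defref{history set} directly supplies $(b_0,\phi_0)\in\Phi$ with $\phi_0\equiv\bigwedge(C_{\alpha_0}\cup C)$. For condition (2), I would take $(b,\phi)\in\Phi$; since $\Phi$ is a history set and (by \defref{history set}) contains only representatives of history constraints, there is a symbolic run $\sigma$ ending in $b$ and a verification constraint sequence $\cseq C$ with $\phi\equiv\hist(\sigma,\cseq C)$. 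Given a transition $b\goto{a}b'$ and a set $C\subseteq\CC$, I would extend $\sigma$ by that transition to a symbolic run $\sigma'$ ending in $b'$ and extend $\cseq C$ by $C$ to $\cseq C'$; then $\hist(\sigma',\cseq C') = \update(\hist(\sigma,\cseq C),a)\wedge C$. Two small observations close this: first, $\update(\cdot,a)$ respects logical equivalence (immediate from $\update(\psi,a) = \exists\vec U.\,\psi(\vec U)\wedge\trans{a}(\vec U,\vec V)$, since equivalent formulas remain equivalent under renaming of free variables and under $\exists$), so $\hist(\sigma',\cseq C')\equiv\update(\phi,a)\wedge C$; second, \defref{history set} applied to $\sigma'$ yields $(b',\phi')\in\Phi$ with $\phi'\equiv\hist(\sigma',\cseq C')\equiv\update(\phi,a)\wedge C$.

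For the ``if'' direction I would assume (1) and (2) and prove by induction on $n$ that every history constraint $\hist(\sigma,\cseq C)$ of $\BB,\CC$ with $\sigma$ of length $n$ and final state $b$ admits some $(b,\phi)\in\Phi$ with $\phi\equiv\hist(\sigma,\cseq C)$. For $n=0$, $\sigma$ is $b_0$ and $\cseq C=\langle C_0\rangle$, so $\hist(\sigma,\cseq C)=\bigwedge(C_{\alpha_0}\cup C_0)$ and (1) gives the claim. For $n>0$, write $\sigma|_{n-1}$ for the length-$(n{-}1)$ prefix, so that $\sigma$ ends with the transition $b_{n-1}\goto{a_n}b_n$ and $\cseq C=\langle C_0,\dots,C_n\rangle$; by the induction hypothesis there is $(b_{n-1},\phi)\in\Phi$ with $\phi\equiv\hist(\sigma|_{n-1},\cseq C|_{n-1})$, and applying (2) to this pair, the transition $b_{n-1}\goto{a_n}b_n$ and the set $C_n$ produces $(b_n,\phi')\in\Phi$ with $\phi'\equiv\update(\phi,a_n)\wedge C_n\equiv\update(\hist(\sigma|_{n-1},\cseq C|_{n-1}),a_n)\wedge C_n = \hist(\sigma,\cseq C)$, again using that $\update(\cdot,a_n)$ preserves $\equiv$.

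The genuinely routine parts are the bookkeeping with prefixes of symbolic runs and constraint sequences and the (trivial) fact that $\update$ and conjoining a constraint set preserve logical equivalence. The point I expect to need the most care is the ``$\Phi$ contains no other pairs'' clause of \defref{history set}: conditions (1)--(2) are pure closure conditions and do not by themselves forbid $\Phi$ from containing a pair $(b,\phi)$ that is not $\equiv$-equal to any history constraint reaching $b$. In the ``only if'' direction this causes no trouble, since it is part of the hypothesis and is precisely what licenses choosing a witnessing $\sigma,\cseq C$ for a given element of $\Phi$ in the proof of (2). In the ``if'' direction I would read the conclusion as ``$\Phi$ covers all history constraints'' and, to recover \defref{history set} verbatim, additionally take $\Phi$ to be the least set satisfying (1)--(2) (equivalently, the closure under the operation of (2) of the pairs obtained from (1)); in that $\Phi$ every element is by construction exactly some $\hist(\sigma,\cseq C)$, so no spurious pairs arise. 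I would state this restriction explicitly rather than leave it implicit.
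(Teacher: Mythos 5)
Your proof is correct and follows essentially the same route as the paper's: the empty run plus run-extension argument for the ``only if'' direction, and induction on the length of the symbolic run for the ``if'' direction, with the same implicit appeal to $\update$ preserving $\equiv$. Your worry about the ``contains no other pairs'' clause is well founded --- the paper's own proof of the ($\Longleftarrow$) direction only establishes that $\Phi$ covers every history constraint and silently ignores that clause (while the ($\Longrightarrow$) direction genuinely relies on it), so your explicit fix of taking the least set closed under (1)--(2) is a point where you are more careful than the paper.
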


\noindent
We will show that
some of the DDSA classes that we consider in this paper admit a \emph{finite} history set---systems with monotonicity constraints and bounded lookback---and this feature is sufficient to decide our verification problem.
For other systems (e.g., gap-constraint systems) it is not possible to find
finite history sets. However, we will prove that the verification problem is still 
decidable if the more liberal property of \emph{finite summary} holds.
Basically, this property expresses that there exists a suitable equivalence relation 
$\equivBC$ such that the \emph{quotient} of a history set with respect $\equivBC$ is finite.
Here, $\equivBC$ is considered \emph{suitable} if it is preserved under steps of $\BB$ and implies equisatisfiability; for practicality we also require decidability.
These requirements are made formal in the following definition.

\begin{definition}
\label{def:finite summary}
A \emph{summary} for $(\BB,\CC)$
is a pair $(\Phi, \equivBC)$ of a history set $\Phi$ for $\BB$, $\CC$
and equivalence relation $\equivBC$ s.t.
\begin{compactenum}[(1)]
\item $\equivBC$ contains $\equiv$ on $\Phi$ and is decidable,
\item for all $(b,\phi),(b,\psi) \in \Phi$ such that  $\phi \equivBC \psi$,
\begin{inparaenum}
\item $\phi$ and $\psi$ are equisatisfiable, and 
\item for all 
transitions $b \goto{a} b'$ and $C \subseteq \CC$,
$[\update(\phi,a) \wedge C] \equivBC [\update(\psi,a) \wedge C]$.
\end{inparaenum}
\end{compactenum}
We say that $(\BB,\CC)$ \emph{has finite summary} if it admits a summary 
$(\Phi, {\equivBC})$ where
$\equivBC$ has finitely many equivalence classes.
\end{definition}

\noindent
Here, $[\cdot]$ is a \emph{representative} function for the given history set:
if for a pair $(b,\psi)$ there is some $(b,\phi) \in \Phi$
with $\psi \equiv \phi$, we can assume that $[\psi]$ is such a formula $\phi$.
A formula equivalent to $\update(\phi,a) \wedge C$ exists in $\Phi$
because of \lemref{history set}.

Intuitively, a DDSA has \property if it admits a finite-state abstraction that is
expressive enough to account for all possible evolutions of $\BB$ and properties in $\CC$.
%
We next show that $(\BB,\CC)$ \hasproperty if it has a finite history set, so one can 
pick $\equiv$ as equivalence relation.

\begin{lemma}
\label{lem:simple finite summary}
If $\BB$ and $\CC$ admit a finite history set $\Phi$
then $(\BB,\CC)$ has finite summary $(\Phi, \equiv)$.
\end{lemma}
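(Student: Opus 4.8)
The plan is to show that the specific pair $(\Phi,\equiv)$ meets every requirement of \defref{finite summary}, taking $\equivBC$ to be plain logical equivalence $\equiv$ (viewed on the set of formulas occurring in $\Phi$). Since $\Phi$ is a history set for $\BB,\CC$ by hypothesis, the only thing left to check is that $\equiv$ is an admissible equivalence relation in the sense of conditions (1) and (2), and that it has finitely many classes. Condition (1) is immediate: $\equiv$ trivially contains $\equiv$ on $\Phi$, and it is decidable because validity of $\phi \leftrightarrow \psi$ for formulas whose atoms are propositions or linear-arithmetic constraints is a decidable SMT problem.

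For condition (2), fix $(b,\phi),(b,\psi)\in\Phi$ with $\phi\equiv\psi$. Part (a), equisatisfiability, holds because logically equivalent formulas have exactly the same models. For part (b), the key point is that $\update(\cdot,a)$ respects logical equivalence: by \defref{update}, $\update(\phi,a)=\exists\vec U.\,\phi(\vec U)\wedge\trans a(\vec U,\vec V)$ is built from $\phi$ by a consistent renaming of free variables, a conjunction with the fixed transition formula $\trans a$, and an existential prefix; from $\phi\equiv\psi$ we get $\phi(\vec U)\equiv\psi(\vec U)$, hence $\update(\phi,a)\equiv\update(\psi,a)$ (the harmless freedom in choosing the fresh vector $\vec U$ only changes the formula up to renaming of bound variables), and conjoining with $\bigwedge C$ preserves this, so $\update(\phi,a)\wedge C\equiv\update(\psi,a)\wedge C$. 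By \lemref{history set} each of these two formulas is logically equivalent to a representative in $\Phi$ attached to the successor state $b'$, so $[\update(\phi,a)\wedge C]$ and $[\update(\psi,a)\wedge C]$ are both formulas in $\Phi$ equivalent to one and the same formula, hence equivalent to each other; thus $[\update(\phi,a)\wedge C]\equiv[\update(\psi,a)\wedge C]$, which is exactly (2b) with $\equivBC=\equiv$.

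Finally, since $\Phi$ is finite it contains only finitely many distinct formulas, so $\equiv$ partitions them into at most $|\Phi|$ equivalence classes; hence $(\Phi,\equiv)$ is a summary for $(\BB,\CC)$ with finitely many classes, i.e.\ $(\BB,\CC)$ has finite summary. I do not expect a real obstacle here: the only two points needing a little care are the invariance of $\update$ under logical equivalence (together with the innocuous choice of the fresh renaming vector) and the correct reading of the representative function $[\cdot]$, which is precisely what makes the choice $\equivBC=\equiv$ work without any consistency assumption on $[\cdot]$.
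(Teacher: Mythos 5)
Your proposal is correct and follows essentially the same route as the paper's proof: condition (1) from decidability of linear arithmetic and finiteness of $\Phi$, (2a) from the fact that logical equivalence implies equisatisfiability, and (2b) from the observation that $\update(\phi,a)\wedge C$ and $\update(\psi,a)\wedge C$ differ only in the equivalent subformulas $\phi(\vec U)$ and $\psi(\vec U)$. Your extra care about the representative function $[\cdot]$ is a harmless refinement the paper leaves implicit.
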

\begin{proof}
\defref{finite summary} (1) follows from decidability of linear arithmetic and finiteness of $\Phi$. For \defref{finite summary} (2), we have that (a)
$\phi\,{\equiv}\,\psi$ implies equisatisfiability, and (b)
we can write 
$\update(\phi,a) \wedge \bigwedge C = \exists \vec U. \phi(\vec U) \wedge \chi$ 
and
$\update(\psi,a) \wedge \bigwedge C = \exists \vec U. \psi(\vec U) \wedge \chi$
for some $\chi$, and these two formulas
are clearly again equivalent as $\phi\,{\equiv}\,\psi$.
\end{proof}

\begin{example}
\label{exa:finite summary}
Continuing \exaref{history constraints}, it can be shown that
$\hist(\sigma_{2i}) \equiv \varphi_2$ and 
$\hist(\sigma_{2i+1}) \equiv \varphi_3$ for all $i > 0$.
Thus the set $\Phi = \{(\mathsf 1, \varphi_0), (\mathsf 2,\varphi_1), (\mathsf 1, \varphi_2), (\mathsf 2, \varphi_3)\}$ is a finite history set, and by \lemref{simple finite summary}
the tuple $(\Phi, \equiv)$ is a finite summary 
for $(\BB_1, \emptyset)$.
It can be visualized in a constraint graph, as done in~\cite{FLM19}:

\centering
\begin{tikzpicture}[node distance=24mm,>=stealth']
\tikzstyle{node}=[rectangle, rounded corners, inner sep=3pt, draw, scale=.8, thick]
\node[node] (1) {$\m 1,\phi_0$};
\node[node, right of=1] (2) {$\m 2, \phi_1$};
\node[node, right of=2] (3) {$\m 1, \phi_2$};
\node[node, right of=3] (4) {$\m 2, \phi_3$};
\draw[edge,->] ($(1) + (-.8,0)$) -- (1);
\draw[edge,->] (1) -- node[above,action]{$\m a_1$} (2);
\draw[edge,->] (2) -- node[above,action]{$\m a_2$} (3);
\draw[edge,->] (3) to[bend left=10] node[above,action]{$\m a_1$} (4);
\draw[edge,->] (4) to[bend left=10] node[below,action]{$\m a_2$} (3);
\end{tikzpicture}
\end{example}

We conclude this section with another example where the history set is not finite
but a finite summary can be found.
\begin{example}
\label{exa:gap order}
Consider the DDSA $\BB_3$, 
and let $\sigma_k$ be the symbolic run of $k$ steps (there is only one).
We have e.g.
$\hist(\sigma_1) \equiv (x\,{-}\,y\,{\geq}\,2) \wedge (y \eqn 0)$, and
$\hist(\sigma_2) \equiv (x\,{\geq}\,2) \wedge (y\,{\geq}\,3)$.
In general, we obtain $\hist(\sigma_{2i}) \equiv (x\,{\geq}\,3i-1) \wedge (y\,{\geq}\,3i)$
and $\hist(\sigma_{2i+1}) \equiv (x\,{-}\,y\,{\geq}\,2) \wedge (y\,{\geq}\,3i)$
for all $i\geq 1$.
Since $\hist(\sigma_i) \not \equiv \hist(\sigma_j)$ for $i\neq j$, the history set 
$\Phi = \{\hist(\sigma_i) \mid i \geq 0\}$ is not finite.
However, in \secref{conditions} (subsection on gap-order constraints) we will show that $\BB_3$ admits a finite summary $(\Phi, \equivGC)$, where $\equivGC$ is the \emph{cutoff equivalence relation} that considers
formulas equivalent if they are syntactically equal after replacing all constants larger
than some bound $K$ by $K$ itself.
\end{example}

\section{Checking the Existence of Witnesses}
\label{sec:model checking}

In order to express the requirements on a run of a DDSA $\BB$ to satisfy an LTL$_f$ formula $\psi$,
we next define a nondeterministic automaton (NFA) $\NFApsi$.
Then we combine $\NFApsi$ with $\BB$ in a kind of product construction to check for the existence of witnesses for $\psi$.

To get the NFA, we perform a similar preprocessing step as in~\cite{LFM20}, and replace first all occurrences of subformulas $\langle a\rangle \psi'$ in $\psi$ by
$\langle \cdot\rangle (a \wedge \psi')$, adding a new proposition symbol for each action.
For a run $\rho$ of length $n$, we thus write $\rho,i \models a$ if 
$0\,{<}\,i\,{<}\,n$
and $\rho, i-1 \models \langle a \rangle \top$.
This modification allows us to consider fewer cases in the constructions and proofs below.

Technically, given $\psi \in \LBC$ we build the NFA $\NFApsi = (Q, \Sigma, \varrho, q_0, Q_F)$, where:
\begin{inparaenum}[\it (i)]
\item the set $Q$ of states is a set of quoted formulas; 
\item $\Sigma\,{=}\, 2^{S}$ is the alphabet, where 
$S = B\,{\cup}\,\AA\,{\cup}\,\CC $;
\item $\varrho \subseteq Q \times \Sigma \times Q$ is the transition relation; 
\item $q_0 \in Q$ is the initial state;
\item $Q_F\subseteq Q$ is the set of final states. 
\end{inparaenum}
Following~\cite{GMM14}, we define $\varrho$ using an auxiliary function $\delta$
and a new proposition $\last$ that marks the last element of the trace.
The input of $\delta$ is a 
(quoted) formula $\psi \in \LBC \cup \{\top,\bot\}$, and its output a set of tuples
$(\inquotes{\psi'},\varsigma)$ where $\psi'$ has the same type as $\psi$ and 
$\varsigma \in 2^{S\cup \{\last, \neg \last\}}$.
For two sets of such tuples $R_1$, $R_2$, and $\odot$ either $\wedge$ or $\vee$, let
$R_1 \odot R_2 = \{ (\inquotes{\psi_1 \odot \psi_2}, \varsigma_1 \cup \varsigma_2) \mid (\inquotes{\psi_1}, \varsigma_1) \inn R_1, (\inquotes{\psi_2}, \varsigma_2) \inn R_2 \}$, where we simplify $\psi_1 \odot \psi_2$ if possible. 
The function $\delta$ is as follows:

\noindent
\begin{tabular}{@{~}l@{~}l}
$\delta(\inquotes{\top})$ &= 
  $\{(\inquotes{\top},\emptyset)\}
  \text{ and }\delta(\inquotes{\bot}) = \{(\inquotes{\bot},\emptyset)\}$\\
$\delta(\inquotes{p})$ &= 
  $\{(\inquotes{\top},\{p\}),(\inquotes{\bot},\emptyset)\} 
  \text{ if $p \in \CC \cup B \cup \mathcal A$}$\\
$\delta(\inquotes{\psi_1 \vee \psi_2})$ &= 
  $\delta(\inquotes{\psi_1}) \vee \delta(\inquotes{\psi_2})$ \\
$\delta(\inquotes{\psi_1 \wedge \psi_2})$ &= 
  $\delta(\inquotes{\psi_1}) \wedge \delta(\inquotes{\psi_2})$ \\
$\delta(\inquotes{\langle\cdot\rangle \psi})$ &= 
  $\{(\inquotes{\psi},\{\neg \last\}), (\inquotes{\bot}, \{\last\})\}$ \\
$\delta(\inquotes{\Diamond \psi})$ &=
  $\delta(\inquotes{\psi}) \vee \delta(\inquotes{\langle\cdot\rangle\Diamond \psi})$ \\
$\delta(\inquotes{\Box \psi})$ &=
  $\delta(\inquotes{\psi}) \wedge (\delta(\inquotes{\langle\cdot\rangle\Box \psi}) \vee \delta_\lambda)$ \\
$\delta(\inquotes{\psi_1 \until \psi_2})$ &=
  $\delta(\inquotes{\psi_2}) \vee (\delta(\inquotes{\psi_1})
  \wedge \delta(\inquotes{\langle\cdot\rangle(\psi_1 \until \psi_2)}))$
\end{tabular}

\noindent
where $\delta_\lambda$ abbreviates 
$\smash{\{(\inquotes{\top},\{ \lambda\}),(\inquotes{\bot},\{\neg\lambda\})\}}$.
While the symbol $\last$ is needed for the construction, we can omit it
from the NFA,
and define $\NFApsi$ as follows:
\begin{definition}
\label{def:NFA}
Given a formula $\psi \inn \LBC$, let the NFA 
$\NFApsi\,{=}\,(Q, \Sigma, \varrho, q_0, \{q_f, q_e\})$
be given by
$q_0\,{=}\,\inquotes{\psi}$,
$q_f\,{=}\,\inquotes{\top}$ and $q_e$ is an additional final state,
and
$Q$, $\varrho$ are the smallest sets such that $q_0, q_f, q_e \in Q$ and whenever 
$q\in Q\setminus\{q_e\}$ and $(q', \varsigma)\in \delta(q)$
such that $\{\last,\neg \last\} \not\subseteq \varsigma$
then $q'\in Q$ and 
\begin{compactenum}[(i)]
\item if $\last \not\in \varsigma$ then
$(q, \varsigma \setminus\{\last, \neg \last\}, q') \in \varrho$, and
\item
if $\last \in \varsigma$ and $q' = \inquotes{\top}$ then
$(q, \varsigma \setminus\{\last, \neg \last\}, q_e) \in \varrho$.
\end{compactenum}
\end{definition}
This construction is similar to the one by \cite{GMM14}, but
reflects that our verification language does not include negation.
In fact it can be seen as a relaxation, in that 
$\delta(\inquotes{p})$ contains $(\bot,\emptyset)$ 
rather than $(\bot,\{\neg p\})$, for any atom $p$.
In this way, $\NFApsi$ cannot explicitly require atoms to be false;
instead, the transition labels intuitively state \emph{minimal requirements}
for $\psi$ to hold.

\begin{example}
\label{exa:simplenfa}
Let $\psi = \Diamond c$ for a constraint $c = (y > 5)$.
By the definition of $\delta$, we have
$\delta(\inquotes{\Diamond c}) = 
\delta(\inquotes{c}) \vee \delta(\inquotes{\langle\cdot\rangle \Diamond c}) = \{(\inquotes{\top},\{c\}),(\inquotes{\bot},\emptyset)\} \vee \{(\inquotes{\Diamond c},\{\neg \last\}), (\inquotes{\bot}, \{\last\})\} = 
\{(\inquotes{\top},\{c, \neg  \last\}), (\inquotes{\top},\{c,\last\}),
(\inquotes{\Diamond c},\{\neg \last\}), (\inquotes{\bot}, \{\last\}) \}$, so
the automaton $\NFApsi$ is as follows:\\[.5ex]
\tikz[node distance=28mm,>=stealth']{
\tikzstyle{formula}=[scale=.75, rectangle, rounded corners, inner sep=4pt, draw]
\node[formula] (psi) {$\inquotes{\psi}$};
\node[formula, left of=psi] (bot) {$\inquotes{\bot}$};
\node[formula, right of=psi, double] (top) {$\inquotes{\top}$};
\node at (-4,0) {}; 
\draw[->] (psi) to 
 node[scale=.6, above, yshift=-1pt]{$\{c\}$}
 (top);
\draw[->] (psi) to node[scale=.6, above, yshift=-1pt]{$\emptyset$} (bot);
\draw[->] (psi) to (bot);
\draw[->] (top) to[loop right, looseness=6] node[scale=.6, right]{$\emptyset$} (top);
\draw[->] (bot) to[loop left, looseness=6] node[scale=.6, left]{$\emptyset$} (bot);
\draw[->] (psi) to[loop, out=190,in=210, looseness=6] node[scale=.5, left]{$\emptyset$}(psi);
\draw[->] ($(psi.160) + (-.2,.2)$) to (psi);
}\\
Due to our relaxation, the self-loop on $\inquotes{\psi}$ is
labeled $\emptyset$ rather than $\{ \neg c \}$, but nonetheless
$\NFApsi$ works as expected:
if $c$ is true an accepting path exists, and
if $c$ is false no further possibilities arise.
\end{example}

To express correctness of $\NFApsi$,
we need some notions of consistency to express that a word $w$
and a symbolic run are not contradictory with respect to actions and states.
First,  we call a symbol $\varsigma\,{\in}\,\Sigma$ \emph{consistent} with transition 
$b \goto{a} b'$ if $\varsigma$ is disjoint from  $\AA\,{\setminus}\,\{a\}$ and 
$B\,{\setminus}\,\{b'\}$, namely if it contains no action symbol other than $a$ nor state symbol other than $b'$. 
Let $\constr(\varsigma) = \varsigma \cap \CC$. 
\noindent
\begin{definition}
A word $w=\varsigma_0 \varsigma_1 \cdots \varsigma_n\in \Sigma^*$ is \emph{consistent} with
\begin{inparaenum}
\item[(a)] 
a symbolic run
$\sigma \colon b_0 \goto{a_1} b_2 \goto{a_2} \dots \goto{a_n} b_n$
if
$\varsigma_0$ is disjoint from $B \setminus \{b_0\}$, and
$\varsigma_i$ is consistent with $b_{i-1} \goto{a_{i}} b_{i}$ for $0\,{<}\,i\,{\leq}\,n$.
\item[(b)]
a run $\rho$ if it is consistent with
the abstraction $\sigma$ of $\rho$
and $\alpha_i$ satisfies $\bigwedge \constr(\varsigma_i)$,
where $\rho_i = (b_i,\alpha_i)$.
\end{inparaenum}
\end{definition}

\noindent
These notions allow us to express correctness of $\NFApsi$:

\begin{lemma}
\label{lem:NFA acceptance}
$\NFApsi$ accepts a word that is consistent with a run $\rho$ iff 
$\rho \models \psi$.
\end{lemma}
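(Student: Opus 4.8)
The plan is to prove both directions by induction on the structure of the formula $\psi$, following closely the standard correctness argument for the $\delta$-based construction of LTL$_f$ automata (as in \cite{GMM14}), but adapted to our relaxed setting where $\delta(\inquotes{p})$ produces $(\inquotes{\bot},\emptyset)$ rather than $(\inquotes{\bot},\{\neg p\})$. The key technical device is a strengthened induction hypothesis that talks about runs $\rho$ and their \emph{suffixes}: for every subformula $\psi'$ occurring in the expansion, every run $\rho$ of length $n$, and every index $i \le n$, a word $w = \varsigma_i \cdots \varsigma_n$ consistent with the corresponding suffix of $\rho$ is accepted by $\NFApsi$ \emph{started in state} $\inquotes{\psi'}$ if and only if $\rho, i \models \psi'$. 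Here one must be careful about the bookkeeping introduced by the preprocessing step that replaces $\langle a\rangle \psi'$ by $\langle\cdot\rangle(a\wedge\psi')$ and by the role of the marker $\last$, which flags the final position of the trace and is only retained implicitly in \defref{NFA} through the split between cases (i) and (ii).

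First I would establish the base cases: for $\psi' = c \in \CC$, $\psi' = b \in B$, or $\psi' = a \in \AA$, the function $\delta(\inquotes{p})$ offers the transition $(\inquotes{\top},\{p\})$ and the transition $(\inquotes{\bot},\emptyset)$. If $\rho,i \models p$ then by the consistency requirements in the definition of ``word consistent with a run'' — namely $\alpha_i \models \bigwedge\constr(\varsigma_i)$ for constraints, $\varsigma_i$ disjoint from the complement of the current state for $p \in B$, and the action-consistency condition for $p \in \AA$ — the symbol $\varsigma_i$ containing $p$ is admissible and leads to $q_f = \inquotes{\top}$ (or to $q_e$ if $i = n$), which is accepting; the remaining symbols $\varsigma_{i+1}\cdots$ loop on $\inquotes{\top}$ with $\emptyset$. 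Conversely, acceptance from $\inquotes{p}$ forces taking the $(\inquotes{\top},\{p\})$ edge at step $i$ (the $\inquotes{\bot}$ branch is a trap), so $p \in \varsigma_i$, and consistency of $w$ with $\rho$ then yields $\rho,i\models p$. The relaxation causes no harm precisely because the only way to \emph{reach} an accepting state is through the $\{p\}$-edge — the $\emptyset$-edge to $\inquotes{\bot}$ never helps — which is the intuition already noted after \exaref{simplenfa}.

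Next I would do the inductive steps. The Boolean cases $\psi_1\wedge\psi_2$ and $\psi_1\vee\psi_2$ follow from the definition of the product $R_1\odot R_2$ on sets of $\delta$-tuples together with the induction hypothesis, using that $\varsigma_1\cup\varsigma_2$ is consistent with the same transition iff both $\varsigma_1$ and $\varsigma_2$ are (consistency is closed under union here because the complement-disjointness conditions are). The modal case $\langle\cdot\rangle\psi'$ is where the $\last$/$\neg\last$ split matters: $\delta(\inquotes{\langle\cdot\rangle\psi'}) = \{(\inquotes{\psi'},\{\neg\last\}),(\inquotes{\bot},\{\last\})\}$, so by \defref{NFA}(i) there is a $\varsigma$-edge (with $\varsigma$ stripped of $\last$-markers) from $\inquotes{\langle\cdot\rangle\psi'}$ to $\inquotes{\psi'}$ usable only when not at the last position, matching the semantics ``$i<n$ and $\rho,i{+}1\models\psi'$''; the $(\inquotes{\bot},\{\last\})$ branch is discarded since it leads to no accepting state. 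The temporal operators $\Diamond$, $\Box$, $\until$ are handled by their fixpoint unfoldings in the definition of $\delta$, reducing to the already-treated cases plus one more ``$\langle\cdot\rangle$'' step, so a secondary induction on the trace length $n - i$ closes them; for $\Box$ the disjunct $\delta_\lambda = \{(\inquotes{\top},\{\lambda\}),(\inquotes{\bot},\{\neg\lambda\})\}$ supplies the ``$i = n$'' acceptance clause, again routed through \defref{NFA}(ii).

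I expect the main obstacle to be the careful handling of the $\last$ marker and the interaction of the two induction parameters — formula structure and remaining trace length. Concretely, one must verify that a $\delta$-tuple whose second component contains \emph{both} $\last$ and $\neg\last$ is correctly excluded (the side condition $\{\last,\neg\last\}\not\subseteq\varsigma$ in \defref{NFA}), and that the unfolding of nested temporal operators terminates along a finite trace so that the secondary induction is well-founded; an inconsistent expansion would otherwise loop. A further subtlety is that the preprocessing substituting $\langle a\rangle\psi'$ by $\langle\cdot\rangle(a\wedge\psi')$ must be shown semantics-preserving with respect to the convention $\rho,i\models a \iff 0<i<n \text{ and } \rho,i{-}1\models\langle a\rangle\top$, so that the lemma as stated about $\LBC$ formulas transfers to the preprocessed ones the automaton actually consumes; this is routine but needs to be stated explicitly before the induction begins.
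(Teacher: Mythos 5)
Your plan follows essentially the same route as the paper: the paper factors the argument into a one-step lemma proved by structural induction on the formula (relating $\rho,i\models\varphi$ to the existence of a $\lambda$-consistent tuple in $\delta(\inquotes{\varphi})$), a lifting to whole words by induction on the remaining trace length $n-i$, and an auxiliary lemma about the $\last$/$\neg\last$ markers to pass from $\delta^*$-acceptance to acceptance by $\NFApsi$ after the markers are stripped and the $\last$-edges redirected to $q_e$ --- exactly the two nested inductions and the bookkeeping you describe. The only ingredient you do not explicitly flag is a totality lemma for $\delta$ (for every step of $\rho$ and every formula there is some tuple whose label is $\lambda$-consistent with that step), which the paper needs in the disjunctive cases ($\vee$, $\Diamond$, $\until$) to supply a consistent label for the disjunct that is not being satisfied; this is a routine fact you would encounter and prove when writing out those cases.
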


\paragraph{Product construction.}
To check the existence of a witness for $\psi$ in \mydds $\BB$,
we combine $\NFApsi$ with $\BB$ to a cross-product automaton $\smash{\NN^\psi_\BB}$,
exploiting the notions from \secref{finite summary}.

First, for technical reasons we add a dummy initial state $b_0'$ to $\BB$ and update its states to $B' = B\cup\{b_0'\}$ and its transitions to $T' = T \cup \{(b_0', a_0, b_0)\}$ for a fresh action $a_0$ with $\guard(a_0) = \top$. 
We call the resulting \mydds  $\BB'$.

\begin{definition}
\label{def:product construction}
Let  $\BB' = \langle B', b_0', \AA, T', F, V, \alpha_0, \guard\rangle$ as above, 
$\CC$ a constraint set, and $(\Phi, \equivBC)$ a summary for $(\BB,\CC)$.
For a formula $\psi \in \LBC$ and $\NFApsi$ as above,
the \emph{product automaton} 
$\smash{\NN^\psi_\BB=(P, \Sigma, R, p_0, P_F)}$ is as follows:
\begin{compactitem}[$\bullet$]
\item
States in $P$ are triples $(b, q, \varphi)$ s.t. $b\,{\in}\,B'$,
$q\,{\in}\, Q$, $\varphi\in \Phi$;
\item
The initial state is $p_0=(b_0', q_0, \bigwedge C_{\alpha_0})$; 
\item
There is a transition  $(b,q, \varphi) \goto{a} (b',q', \varphi')$ in $R$ iff 
$b \goto{a} b'$ in $T'$, 
there is some $\varsigma \in \Sigma$ s.t. $q \goto{\varsigma} q'$ in 
$\NFApsi$, and
\begin{compactitem}[$-$]
\item 
formula $\chi = \update(\varphi, a) \wedge constr(\varsigma)$ 
is satisfiable, and $\varphi' \equivBC \chi$;
in this way, $\chi$ captures all current constraints that are either inherited from $\BB$ or stem from the transition of $\NFApsi$, given by $constr(\varsigma)$,
\item
$\varsigma$ is consistent with $b \goto{a} b'$, and
\item
$(b',q', \varphi') \in P_F$ iff $b'\in F$, $q'\in Q_F$.
\end{compactitem}
\end{compactitem}
\end{definition}

\noindent
Note that $R$ is well-defined in the sense that for every such formula
$\chi$ above, some $\phi'$ with $\varphi' \equivBC \chi$ and $(b',\phi') \in \Phi$ exists, because $\Phi$
is a history set (cf. \lemref{history set}).
Thus, if $(\Phi, \sim)$ is a \emph{finite} summary, the construction
in \defref{product construction} terminates.
The next result states properties of the 
product construction, the induction proofs of both directions are straightforward.
\begin{lemma}
\label{lem:PC sat}
Let $\sigma$ be a symbolic run of $\BB$ and $w \in \Sigma^*$.
There is a path $\pi$ with $\sigma = \sigma(\pi)$ to a node $(b,q,\phi)$
in $\smash{\NN_\BB^\psi}$ such that $\varphi \equivBC \hist(\sigma,w)$ iff 
$w$ is accepted by $\NN_\psi$, consistent with $\sigma$,
and $\hist(\sigma,w)$ is satisfiable.
\end{lemma}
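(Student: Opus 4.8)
\textbf{Proof plan for Lemma~\ref{lem:PC sat}.}
The plan is to prove the two directions by induction on the length $n$ of the symbolic run $\sigma$, carrying along simultaneously all three conjuncts on the right-hand side (acceptance by $\NFApsi$, consistency with $\sigma$, and satisfiability of $\hist(\sigma,w)$), since they interact: the satisfiability condition is exactly what the product construction checks at each step via the requirement that $\chi = \update(\varphi,a)\wedge \constr(\varsigma)$ be satisfiable, and consistency is the side condition on the $\Sigma$-labels. First I would fix notation: a path $\pi$ in $\NN_\BB^\psi$ is a sequence of transitions from $p_0$, it induces a symbolic run $\sigma(\pi)$ by projecting to the $\BB'$-component (ignoring the dummy initial step $b_0'\goto{a_0}b_0$), an $\NFApsi$-run by projecting to the $q$-component, and a word $w$ by collecting the labels $\varsigma_i$. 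The claim then relates, for a fixed $\sigma$ and $w$, the existence of such a $\pi$ ending in a node whose formula component is $\equivBC$-equivalent to $\hist(\sigma,w)$, with the three right-hand conditions.

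For the base case $n=0$: the only symbolic run is $b_0$ (reached via $b_0'\goto{a_0}b_0$), $w=\varsigma_0$, $\hist(\sigma,w)=\bigwedge(C_{\alpha_0}\cup \constr(\varsigma_0))$, and $p_0=(b_0',q_0,\bigwedge C_{\alpha_0})$. Here I unwind one step of the product: the successor formula is $\update(\bigwedge C_{\alpha_0},a_0)\wedge \constr(\varsigma_0)$, which by $\guard(a_0)=\top$ and inertia is $\equiv \bigwedge(C_{\alpha_0}\cup\constr(\varsigma_0))=\hist(\sigma,w)$; the transition exists in $R$ iff this is satisfiable and $q_0\goto{\varsigma_0}q$ in $\NFApsi$ with $\varsigma_0$ consistent with the initial state, which matches the definition of $w$ being consistent with $\sigma$ at position $0$ and $\NFApsi$ taking a first step. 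For the inductive step, write $\sigma = \sigma|_{n-1}\goto{a_n}b_n$ and $w = w'\varsigma_n$. In the forward direction, given $\pi$ I strip the last transition to get $\pi'$, apply the induction hypothesis to $\pi'$, $\sigma|_{n-1}$, $w'$, and then read off from \defref{product construction} that the last transition forces $q_{n-1}\goto{\varsigma_n}q_n$ in $\NFApsi$, $\varsigma_n$ consistent with $b_{n-1}\goto{a_n}b_n$, and $\update(\varphi_{n-1},a_n)\wedge\constr(\varsigma_n)$ satisfiable with $\varphi_n\equivBC$ this formula; combining $\varphi_{n-1}\equivBC\hist(\sigma|_{n-1},w')$ with \defref{finite summary}(2b) gives $\varphi_n\equivBC\update(\hist(\sigma|_{n-1},w'),a_n)\wedge\constr(\varsigma_n)=\hist(\sigma,w)$, and equisatisfiability (\defref{finite summary}(2a)) transfers satisfiability. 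Acceptance and consistency for the full word follow by appending the last letter. The converse direction reverses this: from an accepting, consistent, satisfiable instance I extract the $\NFApsi$-run, use the induction hypothesis to get $\pi'$, and then check that the required transition is actually present in $R$ — this is where I must verify that $\chi=\update(\varphi_{n-1},a_n)\wedge\constr(\varsigma_n)$ is satisfiable (it is equisatisfiable with $\hist(\sigma,w)$, which is satisfiable by hypothesis) and that a matching $\varphi_n\in\Phi$ exists (guaranteed since $\Phi$ is a history set, \lemref{history set}).

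The main obstacle I anticipate is bookkeeping around the $\equivBC$-representatives rather than any deep argument: the product works with $\equivBC$-classes and representative formulas from $\Phi$, whereas $\hist(\sigma,w)$ is the ``true'' accumulated constraint, so at each step I need \defref{finite summary}(2b) to propagate $\equivBC$-equivalence through $\update(\cdot,a)\wedge C$ — and I must be careful that \defref{finite summary}(2b) is stated for $C\subseteq\CC$ while here $C=\constr(\varsigma_n)$ is indeed a subset of $\CC$ (since $\constr(\varsigma)=\varsigma\cap\CC$), so it applies. A secondary subtlety is the interplay between the NFA preprocessing (replacing $\langle a\rangle\psi'$ by $\langle\cdot\rangle(a\wedge\psi')$ and treating actions as propositions) and the consistency notion: I should make sure the ``$\varsigma$ consistent with $b\goto{a}b'$'' condition in \defref{product construction} lines up exactly with clause (a) of the consistency definition so that a path's word is automatically consistent with its symbolic run, which reduces the right-hand ``consistent with $\sigma$'' clause to bookkeeping. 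Finally, note that \lemref{NFA acceptance} is \emph{not} needed here — this lemma is purely about the syntactic correspondence between paths in $\NN_\BB^\psi$ and triples $(\sigma,w,\text{acceptance})$; the semantic content ($\rho\models\psi$) enters only when \lemref{PC sat} is later combined with \lemsref{abstraction}{NFA acceptance}.
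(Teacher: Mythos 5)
Your plan matches the paper's proof essentially step for step: both directions are shown by induction on the length of $\sigma$, with the base case unwinding the dummy transition $b_0'\goto{a_0}b_0$ using $\guard(a_0)=\top$, the inductive step propagating $\equivBC$ through $\update(\cdot,a)\wedge\constr(\varsigma)$ via \defref{finite summary}(2) (noting $\constr(\varsigma)\subseteq\CC$), transferring satisfiability by equisatisfiability, and invoking \lemref{history set} for the existence of a representative in $\Phi$ in the converse direction. The only cosmetic difference is that the paper states the induction invariant in terms of a run of $\NFApsi$ labeled $w$ ending in the $q$-component of the final product node (rather than ``acceptance'' of prefixes), but your remark that acceptance follows by appending the last letter amounts to the same bookkeeping.
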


\noindent
We next state our main result, where $h(\sigma,w)$ denotes 
$h(\sigma,\langle \constr(\varsigma_0), \dots, \constr(\varsigma_n)\rangle)$
for word $w = \varsigma_0 \cdots \varsigma_{n}$. 

\begin{theorem}
\label{thm:model checking}
Let $\psi \in \LBC$.
The language of $\smash{\NN_\BB^\psi}$ is non-empty 
iff there is a run of $\BB$ that is a witness for $\psi$.
\end{theorem}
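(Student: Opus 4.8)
The strategy is to reduce \thmref{model checking} to the already-established \lemref{PC sat}, using \lemref{abstraction} and \lemref{NFA acceptance} to bridge between satisfiable history constraints and actual runs. First I would unfold what non-emptiness of $\smash{\NN_\BB^\psi}$ means: there is an accepting path $\pi$ from $p_0$ to some final state $(b,q,\varphi)\in P_F$, reading a word $w = \varsigma_0 \cdots \varsigma_n$. By construction of $P_F$ in \defref{product construction}, this means $b\in F$ and $q\in Q_F$. Projecting $\pi$ to $\BB$ yields a symbolic run $\sigma = \sigma(\pi)$ whose final state is $b$ — modulo the dummy initial step through $b_0'$ with the trivial action $a_0$, which I would note contributes only $\guard(a_0)=\top$ and hence does not affect history constraints and can be stripped off. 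Projecting $\pi$ to $\NFApsi$ yields an accepting run of $\NFApsi$ on $w$, so $w\in L(\NFApsi)$.

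Next I would apply \lemref{PC sat} in the forward direction: the existence of the path $\pi$ to $(b,q,\varphi)$ tells us that $w$ is accepted by $\NFApsi$, consistent with $\sigma$, and crucially that $\hist(\sigma,w)$ is \emph{satisfiable}. Now invoke \lemref{abstraction}: since $\hist(\sigma, \langle \constr(\varsigma_0),\dots,\constr(\varsigma_n)\rangle)$ is satisfied by some assignment $\alpha$, there is a concrete run $\rho\colon (b_0,\alpha_0)\goto{a_1}\cdots\goto{a_n}(b_n,\alpha_n)$ abstracted by $\sigma$ with $\alpha_i \models \constr(\varsigma_i)$ for all $i$. This run ends in $b_n = b \in F$. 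Moreover, $w$ is consistent with $\sigma$ and each $\alpha_i$ satisfies $\bigwedge\constr(\varsigma_i)$, so by definition $w$ is consistent with $\rho$; combined with $w\in L(\NFApsi)$ and \lemref{NFA acceptance}, we get $\rho\models\psi$. Hence $\rho$ ends in a final state and satisfies $\psi$, i.e., $\rho$ is a witness. For the converse I would run the same chain backwards: given a witness $\rho$ for $\psi$ with abstraction $\sigma$, \lemref{NFA acceptance} gives a word $w$ consistent with $\rho$ (hence with $\sigma$) that $\NFApsi$ accepts; consistency of $w$ with $\rho$ means $\alpha_i\models\bigwedge\constr(\varsigma_i)$, so by \lemref{abstraction} the formula $\hist(\sigma,w)$ is satisfiable; then \lemref{PC sat} yields a path $\pi$ in $\smash{\NN_\BB^\psi}$ along $\sigma$ to a node $(b,q,\varphi)$, and since $\rho$ ends in $F$ and the $\NFApsi$-run is accepting, $(b,q,\varphi)\in P_F$, so the language is non-empty.

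I expect the main obstacle to be purely bookkeeping rather than conceptual: carefully matching the indexing conventions between the NFA word $w$ (whose first symbol $\varsigma_0$ constrains the initial configuration), the verification constraint sequence fed to $\hist$, and the dummy initial transition $b_0' \goto{a_0} b_0$ introduced in \defref{product construction}. In particular I must check that $p_0 = (b_0', q_0, \bigwedge C_{\alpha_0})$ correctly corresponds to $\hist(\sigma|_0,\cdot) = \bigwedge(C_{\alpha_0}\cup C_0)$ once the $a_0$-step is accounted for, and that the $\last$/$\neg\last$ handling in $\delta$ — which only surfaces at the end of the trace and is already absorbed into \lemsref{NFA acceptance}{PC sat} — does not require re-examination here. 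Since both \lemref{PC sat} and \lemref{NFA acceptance} are quoted as available, the proof of \thmref{model checking} is essentially their composition, and I would present it as such in two short paragraphs, one per direction.
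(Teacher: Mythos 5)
Your proposal is correct and follows essentially the same route as the paper's own proof: both directions are obtained by composing \lemref{PC sat} with \lemref{abstraction} and \lemref{NFA acceptance} exactly as you describe, and the bookkeeping concerns you flag (the dummy $a_0$-step and the indexing of $w$ versus the constraint sequence) are handled the same way. No substantive difference.
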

\begin{proof}
($\Longrightarrow$)
Let $\pi$ be a path to a final state $p_f$ in $\smash{\NN_\BB^\psi}$.
By \lemref{PC sat}, there is an accepting transition sequence
in $\NN_\psi$ labeled $w=\varsigma_0 \dots\varsigma_n$, and a
symbolic run $\sigma(\pi)\colon b_0 \goto{a_1} b_1 \goto{*} b_n$
such that $\hist(\sigma(\pi),w)$ is satisfiable by some $\alpha$, and
$\varsigma_i$ is consistent with $b_{i-1} \goto{a_i} b_i$, 
for all $i$, so $w$ is consistent with $\sigma$.
By \lemref{abstraction} (2), there is a run 
$\rho\colon (b_0, \alpha_0) \goto{a_1} \dots \goto{a_n} (b_n, \alpha_n)$
abstracted by $\sigma$
such that $\alpha=\alpha_n$ and 
$\alpha_i  \models \bigwedge \constr(\varsigma_i)$ for all $i$, $0\,{\leq}\,i\,{\leq}\,n$. Thus $w$ is consistent with 
$\rho$, and it follows from \lemref{NFA acceptance} that $\rho$ is a witness.
($\Longleftarrow$) Let $\rho$ be a witness for $\psi$, and $\sigma$ its abstraction.
By \lemref{NFA acceptance}, $\NN_\psi$ accepts a word $w$ that is consistent with $\rho$.
Consistency of $w$ with $\rho$ implies that
$w$ is also consistent with $\sigma$, and that $\alpha_i \models \bigwedge \constr(\varsigma_i)$ for all $i$,
$0\,{\leq}\,i\,{\leq}\,n$. Thus
$\alpha_n$ satisfies $\hist(\sigma,w)$ by \lemref{abstraction} (1).
By \lemref{PC sat}, the run of $\NN_\psi$ labeled $w$ and
the symbolic run $\sigma$ give rise to a path $\pi$ in $\smash{\NN_\BB^\psi}$
such that $w$ is consistent with $\sigma$.
As $\NN_\psi$ accepts $w$, and the last state of $\sigma$ is final, also $\pi$ is accepting.
\end{proof}

We illustrate the product construction as well as the witness extraction on a simple example.
\begin{example}
Consider the DDSA $\BB_1$ from \figref{examples} and a formula $\psi = \Diamond (y > 5)$.
We use the NFA $\NN_\psi$ obtained in \exaref{simplenfa}, removing the leftmost deadlock state for compactness. Then, the product automaton is as follows:\\
\begin{tikzpicture}[node distance=12mm,>=stealth']
\tikzstyle{node} = [draw,rectangle split, rectangle split parts=3,rectangle split horizontal, rectangle split draw splits=true, inner sep=3pt, scale=.7, rounded corners]
\tikzstyle{goto} = [->]
\tikzstyle{accepting path} = [red!80!black, line width=.8pt]
\tikzstyle{accepting state} = [fill=red!80!black!15]
\tikzstyle{action}=[scale=.6, black]
\tikzstyle{constr}=[scale=.5, black]
\node[node] (0)  {\pcnode{$\m 0$}{$\psi$}{$x\,{=}\,y\,{=}\,0$}};
\node[node, below of=0] (1) {\pcnode{$\m 1$}{$\psi$}{$x\,{=}\,y\,{=}\,0$}};
\node[node, below of=1] (2) {\pcnode{$\m 2$}{$\psi$}{$x\,{>}\,y \wedge y\,{=}\,0$}};
\node[node, right of=0, xshift=32mm] (3) {\pcnode{$\m 1$}{$\psi$}{$x\,{>}\,0 \wedge y\,{>}\,x$}};
\node[node, below of=3] (4) {\pcnode{$\m 2$}{$\psi$}{$x\,{>}\,y \wedge y\,{>}\,0$}};
\node[node, below of=4] (5) {\pcnode{$\m 1$}{$\top$}{$x\,{>}\,0 \wedge y\,{>}\,x \wedge y\,{>}\,5$}};
\node[node, right of=3, xshift=32mm, accepting state] (6) {\pcnode{$\m 2$}{$\top$}{$x\,{>}\,y \wedge y\,{>}\,5$}};
\node[node, below of=6] (7) {\pcnode{$\m 1$}{$\top$}{$x\,{>}\,5 \wedge y\,{>}\,x$}};
\draw[goto] ($(0) + (0,.5)$) -- (0);
\draw[goto,accepting path] (0) to node[action, left]{$\m{a_0}$} (1);
\draw[goto,accepting path] (1) to node[action, left]{$\m{a_1}$} (2);
\draw[goto] (2.north east) to node[action, left]{$\m{a_2}$} (3.south west);
\draw[goto,bend right] (3) to node[action, left]{$\m{a_1}$} (4);
\draw[goto,bend right] (4) to node[action, right]{$\m{a_2}$} (3);
\draw[goto] (4) to node[action, left]{$\m{a_2}$} node[right, constr]{$y\,{>}\,5$} (5);
\draw[goto,accepting path] (2) to node[action, above]{$\m{a_2}$} node[below, constr]{$y\,{>}\,5$} (5);
\draw[goto] (3) to node[action, above]{$\m{a_1}$} node[below, constr]{$y\,{>}\,5$} (6);
\draw[goto,accepting path] (5.north east) to node[action, left]{$\m{a_1}$} (6.south west);
\draw[goto,bend right] (6) to node[action, left]{$\m{a_2}$} (7);
\draw[goto,bend right] (7) to node[action, right]{$\m{a_1}$} (6);
\end{tikzpicture}
Since $\NN_\BB^\psi$ has a final state (shown shaded),
by \thmref{model checking} a witness for $\psi$ exists.
We obtain a witness from the accepting path drawn in red:
This path corresponds to the word 
$w = \langle\emptyset\:\emptyset\:\{y\,{>}\,5\}\:\emptyset\rangle \in \Sigma^*$ accepted by $\NN_\psi$,
and the symbolic run 
$\sigma\colon \m 1 \goto{\m a_1} \m 2 \goto{\m a_2} \m 1 \goto{\m a_1} \m 2$
of $\BB_1$.
The formula $\phi$ in the final state is satisfiable and equivalent to $\hist(\sigma, w)$, 
and for any satisfying assignment of $\phi$ we obtain a witness run for $\psi$ according to \lemref{abstraction}. For instance for $\alpha(x)=9$, $\alpha(y)=7$, one possible solution is
$(\m 1, \domino{x}{0}{y}{0}) \goto{\m{a}_1}
(\m 2, \domino{x}{1}{y}{0}) \goto{\m{a}_2}
(\m 1, \domino{x}{1}{y}{7})\goto{\m{a}_1}
(\m 2, \domino{x}{9}{y}{7})$.
\end{example}

\noindent

%
%
\savespace{
From \thmsref{model checking}{finite automaton} it follows that 
\mydds{s} with \property allow LTL$_f$ model checking in the following sense:
\begin{corollary}
\label{cor:decidability}
If $\BB$ has \property w.r.t.~to $\CC$ and a decidable relation $\equivBC$, it is decidable
whether $\BB$ admits a witness for $\psi \in \LBC$.
\end{corollary}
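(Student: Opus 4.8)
The plan is to reduce the verification task to the emptiness problem for the product automaton $\NN_\BB^\psi$ of \defref{product construction}: by \thmref{model checking} its language is nonempty exactly when $\BB$ admits a witness for $\psi$, and emptiness of a finite automaton is decidable, so it suffices to show that under \property the reachable part of the product automaton is finite and can be built effectively.

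First I would fix a summary $(\Phi,\equivBC)$ for $(\BB,\CC)$ with finitely many $\equivBC$-classes; this exists by \defref{finite summary} since $\BB$ \hasproperty w.r.t.\ $\CC$. Next I would observe that $\NFApsi$ of \defref{NFA} has a finite, computable state set: $Q$ is the closure of $\{\inquotes{\psi}\}$ under $\delta$, and $\delta$ only produces quoted subformulas of $\psi$ (and formulas $\langle\cdot\rangle\psi'$ with $\psi'$ such a subformula), while $\Sigma = 2^{B\cup\AA\cup\CC}$ is finite, so $\varrho$ is finite too. Then I would argue that the reachable fragment of $\NN_\BB^\psi$ is finite: its states are triples $(b,q,\varphi)$ with $b\in B'$ and $q\in Q$ from finite sets, and although $\varphi$ ranges over the possibly infinite history set $\Phi$, every transition replaces $\varphi$ by a representative of the $\equivBC$-class of $\update(\varphi,a)\wedge\constr(\varsigma)$; by the closure condition in \defref{finite summary}, the set of $\equivBC$-classes reachable from the class of $\bigwedge C_{\alpha_0}$ is closed under transitions and hence finite, so only finitely many triples occur on reachable paths.

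It then remains to make the construction effective. Processing a worklist of reachable states, for each $(b,q,\varphi)$, each $a$ with $b\goto{a}b'$ in $T'$, and each transition $q\goto{\varsigma}q'$ of $\NFApsi$ consistent with $b\goto{a}b'$, I would: (i) compute $\chi = \update(\varphi,a)\wedge\constr(\varsigma)$, using quantifier elimination for linear arithmetic to obtain a quantifier-free equivalent (\defref{update} shows $\update$ is a single existential block); (ii) test satisfiability of $\chi$ with the linear-arithmetic decision procedure, dropping the transition if $\chi$ is unsatisfiable; (iii) using decidability of $\equivBC$, test $\chi$ against the representatives collected so far, reusing one if it is $\equivBC$-equivalent and otherwise adopting $\chi$ as a new representative. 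By the finiteness argument the worklist closes, yielding $\NN_\BB^\psi$ explicitly; a graph search for a path from $p_0$ to a final state then decides emptiness, and \thmref{model checking} finishes the proof.

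The main obstacle I expect is precisely this effective saturation step: one must ensure that the on-the-fly choice of representatives genuinely identifies $\equivBC$-equivalent formulas, so that the closure condition of \defref{finite summary} applies and the worklist terminates after finitely many iterations; and the equisatisfiability condition of \defref{finite summary} is what lets us test satisfiability on a representative rather than on every member of a class. Everything else --- finiteness of $\NFApsi$, decidability of linear arithmetic for $\update$, for satisfiability, and for the $\equiv$-core of $\equivBC$, and decidability of finite-automaton emptiness --- is routine.
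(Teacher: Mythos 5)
Your proposal is correct and follows essentially the same route as the paper, which obtains this corollary directly from \thmref{model checking} together with the observation (after \defref{product construction}) that the product construction terminates when the summary is finite, with NFA emptiness then decidable by graph search. Your spelled-out worklist argument --- quantifier elimination for $\update$, satisfiability and $\equivBC$ tests on representatives, and the closure condition of \defref{finite summary} guaranteeing termination --- is exactly the detail the paper leaves implicit.
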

}

\section{Conditions for Finite Summary}
\label{sec:conditions}

Thanks to \thmref{model checking} we can check for witnesses over \mydds{s} that \haveproperty. Unfortunately, however:
\begin{lemma}
The finite summary property is undecidable.  
\end{lemma}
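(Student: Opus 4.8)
The plan is to reduce the halting problem for two-counter (Minsky) machines --- undecidable, and the engine behind \remref{undecidability} --- to the finite-summary problem. Given a machine $M$, I would build a \mydds $\BB_M$ over $V = \{c_1, c_2, s\}$ (any of the three domains works, since only integers ever arise), with $\CC = \emptyset$ and $\alpha_0$ the all-zero assignment. Its control states are the instructions of $M$, and increments, decrements (guarded by $c_i^r > 0$) and zero-tests of $c_1, c_2$ are encoded by guards in the standard way, with the twist that every guard additionally carries the conjunct $s^w = s^r + 1$; thus $s$ counts steps and strictly increases along every run, so $M$ halts iff its (deterministic) run is finite iff $\BB_M$ reaches only finitely many configurations. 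In addition, from every control state I add a transition with guard $\top$ to a fresh ``diagnostic'' state $b_g$, a self-loop at $b_g$ with guard $s^r > 0 \wedge s^w = s^r - 1$, and a transition from $b_g$ to a fresh sink $b_e$ with guard $s^r = 0$ (the choice of final states $F$ is immaterial here). Plainly $\BB_M$ is computable from $M$.

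If $M$ halts, its run, and hence all configurations reachable through the gadget, are bounded, so $\BB_M$ has finitely many reachable configurations; therefore only finitely many satisfiable history constraints occur up to $\equiv$, besides $\bot$ for non-executable symbolic runs. So $\BB_M$ admits a finite history set, and by \lemref{simple finite summary} it has finite summary.

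If $M$ does not halt, then for every $t \geq 0$ the symbolic run simulating $M$ for $t$ steps and then moving to $b_g$ is executable; its history constraint $\phi_t$ is a conjunction of equalities fixing $c_1, c_2, s$ with $s = t$, so the $\phi_t$ are pairwise non-equivalent, and any history set $\Phi$ contains pairwise $\not\equiv$ representatives $\theta_t$ of them at $b_g$. I claim no summary $(\Phi, \equivBC)$ has finite quotient $\Phi/{\equivBC}$. Fix such a pair and $t < t'$, and suppose $\theta_t \equivBC \theta_{t'}$. Iterating the self-loop at $b_g$ exactly $t$ times and invoking clause (2) of \defref{finite summary} at each step --- legitimate, as all intermediate formulas are satisfiable history constraints, hence representable in $\Phi$, and remain $\equivBC$-related by induction --- yields that the $\Phi$-representative at $b_g$ of a formula fixing $c_1, c_2$ with $s = 0$ is $\equivBC$-related to the $\Phi$-representative at $b_g$ of a formula fixing $c_1, c_2$ with $s = t' - t$. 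Invoking clause (2) of \defref{finite summary} once more along $b_g \to b_e$, the first yields a satisfiable history constraint at $b_e$ while the second yields $\bot$ (as $s = t'-t \neq 0$ clashes with the guard $s^r = 0$); since they must be $\equivBC$-related, clause (2) forces them to be equisatisfiable --- a contradiction. Hence the $\theta_t$ are pairwise $\equivBC$-inequivalent, $\Phi/{\equivBC}$ is infinite, and $\BB_M$ has no finite summary. Thus $\BB_M$ has finite summary iff $M$ halts, which proves undecidability.

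The crux is the ``no finite summary'' direction: having finite summary is an \emph{existential} statement over equivalence relations, so one must defeat \emph{every} candidate $\equivBC$ meeting \defref{finite summary}, not merely the natural cutoff-like ones; the diagnostic gadget is exactly the device that does this, converting any mismatch in the step counter into a mismatch of satisfiability that clause (2) of \defref{finite summary} forbids $\equivBC$ from hiding. A second point one must not miss is that a non-halting machine may loop without diverging; the strictly increasing counter $s$ guarantees infinitely many reachable configurations all the same.
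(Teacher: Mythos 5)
Your proof is correct and rests on the same basic idea as the paper's --- encoding a two-counter Minsky machine as a \mydds --- but it is substantially more complete than what the paper actually writes. The paper's entire argument is the two-line sketch ``$\BB$ \hasproperty iff the counter configurations are bounded, which is undecidable,'' i.e.\ a reduction from the boundedness problem, and it leaves the hard direction entirely implicit: since \property is an \emph{existential} statement over equivalence relations $\equivBC$, one must show that unboundedness defeats every candidate $\equivBC$ satisfying \defref{finite summary}, and a bare Minsky encoding does not obviously yield such an argument (decrements and zero tests are tied to the machine's control flow, so one cannot freely ``drain'' a counter to expose a satisfiability mismatch). Your step counter $s$ together with the diagnostic gadget --- the state $b_g$ with its unconditional entry, the drain loop $s^r>0\wedge s^w=s^r-1$, and the exit guarded by $s^r=0$ --- is precisely the device that closes this gap: any $\equivBC$-identification of two history constraints with different $s$-values is propagated by clause (2)(b) of \defref{finite summary} down to a satisfiable representative being $\equivBC$-related to one equivalent to $\bot$, violating clause (2)(a). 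You also reduce from halting rather than boundedness, which your gadget makes equivalent, and your easy direction (halting $\Rightarrow$ finitely many reachable configurations $\Rightarrow$ finitely many history constraints up to $\equiv$ $\Rightarrow$ \property via \lemref{simple finite summary}) matches the paper's intended easy direction. In short: same reduction target, but your version actually proves the statement where the paper only gestures at it.
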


\begin{proof}[\textit{Proof (sketch)}]
 Consider a \mydds $\BB$ encoding a Minsky machine. $\BB$ \hasproperty iff the counter configurations are bounded, which is undecidable. 
\end{proof}

\noindent
In this section we identify relevant, sufficient conditions for \property.
In short, these are the following
(the equivalence relation used in \defref{finite summary} is given in parentheses).
\begin{compactenum}[\bfseries {C}1:]
\item \label{cri:mcs}
\makebox[61mm][l]{$\BB$ is over \emph{monotonicity constraints}
($\equiv$)} \thmref{mc}
\item \label{cri:gcs}
\makebox[61mm][l]{$\BB$ is over \emph{gap-order constraints} ($\equivGC$)}
\thmref{gap order}
\item \label{cri:feedback free}
\makebox[61mm][l]{$\BB$ is \emph{feedback free} ($\equiv$)}
\thmref{feedback free}
\item \label{cri:bounded lookback}
\makebox[61mm][l]{$\BB$ has \emph{bounded lookback} ($\equiv$)}
\thmref{bounded lookback}
\end{compactenum}

\smallskip
\noindent
While \criref{mcs} and \criref{gcs} restrict the constraint language,
\criref{feedback free} and \criref{bounded lookback} restrict the control
flow (i.e., the shape of the \mydds).
\criref{bounded lookback} generalizes \criref{feedback free} as well as the case where $\BB$ is acyclic.
\smallskip

\noindent
Before  explaining these conditions, we point out that the \myddss in~\figref{examples} \haveproperty.
$\BB_1$ can be seen as a monotonicity constraint (MC) system over $\mathbb Q$,
or a gap-order constraint (GC) system over $\mathbb Z$, 
but \criref{feedback free} and 
\criref{bounded lookback} do not apply.
$\BB_2$ is feedback free and it can be shown to have also 2-bounded lookback.
$\BB_3$ is a GC system over $\mathbb Z$ but no other condition applies.
$\BB_4$ models a shopping process where two products with prices $a$ and $b$
are chosen by a customer, and the sum is computed in the variable $s$. 
$\BB_4$ has 3-bounded lookback, but due to the self-reference of $s$ in
$s^w = s^r + b$, \criref{feedback free} does not apply, 
and neither do \criref{mcs} or \criref{gcs}.


\smallskip
\noindent
\textbf{Monotonicity constraints} (MCs) restrict \defref{constraints} as follows: MCs over variables $V$ and domain $D$ have the form $p \odot q$ where $p,q\in {D\,{\cup}\,V}$
and $\odot$ is one of $=, \neq, \leq, <, \geq$, or $>$.
For MCs, we consider $D$ to be $\mathbb R$ or $\mathbb Q$.
An \emph{MC-formula} is a boolean formula whose atoms are MCs.
A DDSA is an $\MC$-DDSA whose guards are conjunctions of MCs.

It is known that if $\phi$ is an MC-formula over constants $\Const$ and variables 
$V \cup \{x\}$,
then for a formula $\exists x.\, \phi$, we can find a
formula $\phi' \equiv \exists x.\, \phi$ such that $\phi'$ is an MC-formula over constants 
$\Const$ and variables $V$, using a quantifier elimination procedure \'{a} la Fourier-Motzkin ~\cite[Sec. 5.4]{KS16}. 
In particular the set of constants $\Const$ remains the same.
This fact is crucial for the next result:

\begin{theorem}\label{thm:mc}
If $\BB$ is an $\MC$-DDSA and $\CC$ a set of MCs then
$(\BB,\CC)$ \hasproperty. 
\end{theorem}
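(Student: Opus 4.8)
The plan is to build a \emph{finite} history set $\Phi$ for $(\BB,\CC)$ and then invoke \lemref{simple finite summary}, which converts any finite history set into a finite summary $(\Phi,\equiv)$. By \lemref{history set}, it suffices to produce a set $\Phi$ of pairs $(b,\phi)$ that (1) contains a representative of $\bigwedge(C_{\alpha_0}\cup C)$ for every $C\subseteq\CC$, and (2) is closed, up to logical equivalence, under $\phi \mapsto \update(\phi,a)\wedge C$ for every transition $b\goto{a}b'$ and every $C\subseteq\CC$. The key insight is that every such formula stays inside a fixed, finite class of formulas: \emph{MC-formulas over a fixed finite constant set} $\Const$ and the fixed finite variable set $V$.

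First I would pin down $\Const$. Let $\Const$ be the (finite) set of all domain constants occurring in $\alpha_0$, in the guards of $\BB$, and in the constraints of $\CC$. The initial formulas $\bigwedge(C_{\alpha_0}\cup C)$ are MC-formulas over $\Const$ and $V$ by construction. For the inductive step, suppose $\phi$ is an MC-formula over $\Const$ and $V$ and $a$ is an action. By definition, $\update(\phi,a) = \exists\vec U.\,\phi(\vec U)\wedge\trans{a}(\vec U,\vec V)$; since $\guard(a)$ is a conjunction of MCs over $\Const$ and the inertia conjuncts $v^w=v^r$ are MCs, the matrix $\phi(\vec U)\wedge\trans{a}(\vec U,\vec V)$ is an MC-formula over $\Const$ and the enlarged variable set $V\cup U$. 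Eliminating the variables $U$ one at a time with the Fourier--Motzkin-style procedure cited above yields an equivalent MC-formula whose constant set is \emph{still} $\Const$ and whose variables are $V$. Conjoining with $\bigwedge C$ (an MC-formula over $\Const$ and $V$) keeps us in the same class. Hence by induction every history constraint $\hist(\sigma,\cseq C)$ is equivalent to an MC-formula over $\Const$ and $V$.

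The final step is finiteness. Up to logical equivalence there are only finitely many MC-formulas over the \emph{fixed} finite set of atoms one can form from $\Const\cup V$ and the finitely many comparison operators $=,\neq,\le,<,\ge,>$: there are finitely many such atoms, hence finitely many Boolean combinations up to equivalence (e.g. finitely many truth tables over these atoms). So picking one representative formula for each equivalence class and pairing it with each $b\inn B$ gives a finite set $\Phi_0$; intersecting with the pairs actually reachable as history constraints gives a finite history set $\Phi$ satisfying \defref{history set}. Then \lemref{history set} certifies $\Phi$ is a history set, \lemref{simple finite summary} gives that $(\Phi,\equiv)$ is a finite summary, and we conclude $(\BB,\CC)$ \hasproperty.

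The main obstacle is the second paragraph's closure argument: one must be careful that the quantifier-elimination step genuinely does not introduce new constants. This is exactly the point emphasized in the text before the theorem --- Fourier--Motzkin elimination on MCs produces new atoms of the form $p\odot q$ with $p,q\inn\Const\cup V$ (a bound is compared against a previously seen bound), never fresh rational constants --- so the constant set is an invariant. Once that invariant is secured, the rest is routine counting.
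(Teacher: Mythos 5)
Your proposal is correct and follows essentially the same route as the paper: it takes $\Phi_\MC = B \times \MC_\Const$ (MC-formulas over the fixed constants of $\alpha_0$, the guards, and $\CC$), uses closure of MCs under Fourier--Motzkin quantifier elimination with an invariant constant set to verify the two conditions of \lemref{history set}, observes finiteness up to equivalence, and concludes via \lemref{simple finite summary}. Your extra care about restricting to pairs actually arising as history constraints is a harmless refinement of the same argument.
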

\begin{proof}
Let $\Const$ be the set of constants in $\CC$, ${\alpha_0}$, and guards of $\BB$,
and $\MC_\Const$ the set of quantifier-free formulas whose atoms are MCs over $V$, $\Const$, so $\MC_\Const$ is finite up to equivalence.
We use \lemref{history set} to show that $\Phi_\MC := B \times \MC_\Const$ is a finite history set.
First, for all $C\,{\subseteq}\,\CC$,
$\bigwedge (C_{\alpha_0} \cup C) \in \MC_\Const$.
If $(b,\phi) \in \Phi_\MC$ and $b \goto{a} b'$ then
$\update(\phi,a) \wedge C = 
\exists \vec U. \phi(\vec U) \wedge \chi$ for some  MC-formula $\chi \in \MC_\Const$. From quantifier elimination one obtains
some $\phi'$ in $\MC_\Const$ such that $\phi' \equiv \exists \vec U. \phi(\vec U) \wedge \chi$.
Thus $(\Phi_\MC,\equiv)$ is a finite summary by \lemref{simple finite summary}.
\end{proof}

\noindent
This result explains why the history set in \exaref{finite summary} is finite.
MCs over $\mathbb Q$ and $\mathbb R$ are closed under negation, so
\thmsref{model checking}{mc} imply decidability of LTL$_f$ model checking.
Note that the proof of \thmref{mc} fails for domain $\mathbb Z$,
as MCs over $\mathbb Z$
are \emph{not} closed under quantifier elimination. Instead, they are
covered by gap-order constraints, discussed next.
\smallskip


\smallskip
\noindent
\textbf{Gap-order constraints.}
Let $X$ be a set of variables and $\Const \subseteq \mathbb Z$
a finite set of constants such that $0 \in \Const$.
A \emph{gap-order constraint} (GC) over $X$ and $\Const$
restricts \defref{constraints} to constraints of the form
$x - y \geq k$ for $x,y \in X\cup \Const$ and $k\inn \mathbb N$.
We call a \emph{GC-formula} a quantifier-free formula whose atoms are GCs,
and a GC-DDSA a DDSA where all guards are conjunctions of GCs. 
GC-DDSAs are known to generalize MC-DDSAs over $\mathbb Z$~\cite{BP14}:
for instance, $x=3$ is expressible by $x\,{-}\,3 \geq 0 \wedge 3\,{-}\,x \geq 0$.
However, it is known that relaxing the GC definition to allow also
$x - y \leq k$  (or $k<0$ in $x - y \geq k$)
renders reachability in GC-DDSAs undecidable~\cite{BP14}.

In order to show that GC-DDSAs allow for a finite summary, we use the concept
of a bounded approximation:
Given the set of constants $\Const$ and $K := max\{|c - c'| + 1 \mid c, c'\in \Const\}$, 
 the \emph{$K$-bounded approximation} $\cutoff[K]{\phi}$ of a GC-formula $\phi$
is obtained from $\phi$ by replacing all constraints $x\,{-}\,y \geq k$ where 
$k \geq K$ by $x - y \geq K$.
The next lemma rephrases \cite[Props. 6 and 7]{BP14}:
\begin{lemma}
\label{lem:cutoff}
\begin{inparaenum}[(1)]
\item A GC formula $\phi$ over variables $X$ and constants  $\Const$ is satisfiable iff $\cutoff[K]{\phi}$ is.
\item GC formulas $\phi$ over variables $U$ and $\Const$, and 
$\psi$ over $U \cup V$ and $\Const$ satisfy
$\smash{\cutoff[K]{\exists \vec U. \phi \wedge \psi} \equiv
\cutoff[K]{\exists \vec U. \cutoff[K]{\phi} \wedge \cutoff[K]{\psi}}}$.
\end{inparaenum}
\end{lemma}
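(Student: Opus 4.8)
The plan is to prove Lemma~\ref{lem:cutoff} essentially by reduction to the corresponding statements in~\cite{BP14}, after clarifying that our notion of GC-formula and $K$-bounded approximation coincides with theirs up to the mild bookkeeping introduced by allowing constants in $X \cup \Const$ as terms rather than only variables. Concretely, I would first recall that a GC-formula $\phi$ over $X$ and $\Const$ can be normalized: every atom $x - y \geq k$ with $x, y \in X \cup \Const$ and $k \in \mathbb{N}$ is either (a) a genuine gap atom between two variables, (b) a one-sided bound $x - c \geq k$ or $c - x \geq k$ on a single variable, or (c) a ground atom $c - c' \geq k$ that is either $\top$ or $\bot$. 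By introducing, if desired, a fixed ``zero variable'' or simply treating constants as distinguished variables pinned to their value (recall $0 \in \Const$), one sees that $\phi$ is, modulo a satisfiability-preserving and quantifier-elimination-compatible translation, exactly a gap-order constraint system in the sense of~\cite{BP14} over the variable set $X$ together with auxiliary symbols for $\Const$. The key quantitative point is that $K = \max\{|c - c'| + 1 \mid c, c' \in \Const\}$ is large enough that all ground atoms are correctly evaluated and all ``relevant'' distances between a variable and a constant, or between two constants, are below the cutoff; this is precisely the constant used in~\cite{BP14}.

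For part~(1), the argument is the classical small-model/saturation property of gap-order constraints: given a satisfying assignment $\alpha$ of $\phi$, one can rescale and shift the values so that any value that is ``far'' (distance $\geq K$) from all constants and from the other values it is constrained to exceed can be pushed down to distance exactly $K$ without violating any atom, since all atoms are lower bounds of the form $x - y \geq k$ and replacing $k \geq K$ by $K$ only weakens them. Conversely, any satisfying assignment of $\cutoff[K]{\phi}$ can be stretched to satisfy $\phi$ by increasing the gaps that were truncated back up to their required values $k \geq K$, propagating the increase along the (acyclic, since satisfiable) gap-order graph. This is exactly~\cite[Prop.~6]{BP14}, so I would cite it and only sketch the direction needed.

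For part~(2), the statement is that taking $K$-bounded approximations commutes with existential projection, up to a further $K$-bounded approximation on the outside: $\cutoff[K]{\exists \vec U.\, \phi \wedge \psi} \equiv \cutoff[K]{\exists \vec U.\, \cutoff[K]{\phi} \wedge \cutoff[K]{\psi}}$. Here I would invoke the quantifier-elimination procedure for gap-order constraints (transitive closure of the gap graph followed by deletion of the eliminated variables' atoms): eliminating a variable $u$ from a conjunction of GCs produces, for every pair of atoms $x - u \geq k_1$ and $u - y \geq k_2$, the derived atom $x - y \geq k_1 + k_2$, and drops all atoms mentioning $u$. The crucial observation is that if either $k_1 \geq K$ or $k_2 \geq K$ then the sum is $\geq K$, so its $K$-bounded approximation is $x - y \geq K$; and the $K$-bounded approximation of $x - u \geq k_1$ (resp. $u - y \geq k_2$) is already $x - u \geq K$ (resp. $u - y \geq K$) in that case. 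Hence performing the cutoff before elimination, then eliminating, then cutting off again, yields the same atoms (up to $\equiv$) as eliminating first and then cutting off. One also checks that disjunctions (GC-formulas are arbitrary boolean combinations, pushed to DNF for elimination) are handled atom-wise, so the commutation lifts from conjunctions to formulas. This is~\cite[Prop.~7]{BP14}, which I would again cite, reproving only the single-variable elimination step to make the constant $K$ explicit in our notation.

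The main obstacle I anticipate is purely a matter of interface rather than mathematics: ensuring that the definitions of ``gap-order constraint'', ``satisfiability'', and ``$K$-bounded approximation'' used here line up precisely with those of~\cite{BP14}, since we permit constants as terms in atoms and fix $0 \in \Const$, whereas the source may use a slightly different normal form (e.g., only variable-variable atoms plus explicit bounds). Once that translation is pinned down and shown to preserve both satisfiability and the behavior of quantifier elimination, both parts follow directly from~\cite[Props.~6 and 7]{BP14}; the lemma is genuinely a restatement, so the ``proof'' is really a careful citation plus the observation that our $K$ is at least as large as the bound required there.
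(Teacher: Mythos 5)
Your proposal is correct and takes essentially the same route as the paper: the lemma is stated there explicitly as a rephrasing of \cite[Props.~6 and 7]{BP14}, and no independent proof is given beyond that citation. Your additional sketches (the gap-stretching argument for satisfiability and the transitive-closure analysis of how cutoffs interact with variable elimination) are the standard arguments behind those propositions and are consistent with what the paper relies on.
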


\noindent
In the remainder of this section, let $\BB$ be a GC-DDSA and 
$\CC$ a constraint set, such that $C_{\alpha_0}$ and $\CC$ consist of GCs over variables $V$ and constants $\Const$,
and all guards of $\BB$ are GCs over $V^r\cup V^w$ and $\Const$,
with the bound $K$ as above.
Below we use the fact that GC formulas are closed under quantifier elimination:
if $\phi$ is a GC formula over variables $V \cup \{x\}$ and constants $\Const$, 
we can find a GC formula $\phi'$ that is equivalent to $\exists x.\, \phi$, quantifier-free,
and over the variables $V$ (though the constants in $\phi'$ need not be $\Const$). For details, see 
\cite{Revesz93}, \cite[Thm. 2]{BGI09}.

Let $\GC_\Const$ be the set of quantifier-free formulas whose atoms are GCs over $V$ and $\Const$ and $\Phi_\GC = B \times \GC_\Const$.
As $\GC_\Const$ may be infinite, we consider finite summary
w.r.t. the equivalence relation $\equivGC$ defined as 
$\phi \equivGC \psi$ iff $\cutoff[K]{\phi} \equiv \cutoff[K]{\psi}$.
\begin{example}
For $\BB_3$ from \figref{examples} we have 
$\Const = \{ 0,2,3\}$, so $K = 4$ (if $\CC = \emptyset$, otherwise constraints in $\CC$ need to be included).
The history constraints 
$\hist(\sigma_4) \equiv (x\,{\geq}\,5) \wedge (y\,{\geq}\,6)$ and
$\hist(\sigma_6) \equiv (x\,{\geq}\,8) \wedge (y\,{\geq}\,9)$ 
from \exaref{gap order} hence satisfy
$\hist(\sigma_4) \equivGC \hist(\sigma_6)$ because their cutoff is equal, namely
$\cutoff{\hist(\sigma_4)} = \cutoff{\hist(\sigma_6)} = (x\,{\geq}\,4) \wedge (y\,{\geq}\,4)$.
\end{example}

\begin{theorem}\label{thm:gap order}
$(\Phi_\GC,\equivGC)$ is a finite summary for $\BB$ and $\CC$.
\end{theorem}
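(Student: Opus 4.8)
The plan is to verify the three conditions of \defref{finite summary} for the pair $(\Phi_\GC, \equivGC)$, plus finiteness of the quotient. First I would argue that $\Phi_\GC = B \times \GC_\Const$ is a history set: by \lemref{history set} it suffices that (1) for every $C \subseteq \CC$ the formula $\bigwedge(C_{\alpha_0} \cup C)$ lies in $\GC_\Const$ (immediate, since $C_{\alpha_0}$ and $\CC$ are GCs over $V$ and $\Const$), and (2) for every $(b,\phi)\in\Phi_\GC$, transition $b\goto{a}b'$, and $C\subseteq\CC$, some equivalent GC-formula over $V,\Const$ exists in $\GC_\Const$. For (2), $\update(\phi,a)\wedge\bigwedge C = \exists\vec U.\,\phi(\vec U)\wedge\chi$ where $\chi = \Delta_a(\vec U,\vec V)\wedge\bigwedge C$ is a GC-formula over $U\cup V$ and $\Const$; by closure of GC-formulas under quantifier elimination (citing Revesz and \cite[Thm.~2]{BGI09}) there is a quantifier-free GC-formula $\phi'$ over $V$ equivalent to it. Crucially, $\phi'$ may introduce new constants, but since $\equivGC$ collapses everything above the cutoff bound $K$, we can take the representative $[\phi']$ to be $\cutoff[K]{\phi'}$ — or, more carefully, note that $\GC_\Const$ only needs to contain \emph{some} representative of each $\equivGC$-class reachable, and we will see that is a finite set. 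Actually the cleanest route is to let $\Phi_\GC$ contain all quantifier-free GC-formulas over $V$ whose constants lie in $\{0,1,\dots,K\}$; this is already finite and closed under the operations modulo $\equivGC$.

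Next I would check \defref{finite summary}(1): $\equivGC$ contains $\equiv$ (if $\phi\equiv\psi$ then $\cutoff[K]{\phi}$ and $\cutoff[K]{\psi}$ are obtained by the same syntactic replacement applied to equivalent formulas; strictly one wants that $\phi \equiv \psi$ implies $\cutoff[K]{\phi} \equiv \cutoff[K]{\psi}$, which is where I would be slightly careful — this is not purely syntactic, so I would instead observe that for GC-formulas, equivalence is decidable and $\cutoff[K]{\cdot}$ is well-defined up to $\equiv$, so $\equivGC$ is genuinely an equivalence relation containing $\equiv$), and it is decidable since $\cutoff[K]{\cdot}$ is computable and GC-equivalence is decidable.

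Then \defref{finite summary}(2): suppose $(b,\phi),(b,\psi)\in\Phi_\GC$ with $\phi\equivGC\psi$, i.e. $\cutoff[K]{\phi}\equiv\cutoff[K]{\psi}$. For equisatisfiability, \lemref{cutoff}(1) gives $\phi$ sat iff $\cutoff[K]{\phi}$ sat iff $\cutoff[K]{\psi}$ sat iff $\psi$ sat. For preservation under steps, fix $b\goto{a}b'$ and $C\subseteq\CC$; I must show $\cutoff[K]{\update(\phi,a)\wedge\bigwedge C} \equiv \cutoff[K]{\update(\psi,a)\wedge\bigwedge C}$. Writing $\update(\phi,a)\wedge\bigwedge C = \exists\vec U.\,\phi(\vec U)\wedge\chi$ with $\chi$ a GC-formula over $U\cup V$, \lemref{cutoff}(2) yields $\cutoff[K]{\exists\vec U.\,\phi(\vec U)\wedge\chi} \equiv \cutoff[K]{\exists\vec U.\,\cutoff[K]{\phi(\vec U)}\wedge\cutoff[K]{\chi}}$, and since $\cutoff[K]{\phi}\equiv\cutoff[K]{\psi}$ the right-hand side is unchanged when $\phi$ is replaced by $\psi$; applying \lemref{cutoff}(2) backwards gives the claim. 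Finally, the quotient $\Phi_\GC/{\equivGC}$ is finite: an element is determined by a system state $b\in B$ (finitely many) together with the $\equiv$-class of a quantifier-free GC-formula over the fixed finite variable set $V$ with all constants bounded by $K$, and there are only finitely many such formulas up to logical equivalence. Hence $(\Phi_\GC,\equivGC)$ is a finite summary.

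The main obstacle I anticipate is the preservation-under-steps argument, \defref{finite summary}(2b): it hinges entirely on \lemref{cutoff}(2) — the commutation of the cutoff operation with (quantified) conjunction — and on setting up the quantifier-elimination step so that the free/bound variable partition matches the hypotheses of that lemma exactly (the $\phi$ over $U$, the rest over $U\cup V$). A secondary subtlety is making sure $\equivGC$ is literally an equivalence relation refining $\equiv$ despite $\cutoff[K]{\cdot}$ being defined syntactically; I would handle this by working with $\cutoff[K]{\cdot}$ up to $\equiv$ throughout and noting GC-equivalence is decidable, so nothing is lost.
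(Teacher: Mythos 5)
Your main line follows the paper's own proof exactly: establish that $\Phi_\GC = B\times\GC_\Const$ is a history set via \lemref{history set} and closure of GC-formulas under quantifier elimination, obtain equisatisfiability from \lemref{cutoff}(1), obtain preservation under steps from the commutation property \lemref{cutoff}(2) applied in both directions, and conclude finiteness of the quotient from the finitely many $K$-bounded approximations. That part is correct, and your flagged subtlety about $\equiv\,\subseteq\,\equivGC$ is a fair observation about a point the paper itself passes over quickly.

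The aside about quantifier elimination ``introducing new constants'' is where you go astray, in two respects. First, the worry is unfounded: a GC over $V$ and $\Const$ has the form $x-y\geq k$ with the \emph{endpoints} $x,y\in V\cup\Const$ but the gap $k\in\mathbb N$ arbitrary, and Fourier--Motzkin-style elimination only chains such inequalities, summing gaps; so the result stays in $\GC_\Const$ (which is infinite precisely because the gaps are unbounded --- this is why a nontrivial $\equivGC$ is needed at all). Second, your proposed ``cleanest route'' of shrinking $\Phi_\GC$ to formulas with constants in $\{0,\dots,K\}$, or taking $[\phi']$ to be $\cutoff{\phi'}$, would violate \defref{history set}: a history set must contain, for each history constraint, a formula that is \emph{logically} equivalent ($\equiv$) to it, not merely $\equivGC$-equivalent. \exaref{gap order} exhibits history constraints $(x\,{\geq}\,3i-1)\wedge(y\,{\geq}\,3i)$ that are pairwise inequivalent, so no set of formulas with bounded constants can be a history set; the entire point of the finite-summary machinery is that $\Phi$ may be infinite provided its quotient under $\equivBC$ is finite. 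Since \defref{finite summary} and the remark following it (the representative function $[\cdot]$, justified by \lemref{history set}) are stated for genuine history sets, substituting the truncated set would leave the theorem unsupported by the paper's definitions. Dropping that aside and keeping your primary argument gives the paper's proof.
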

\begin{proof}[Proof (sketch)]
We use \lemref{history set} to show that $\Phi_\GC$ is a history set:
first, for all $C\,{\subseteq}\,\CC$, 
$\bigwedge (C_{\alpha_0} \cup C)$ is in $\GC_\Const$.
Next, for $(b,\phi) \in \Phi_\GC$ and $b \goto{a} b'$, there is
some GC-formula $\chi$ over $U \cup V$ and $\Const$ such that
$\smash{\update(\phi,a) \wedge C = \exists \vec U. \phi(\vec U) \wedge \chi}$. 
From quantifier elimination we get a GC-formula $\phi'$ over $V$ and $\Const$
with $\phi' \equiv \exists \vec U. \phi(\vec U) \wedge \chi$, so $\phi' \in \GC_\Const$.
It remains to check \defref{finite summary}:
Suppose  $\phi \equivGC \psi$, so $\cutoff[K]{\phi} \equiv \cutoff[K]{\psi}$. 
Equisatisfiability of $\phi$ and $\psi$ follows from \lemref{cutoff} (1).
We can write
$\update(\phi,a) \wedge C = \exists \vec U. \phi(\vec U) \wedge \chi$ and
$\update(\psi,a) \wedge C = \exists \vec U. \psi(\vec U) \wedge \chi$
for some GC-formula $\chi$ over $U \cup V$ and $\Const$.
Then 
$\cutoff[K]{\exists \vec U. \phi(\vec U) \wedge \chi}
\equiv \cutoff[K]{\exists \vec U. \psi(\vec U) \wedge \chi}
$
follows using \lemref{cutoff} (2).
Finally, $(\Phi_\GC,\equivGC)$ has finitely many equivalence classes 
as the number of $K$-bounded GCs is finite.
\end{proof}

\noindent
With \thmref{model checking} it follows that model checking of a formula $\psi$ is decidable if $\neg \psi$ is expressible in $\LBC$.
However, the latter is not guaranteed for GC-DDSAs since GCs are not closed under negation.
For instance, $\Box(x\,{\geq}\,y)$ can be checked as its negation is expressible 
as $\Diamond (y\,{-}\,x\,{\geq}\,1)$; but $\Box(x\,{-}\,y\,{\geq}\,2)$ cannot as its negation is not expressible in $\LBC$ with GCs.

\smallskip
\noindent
\textbf{Feedback freedom}~\cite{DDV12} achieves decidability by forbidding variable updates that depend on an unbounded history: 
it requires that for every dependency between two instances $x_i$, $x_j$ of a variable $x$ in a run, another ``guard'' variable $y$, keeps its value for the time span $[i,j]$ of the dependency.
More precisely,
let $\sigma$ be a symbolic run of length $n$ whose $k$-th action is $a_{k}$, and $\CC$ a constraint set.
The \emph{computation graph} $G_{\sigma,\CC}$ is the undirected graph with nodes 
$\mc V = \{v_i \mid v\in V\text{ and }0\,{\leq}\,i\,{\leq}\,n\}$
and an edge from $x_i$ to $y_j$ iff $x_i$ and $y_j$ occur in a common literal of 
$\trans{a_{k}}(\vec V_{k-1}, \vec V_k) \wedge C(\vec V_k)$, 
for some $C \subseteq \CC$ and $i,j,k \leq n$.
The subgraph of $G_{\sigma,\CC}$ of all edges corresponding to equality literals $x_i = y_j$ for $x_i, y_j \in \mc V$ is denoted $E_{\sigma,\CC}$.

Let $\equiv_E$ be the smallest equivalence relation on $\mathcal V$ containing $E_{\sigma,\psi}$, so that the equivalence classes of $\equiv_E$ are the connected components of $E_{\sigma,\psi}$.
The equivalence class of $x_i\in \mc V$ is denoted $\eqc{x_i}$, and the \emph{span} of 
$\eqc{x_i}$ is the set
of affected instants, i.e., $span(\eqc{x_i}) = \{j \mid \exists v\inn V \text{ with }v_j\in \eqc{x_i}\}$.

\begin{definition}
For a \mydds $\BB$ and constraint set $\CC$, the pair
$(\BB,\CC)$ is \emph{feedback-free} if for every symbolic run 
$\sigma$, every path in $G_{\sigma,\CC}$ from $x_i$ to $x_j$ 
contains a node $y$ such
that $span(\eqc{x_i}) \cup span(\eqc{x_j}) \subseteq span(\eqc{y})$.
\end{definition}

\noindent
The next example illustrates this concept.

\begin{example}
\label{exa:computation graphs}
For runs $\sigma_2$ of $\BB_2$ and $\sigma_4$ of $\BB_4$ (cf. \figref{examples})
and $\CC = \{x\,{>}\,5, s\,{>}\,0\}$, we get the following graphs $G_{\sigma_i,\CC}$:
\begin{tikzpicture}[xscale=.6, yscale=.8]
\node[scale=.7] at (-1,.5) {$x$};
\node[scale=.7] at (-1,.1) {$y$};
\node[scale=.7] at (-1,1) {$\sigma_2\colon$};
\foreach \i/\l in {0/1,1/2,2/2,3/2,4/3} {
  \node[scale=.65] at (\i,1.3) {\i};
  \node[scale=.65] (state\i) at (\i,1) {$\m\l$};
  \node[fill, circle, inner sep=0pt, minimum width=1mm] (x\i) at (\i,.5) {};
  \node[fill, circle, inner sep=0pt, minimum width=1mm] (y\i) at (\i,.1) {};
  }
\foreach \s/\t in {0/1, 1/2, 2/3, 3/4} {
  \draw[->, shorten >=.6mm, shorten <=.6mm] (state\s) -- (state\t);
  }
\draw (y0) -- (y1);
\draw (x1) -- (x4);
\draw[dotted] (y2) -- (x1);
\draw[dotted] (y3) -- (x2);
\draw (y3) -- (y4);
\draw[dotted] (y4) -- (x4);
\begin{scope}[xshift=7cm]
\node[scale=.7] at (-1,.5) {$a$};
\node[scale=.7] at (-1,.1) {$s$};
\node[scale=.7] at (-1,-.3) {$b$};
\node[scale=.7] at (-1,1) {$\sigma_4\colon$};
\foreach \i/\l in {0/1,1/1,2/2,3/2,4/3,5/1} {
  \node[scale=.65] at (\i,1.3) {\i};
  \node[scale=.65] (state\i) at (\i,1) {$\m\l$};
  \node[fill, circle, inner sep=0pt, minimum width=1mm] (a\i) at (\i,.5) {};
  \node[fill, circle, inner sep=0pt, minimum width=1mm] (s\i) at (\i,.1) {};
  \node[fill, circle, inner sep=0pt, minimum width=1mm] (b\i) at (\i,-.3) {};
  }
\foreach \s/\t in {0/1, 1/2, 2/3, 3/4, 4/5} {
  \draw[->, shorten >=.6mm, shorten <=.6mm] (state\s) -- (state\t);
  }
\draw (b0) -- (b1);
\draw (s0) -- (s1);
\draw[dotted] (a1) -- (s2);
\draw (a1) -- (a2);
\draw (b1) -- (b2);
\draw (a2) -- (a3);
\draw (s2) -- (s3);
\draw[dotted] (s3) -- (s4);
\draw[dotted] (b3) -- (s4);
\draw (a3) -- (a5);
\draw (b3) -- (b5);
\end{scope}
\end{tikzpicture}
 where edges in $E_{\sigma_i,\CC}$ are drawn solid and others dotted. 
For $\sigma_2$, we have $span(\eqc{y_2})\allowbreak \cup span(\eqc{y_3}) \subseteq span(\eqc{x_1})$. The graph is similar for other runs, so that
$\BB_2$ is feedback free; but $\BB_4$ is not, as witnessed by the path from $s_3$ to $s_4$.
\end{example}
\noindent
We postpone the proof of the next theorem, to show 
below that feedback freedom is a special case of \emph{bounded lookback}.
\begin{theorem}\label{thm:feedback free}
Feedback freedom implies finite summary.
\end{theorem}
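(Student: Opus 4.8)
The plan is to derive Theorem~\ref{thm:feedback free} from the forthcoming Theorem~\ref{thm:bounded lookback} by showing that feedback freedom is a special case of bounded lookback: for every feedback-free pair $(\BB,\CC)$ there is a bound $k$, depending only on $\BB$ and $\CC$ (e.g.\ $k=2$ for $\BB_2$), such that $(\BB,\CC)$ has $k$-bounded lookback. Granting this, Theorem~\ref{thm:bounded lookback} applies and yields---just as for $\BB_2$---a \emph{finite history set}, so finite summary follows with the trivial equivalence $\equiv$ via \lemref{simple finite summary}. In other words, the entire content of the theorem is the reduction ``feedback-free $\Rightarrow$ $k$-bounded lookback''; everything downstream is already packaged by the earlier machinery.

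The heart of that reduction is a combinatorial statement about computation graphs. Intuitively, $k$-bounded lookback says that along every symbolic run the constraints that are ``live'' at the current instant can be recovered from the last $k$ transitions alone; equivalently, every dependency path in $G_{\sigma,\CC}$ reaching the current instant from far in the past can be re-routed through variable instances that also occur within the last $k$ steps. I would argue in two steps. First, I would use the span condition to control how equality-classes of $\equiv_E$ interact across time: whenever a path in $G_{\sigma,\CC}$ links two instances $x_i,x_j$ of the same variable, it passes through a ``guard'' instance $y_\ell$ with $span(\eqc{x_i})\cup span(\eqc{x_j})\subseteq span(\eqc{y_\ell})$, so the classes that stay alive over a long interval are organized in a nested fashion---a class straddling an instant $m$ and reaching far into the past must be subsumed, in span, by another class that is also alive at $m$. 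Second, at any single instant at most $|V|$ distinct data values are stored, so only boundedly many span-profiles of live classes can coexist; a pigeonhole argument over these finitely many profiles produces a uniform $k$ beyond which no genuinely new dependency to the past can survive. Making ``live constraint'' precise amounts to tracking the history constraint $h(\sigma,\cseq C)$ through the quantifier-elimination steps of \defref{update} and observing that, under feedback freedom, eliminating an old instance $v_i$ only ever combines literals inside a bounded window around $i$---this is exactly where the span condition does the real work.

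The main obstacle is this last point: a priori, iterated quantifier elimination can manufacture atoms relating variable instances arbitrarily far apart, and one must show that feedback freedom prevents the elimination from ever having to ``jump over'' a long-lived guard variable, so that the resulting formula over $V$ is one of only finitely many up to $\equiv$. There are two reasonable execution strategies. The cleaner one is the reduction above, matching the structure of feedback-free runs to the definition of bounded lookback so that Theorem~\ref{thm:bounded lookback} applies verbatim. A fully self-contained alternative would instead invoke \lemref{history set} directly and show that $\{\,h(\sigma,\cseq C)\mid \sigma \text{ a symbolic run of }\BB,\ \cseq C \text{ over }\CC\,\}$ is finite up to $\equiv$, because under feedback freedom a Fourier-Motzkin elimination over any symbolic run produces a formula over $V$ with a bounded number of atoms and bounded constants; but since this re-proves, in a less modular form, what the bounded-lookback theorem already establishes, I would fall back on it only if matching the bounded-lookback definition turned out to be awkward.
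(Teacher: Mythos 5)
Your top-level plan coincides with the paper's: \thmref{feedback free} is proved there by reducing to bounded lookback and invoking \thmref{bounded lookback}; the reduction is isolated as \lemref{feedback free bounded lookback}, which gives the concrete bound $K=2|V|$, and finiteness of the summary then comes from the finite history set of \thmref{bounded lookback} together with \lemref{simple finite summary}, exactly as you say.

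The place where your sketch goes astray is the combinatorial core, which you yourself identify as the entire content of the theorem. Bounded lookback is \emph{not} the statement that live constraints can be recovered from the last $k$ transitions, nor that dependency paths can be re-routed through recent instants; it is the requirement that all \emph{acyclic paths in the collapsed graph} $\eqc{G_{\sigma,\CC}}$ have length at most $K$ (dependencies may reach arbitrarily far back in time, as long as they are mediated by equality edges that get collapsed). Your pigeonhole over span-profiles targets a temporal-window bound and does not obviously yield the path-length bound that \thmref{bounded lookback} consumes. The paper's argument, following \cite{DDV12} (Lem.~5.4), is an induction on the set $X$ of variables occurring in a connected component of the collapsed graph: if two instances $x_i,x_j$ of the same variable appear, feedback freedom gives a guard class $\eqc{y_k}$ whose span contains $span(\eqc{x_i})\cup span(\eqc{x_j})$; choosing $\eqc{y_k}$ with maximal span, connectivity plus feedback freedom force every other instance of $y$ in the component into that same class, so removing $\eqc{y_k}$ leaves components over $X\setminus\{y\}$, and induction shows the component is a tree of depth at most $|X|\leq|V|$, whence acyclic paths have length at most $2|V|$. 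Your ``nested spans'' observation is exactly the right ingredient for the maximality step, but it should feed a tree-depth induction rather than a pigeonhole. Finally, your concern that iterated quantifier elimination could manufacture atoms relating instances arbitrarily far apart belongs to the proof of \thmref{bounded lookback} itself (where it is discharged by bounding quantifier depth by $K\cdot|V|$ over a fixed vocabulary), not to this reduction.
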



\smallskip
\noindent
\textbf{Bounded lookback.}
We next show that a DDSA $\BB$ has finite summary if, intuitively, 
at any point of a run of $\BB$ the values of $V$
depend on a bounded number of earlier steps.
Throughout this section, we consider a DDSA $\BB$ and constraint set $\CC$.
Moreover, we denote by $\eqc{G_{\sigma,\CC}}$ the graph obtained from $G_{\sigma,\CC}$ by collapsing
all edges in $E_{\sigma,\CC}$.

\begin{definition}
The pair $(\BB,\CC)$ has \emph{bounded lookback} if there is some $K$ such that for all
symbolic runs $\sigma$ of $\BB$,
all acyclic paths in $\eqc{G_{\sigma,\CC}}$ have length at most $K$.
\end{definition}

\noindent
For instance, after collapsing all solid (i.e., $E_{\sigma_4,\CC}$) edges of $G_{\sigma_4,\CC}$ in \exaref{computation graphs}, the longest path has length 3. In fact, one can show that $(\BB_4, \CC)$ has bounded lookback for $K=3$.

\begin{theorem}\label{thm:bounded lookback}
Bounded lookback implies \property. 
\end{theorem}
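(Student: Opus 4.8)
Since the equivalence relation attached to bounded lookback in \secref{conditions} is plain logical equivalence $\equiv$, the plan is to invoke \lemref{simple finite summary}: it suffices to exhibit a \emph{finite} history set for $(\BB,\CC)$. Equivalently, I would show that the family of history constraints $\{\hist(\sigma,\cseq C) \mid \sigma$ a symbolic run of $\BB$ and $\cseq C$ a constraint sequence over $\CC\}$ has only finitely many members up to $\equiv$; picking one representative per class (and per final state) then yields such a $\Phi$ by \lemref{history set}, and $(\Phi,\equiv)$ is the desired finite summary.

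First I would unfold \defsref{history constraint}{update} to obtain, for a symbolic run $\sigma\colon b_0 \goto{a_1}\cdots\goto{a_n} b_n$ and $\cseq C = \langle C_0,\dots,C_n\rangle$, the normal form $\hist(\sigma,\cseq C) \equiv \exists \vec V_0 \cdots \vec V_{n-1}.\, \Psi_{\sigma,\cseq C}$, where $\Psi_{\sigma,\cseq C}$ is the conjunction of $C_{\alpha_0}(\vec V_0)$, of the verification constraints $C_k(\vec V_k)$, and of the transition formulas $\trans{a_k}(\vec V_{k-1},\vec V_k)$ for $1 \le k \le n$, and the only free variables are $\vec V_n$ (to be renamed back to $\vec V$). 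The key observation here is that $\Psi_{\sigma,\cseq C}$ is a conjunction of literals, each of which, after forgetting the instance index, comes from a fixed finite pool $\mc L$ — the literals occurring in the transition formulas $\trans{a}$ of $\BB$ (guard literals or inertia equalities), in $C_{\alpha_0}$, and in the constraints of $\CC$ — so all of them mention boundedly many variables and have coefficients and constants from a fixed finite set. Moreover the computation graph $\eqc{G_{\sigma,\CC}}$ is, by construction, exactly the co-occurrence graph of the instances in $\Psi_{\sigma,\cseq C}$ after collapsing the equality literals $E_{\sigma,\CC}$.

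The core of the proof is then to compute a quantifier-free equivalent of $\exists \vec V_0 \cdots \vec V_{n-1}.\, \Psi_{\sigma,\cseq C}$ by eliminating the non-final instances in an order dictated by $\eqc{G_{\sigma,\CC}}$: first use the equalities in $E_{\sigma,\CC}$ to identify instances (passing to the collapsed graph), then eliminate the remaining non-final classes one by one. I would prove, by induction on the elimination steps, that every literal produced during elimination is a linear combination of at most $g(K)$ pool literals, for a function $g$ depending only on $K$ and on the maximal size of guards of $\BB$ and of constraints in $\CC$. Bounded lookback is what makes this work: it forbids acyclic dependency chains longer than $K$ in $\eqc{G_{\sigma,\CC}}$ (equivalently, simple cycles of length $> K{+}1$), so Fourier–Motzkin / Presburger elimination can only ever chain together boundedly many literals before all surviving variables lie in $\vec V_n$. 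It follows that the literals of the resulting formula have coefficients and constants — and, over $\mathbb Z$, divisibility moduli — from a finite set depending only on $\BB$, $\CC$, $K$, and that they mention only $V$; hence $\hist(\sigma,\cseq C)$ is equivalent to a Boolean combination of atoms from one fixed finite set $\mc A$. As there are only finitely many Boolean combinations over a finite atom set up to logical equivalence, the family of history constraints is finite up to $\equiv$, as required. Since feedback freedom is a special case of bounded lookback (a feedback-free $(\BB,\CC)$ has $K$-bounded lookback for a suitable $K$), this also settles \thmref{feedback free}.

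The hard part will be the induction in the previous paragraph. One must fix an elimination order that keeps the graph ``peeling'' in a controlled way and check that the fill-in literals created when a class is eliminated — which link its former neighbours — cannot lengthen the dependency chains, the point being that a fill-in literal only ever \emph{shortcuts} an existing path, so the relevant distances do not grow. Two side issues have to be handled as well: disequality literals $e \ne e'$, whose elimination forces a case split and hence turns the output into a disjunction (harmless, since the atoms remain within $\mc A$ and only finiteness up to $\equiv$ is needed), and, for domain $\mathbb Z$, the divisibility predicates introduced by Presburger quantifier elimination, whose moduli must also be bounded in terms of $K$ for the finiteness argument to go through.
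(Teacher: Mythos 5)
You take a genuinely different route from the paper, and the route you chose hides the entire difficulty in a step that you defer and that I do not believe goes through as stated. The paper's proof never performs arithmetic quantifier elimination at all. It defines $\Psi$ as the set of formulas with free variables $V$, quantifier depth at most $K\cdot|V|$, and atoms drawn from the fixed finite vocabulary consisting of $\CC$, $C_{\alpha_0}$ and the guards of $\BB$; such a set is finite up to logical equivalence for purely syntactic reasons (boundedly many variables in scope, hence finitely many atoms, hence finitely many inequivalent Boolean/quantifier combinations). It then shows by induction on $\sigma$ that every history constraint is \emph{equivalent to} a member of $B\times\Psi$: after collapsing the equality literals, bounded lookback lets one discard the quantified instances and literals that are irrelevant to the free variables $V$, leaving a formula of quantifier depth at most $K\cdot|V|$ over the same vocabulary. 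Finiteness of the history set then follows from \lemref{simple finite summary}, exactly as you intend, but without ever touching coefficients, constants, disequality case splits, or divisibility moduli. Your opening reduction (invoke \lemref{simple finite summary}, exhibit finitely many history constraints up to $\equiv$) and your normal form for $\hist(\sigma,\cseq C)$ match the paper; everything after that diverges.

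The gap is your central claim that every literal produced during Fourier--Motzkin/Presburger elimination is a combination of at most $g(K)$ pool literals. Bounded lookback bounds the length of \emph{acyclic} paths in $\eqc{G_{\sigma,\CC}}$; it bounds neither the degree of a node in the collapsed graph (an equivalence class created by inertia equalities $y_i = y_{i+1}$ can span the whole run and be adjacent to unboundedly many other classes) nor the number of nodes in a connected component (a star has acyclic paths of length $2$ and arbitrarily many nodes). A Fourier--Motzkin derivation of an output literal is a binary tree whose internal nodes are eliminated variables, and consecutive eliminated variables along a root-to-leaf branch are linked by \emph{walks}, not simple paths, in the original graph; in a star, eliminating the centre first already yields derivations whose depth grows with the number of leaves, so coefficients are not controlled by $K$ alone. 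You acknowledge that the elimination order must be chosen carefully and that fill-in literals must not lengthen chains, but that is precisely the lemma that would need to be proved, and it is not clear it holds for general bounded lookback (as opposed to the feedback-free case, where $\eqc{G_{\sigma,\CC}}$ is a tree of bounded depth). I would recommend abandoning the coefficient-bounding machinery and instead arguing, as the paper does, that the history constraint is equivalent to a formula of bounded quantifier depth over a fixed finite vocabulary; that single observation replaces your entire induction on elimination steps and also disposes of the disequality and $\mathbb Z$-modulus side issues.
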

\begin{proof}[Proof (sketch)]
Let $\Psi$ be the set of formulas with free variables $V$, 
quantifier depth at most $K\cdot |V|$, and vocabulary $\CC$, $C_{\alpha_0}$, and
guards of $\BB$. As the quantifier depth is bounded and the set of atoms in the vocabulary is finite, $\Psi$ is finite up to equivalence.
Induction on $\sigma$ shows that $B \times \Psi$ is a history set%
:
If $\sigma$ is empty, $\hist(\sigma, \cseq C)$
is quantifier free and has all atoms in $C_{\alpha_0}$, hence it is in $\Psi$.
Otherwise, by induction hypothesis,
$\hist(\sigma|_n, \cseq C|_n)$ is equivalent to some $\phi \in \Psi$, so 
$\hist(\sigma, \cseq C) \equiv \exists \vec U. \phi(\vec U) \wedge \chi =:\phi'$ for some quantifier free $\chi$.
Let $\eqc{\phi'}$ be the formula that is obtained from $\phi'$ by eliminating all equality literals $x\,{=}\,y$, and substituting all variables in an equivalence class by a representative.
As $\eqc{\phi'}$ encodes $\eqc{G_{\sigma,\CC}}$ and
$(\BB,\CC)$ has $K$-bounded lookback, $\eqc{\phi'}$ is equivalent to a formula that has quantifier depth at most $K\cdot |V|$.
Hence, $\Psi$ must contain a formula equivalent to $\eqc{\phi'}$.
\end{proof}

\noindent
Note that all acyclic DDSAs have bounded lookback, for $K$ the number of states.
For feedback-free systems, \cite[Lem. 5.4]{DDV12} shows that 
$\eqc{G_{\sigma,\CC}}$ is a tree of depth at most $|V|$, so that \thmref{feedback free} follows from:

\begin{lemma}\label{lem:feedback free bounded lookback}
If $(\BB, \CC)$ is feedback-free then it has $2|V|$-bounded lookback.
\end{lemma}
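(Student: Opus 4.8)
The plan is to connect the combinatorial characterization of feedback freedom from \cite{DDV12} with the bounded-lookback condition defined above, so that \thmref{feedback free} is an immediate corollary of \thmref{bounded lookback}. Concretely, I would fix a symbolic run $\sigma$ of $\BB$ of some length $n$ and consider the collapsed computation graph $\eqc{G_{\sigma,\CC}}$, whose vertices are the equivalence classes $\eqc{x_i}$ of the relation $\equiv_E$ and whose edges come from the non-equality literals of the transition formulas (together with verification constraints). The goal is to bound the length of any acyclic path in $\eqc{G_{\sigma,\CC}}$ by $2|V|$, using feedback freedom.

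First I would invoke \cite[Lem. 5.4]{DDV12}, which states that for a feedback-free system $\eqc{G_{\sigma,\CC}}$ is a \emph{tree} (more precisely a forest, but the relevant component is a tree) of depth at most $|V|$. Since in a tree any acyclic path is the concatenation of the unique path from one endpoint up to the least common ancestor and the unique path back down to the other endpoint, and each of these two half-paths has length at most the depth of the tree, any acyclic path in $\eqc{G_{\sigma,\CC}}$ has length at most $2|V|$. This gives a uniform bound $K = 2|V|$ that does not depend on $n$, which is exactly what the bounded-lookback definition requires. Thus $(\BB,\CC)$ has $2|V|$-bounded lookback, and then \thmref{bounded lookback} yields \property for $(\BB,\CC)$, establishing \thmref{feedback free}.

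The one point that needs care — and is really the crux — is checking that the definition of the computation graph used here matches the one in \cite{DDV12} closely enough that their Lemma 5.4 applies verbatim, in particular that the incorporation of verification constraints $C\subseteq\CC$ into the edges of $G_{\sigma,\CC}$ (via the conjunct $C(\vec V_k)$ in the transition formula) does not destroy the tree shape. Since verification constraints are applied at a single instant $k$ and only add edges among variable instances $v_k$ at that same instant, they can only merge or connect nodes within one ``layer''; I would argue that the feedback-freedom condition, which explicitly quantifies over $C\subseteq\CC$ in the edge relation, already accounts for these edges, so the \cite{DDV12} analysis goes through unchanged. If a discrepancy remained, the fallback is to note that adding the intra-layer edges can increase the depth of the collapsed tree by at most one per layer and thus the acyclic-path bound by at most a constant factor, which still yields \emph{some} bound $K$ linear in $|V|$ and suffices for bounded lookback. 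Either way, no new quantitative work beyond the tree-path observation is needed.
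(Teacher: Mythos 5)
Your proposal is correct and follows essentially the same route as the paper: the paper's proof likewise reduces the claim to the fact that the collapsed graph $\eqc{G_{\sigma,\CC}}$ is a tree of depth at most $|V|$ (re-deriving the argument of \cite[Lem.~5.4]{DDV12} by induction on the set of variables occurring in a connected component, so that the verification constraints are handled within the same induction) and then bounds acyclic paths by twice the tree depth. The only weak spot is your fallback remark --- a depth increase of ``one per layer'' would give a bound depending on the run length $n$, which would not establish bounded lookback --- but the fallback is unnecessary, since the paper's definitions of $G_{\sigma,\CC}$ and of feedback freedom already quantify over $C\subseteq\CC$, exactly as your primary argument assumes.
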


\noindent
For a fixed $K$, bounded lookback is decidable in a similar way as
feedback freedom \cite[Sec. 4.4]{DDV12}, by
enumerating all possible variable dependencies in symbolic runs of $\BB$.
While \cite{DDV12} discovered that LTL model checking is decidable for feedback-free systems,
the respective result---implied by \thmsref{model checking}{bounded lookback}---for the
larger class of DDSAs with bounded lookback is new.

\section{Modularity}
\label{sec:modularity}

In this section we show that a DDSA admits a finite summary if it is suitably 
decomposable into smaller systems that enjoy this property. As finite summary of
the subsystems may be due to different criteria \criref{mcs}--\criref{bounded lookback},
modularity results substantially extend applicability of our approach.
%
As an arbitrary splitting of a DDSA $\BB$ into subsystems with finite summary does
not imply that $\BB$ inherits the property, 
we consider two specific ways of decomposition
for a \mydds $\BB = \langle B, b_0, \AA, T, F, V, \alpha_0, \guard\rangle$:

\begin{definition}
\label{def:seq decomposable}
Suppose $B = B_1 \cup B_2$, $B_1\cap B_2 = \{b\}$,
and $T$ contains neither edges from $B_2$ to  $B_1$, nor from $B_1 \setminus \{b\}$ to $B_2 \setminus \{b\}$; 
Let $T_1$ and $T_2$ be the projections of $T$ to $B_1\times \AA$ and $B_2 \times \AA$, respectively. 
Then $\BB$ is \emph{sequentially decomposable} into the \myddss
$\BB_1 = \langle B_1, b_0, \AA, T_1, \{b\}, V, \alpha_0, \guard\rangle$ and
$\BB_2 = \langle B_2, b, \AA, T_2, F, V \cup U, \alpha_U, \guard\rangle$, where $\alpha_U$
is the assignment such that $\alpha_U(\vec V) = \vec U$, for some set of variables $U$
such that $|U| = |V|$ and $U$ is disjoint from $V$.
\end{definition}

\begin{definition}
\label{def:var decomposable}
Let $V = V_1\,{\uplus}\,V_2$ such that
all constraints in $\{\guard(a)\mid a \inn \AA\}\cup \CC$
are over $V_1$ or $V_2$.\\
Then $(\BB, \CC)$ is \emph{variable-decomposable} into 
$(\BB_1, \CC|_{V_1})$ and $(\BB_2, \CC|_{V_2})$
where
$\BB_i = \langle B, b_0, \AA, T, F, V_i, \alpha_0|_{V_i}, guard_i\rangle$,
and $\guard_i(a)$ is $\guard(a)$ if it is over $V_i$, and
$\top$ otherwise.
\end{definition}

\noindent
Both ways of decomposition give rise to a modularity result:

\begin{theorem}
\label{thm:decompose}
Let $\BB$ be a \mydds admitting a decomposition into $\BB_1$ and $\BB_2$ 
that is either
$(a)$ sequential and so that $(\BB_i, \CC)$ has finite summary $(\Phi_i, \equiv)$, or 
$(b)$ variable 
and so that $(\BB_i, \CC|_{V_i})$ has finite summary $(\Phi_i, \sim_i)$,
for some $\CC$ and both $i\in \{1,2\}$. Then $(\BB,\CC)$ \hasproperty.
\end{theorem}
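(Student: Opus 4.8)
The plan is to treat the two cases of \defsref{seq decomposable}{var decomposable} separately: in each, I exhibit an explicit summary $(\Phi,{\equivBC})$ for $(\BB,\CC)$, built from the given summaries $(\Phi_i,{\sim_i})$ of the subsystems, and check the clauses of \defref{finite summary}, using \lemref{history set} to certify that the candidate $\Phi$ is a history set and \lemref{simple finite summary} when ${\equivBC}$ can be taken to be $\equiv$.

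For the \emph{sequential} case $(a)$, the first step is a run-decomposition lemma: by the conditions on $T$ in \defref{seq decomposable}, every symbolic run $\sigma$ of $\BB$ is either a symbolic run of $\BB_1$, or splits as $\sigma_1\cdot\sigma_2$ where $\sigma_1$ is a symbolic run of $\BB_1$ ending in $b$ and $\sigma_2$ a symbolic run of $\BB_2$ starting in $b$, with no later return to $B_1\setminus\{b\}$. The second, technical step is to show, by induction on the length of $\sigma_2$, that the history constraints glue through the fresh copy $U$ of $V$: $\hist_\BB(\sigma,\cseq C)\equiv \exists\vec U.\,\hist_{\BB_1}(\sigma_1,\cseq C_1)(\vec U)\wedge\hist_{\BB_2}(\sigma_2,\cseq C_2)(\vec V,\vec U)$, where $\cseq C_1,\cseq C_2$ are the two parts of $\cseq C$ and the variables $U$ record the value of $V$ at the hand-over in $b$; the base case uses that the initial history constraint of $\BB_2$ is $(\vec V{=}\vec U)\wedge\bigwedge C_0$, and the step uses that $\update$ distributes over this nesting since $U$ is never written in $\BB_2$. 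I then set $\Phi := \Phi_1\cup\{(b',\,\exists\vec U.\,\phi_1(\vec U)\wedge\phi_2(\vec V,\vec U))\mid (b,\phi_1)\in\Phi_1,\ (b',\phi_2)\in\Phi_2\}$, which (taking $\Phi_1,\Phi_2$ to be finite history sets, harmless since their $\equiv$-quotients are finite) is finite, and verify the two conditions of \lemref{history set}: $(1)$ holds because $b_0\in B_1$ and $\BB_1$ has the same $\alpha_0$; $(2)$ is a case split on whether $b\goto{a}b'$ stays in $B_1$, makes the first step into $B_2$, or moves within $B_2$ — in the latter two cases the $\BB_1$-part $\phi_1$ stays frozen while only the $\BB_2$-part evolves, so $\update(\exists\vec U.\,\phi_1(\vec U)\wedge\phi_2(\vec V,\vec U),a)\wedge C\equiv \exists\vec U.\,\phi_1(\vec U)\wedge(\update_{\BB_2}(\phi_2,a)\wedge C)(\vec V,\vec U)$, and since $\Phi_2$ is a history set for $\BB_2$ it contains an equivalent of the inner formula. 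By \lemref{simple finite summary}, $(\Phi,{\equiv})$ is then a finite summary for $(\BB,\CC)$.

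For the \emph{variable} case $(b)$, the key observation is that, since every guard and every constraint of $\CC$ is over $V_1$ or over $V_2$, the transition formula of $\BB$ factors as $\Delta_a^\BB\equiv\Delta_a^{\BB_1}\wedge\Delta_a^{\BB_2}$ over the disjoint vocabularies $V_1,V_2$, whence $\update$ distributes: $\update_\BB(\phi_1\wedge\phi_2,a)\wedge C\equiv(\update_{\BB_1}(\phi_1,a)\wedge C^{(1)})\wedge(\update_{\BB_2}(\phi_2,a)\wedge C^{(2)})$, with $C^{(i)}$ the part of $C$ over $V_i$. An easy induction then gives that every history constraint of $(\BB,\CC)$ is equivalent to $\phi_1\wedge\phi_2$ for history constraints $\phi_i$ of $(\BB_i,\CC|_{V_i})$, so $\Phi := \{(b,\phi_1\wedge\phi_2)\mid (b,\phi_1)\in\Phi_1,(b,\phi_2)\in\Phi_2\}$ (cut down to genuine history-constraint representatives) is a history set for $\BB$. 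I then define $\phi_1\wedge\phi_2\equivBC\psi_1\wedge\psi_2$ iff both conjunctions are unsatisfiable, or both are satisfiable and $\phi_i\sim_i\psi_i$ for $i=1,2$, and take the representative of $\update_\BB(\phi_1\wedge\phi_2,a)\wedge C$ to be $[\update_{\BB_1}(\phi_1,a)\wedge C^{(1)}]_1\wedge[\update_{\BB_2}(\phi_2,a)\wedge C^{(2)}]_2$. Checking \defref{finite summary} is then routine: ${\equivBC}$ is decidable (satisfiability plus $\sim_1,\sim_2$) and has finitely many classes (one ``unsatisfiable'' class plus the product of the $\sim_i$-classes); it contains $\equiv$ on $\Phi$ since, for disjoint-variable conjunctions of individually satisfiable formulas, $\phi_1\wedge\phi_2\equiv\psi_1\wedge\psi_2$ forces $\phi_1\equiv\psi_1$ and $\phi_2\equiv\psi_2$; and equisatisfiability as well as the step-preservation clause are inherited componentwise, using that a disjoint-variable conjunction is satisfiable iff both conjuncts are.

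The main obstacle is the sequential case, specifically the interface bookkeeping: one must carefully justify that the summary of the $B_1$-prefix can be plugged into the summary of the $B_2$-suffix through the auxiliary variables $U$ and that this nesting survives further steps in $\BB_2$ — this is precisely where the restriction, in case $(a)$, to the relation $\equiv$ rather than an arbitrary $\sim_i$ is used, since $\sim_i$ need not be a congruence for substitution into such a context — and that the topological conditions on $T$ rule out any later interaction with $B_1\setminus\{b\}$ that would unfreeze $\phi_1$. In the variable case the only subtlety is the special treatment of unsatisfiable formulas in the definition of ${\equivBC}$, which is needed because componentwise equivalence of conjunctions can fail as soon as one conjunct is unsatisfiable.
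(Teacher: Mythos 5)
Your proposal is correct and follows essentially the same route as the paper's proofs of \lemsref{seq decompose}{var decompose}: the same candidate sets $\Phi$ (the $\exists\vec U$-nesting for the sequential case, the conjunction product for the variable case), the same reliance on \lemsref{history set}{simple finite summary}, and the same quantifier-shift and factorization arguments for $\update$. The one substantive difference is your refinement of ${\equivBC}$ in the variable case to identify all unsatisfiable conjunctions before comparing componentwise: this is a genuine (if minor) improvement, since the paper's componentwise relation $\phi_1\wedge\phi_2\sim\psi_1\wedge\psi_2$ iff $\phi_1\sim_1\psi_1$ and $\phi_2\sim_2\psi_2$ does not obviously contain $\equiv$ on $\Phi$ when one conjunct is unsatisfiable, a point the paper's proof passes over without comment.
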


\begin{proof}[Proof (sketch)]
$(a)$ For
$\Phi = \Phi_1 \cup \{ \exists \vec U.\:\varphi_1(\vec U) \wedge \varphi_2 \mid 
\varphi_1 \in \Phi_1\text{ and }\varphi_2 \in \Phi_2 \}$,
the pair $(\Phi, \equiv)$ is a finite summary.
$(b)$ We show that $\Phi = \{ \varphi_1\wedge \varphi_2 \mid 
\varphi_1 \in \Phi_1\text{ and }\varphi_2 \in \Phi_2 \}$ with $\sim_1$ and $\sim_2$ combined is a finite summary.
\end{proof}

\noindent
We conclude this section by showing that \thmref{decompose} allows
us to handle our motivating example \exaref{auction}. Note that 
decidability does not follow by any of the criteria 
\criref{mcs}--\criref{bounded lookback} alone.

\begin{example}
\label{exa:auction2}
The system $\BB$ of \exaref{auction} is variable decomposable
into a red GC-\mydds $\BB_1$ over $\{b,d\}$,
and a blue/green system $\BB_2$ over $\{o,s,t\}$. 
$\BB_2$ can in turn be sequentially split into a 
blue MC-\mydds $\BB_{21}$, and the green single-step
system $\BB_{22}$ having 1-bounded lookback.
By \thmref{decompose}, $\BB$ has finite summary because so do $\BB_{1}$, $\BB_{21}$, and $\BB_{22}$.
Then \thmref{model checking} applies to check that there is no witness for $\Diamond(\mathsf{sold} \wedge d\,{>}\,0 \wedge o\,{\leq}\,t)$ 
(so property $\psi$ in \exaref{auction} holds).
On the other hand, we can obtain a witness for 
$\Diamond(b\,{=}\,1 \wedge o\,{>}\,t \wedge \Diamond(\mathsf{sold} \wedge b\,{\neq}\,1))$,
showing that a bid above the threshold $t$ need not win.
\end{example}

\section{Conclusion}
\label{sec:implementation}

\smallskip
\noindent
\textbf{Implementation.}
We implemented our approach in the prototype \tool (arithmetic DDS analyzer),
available via a web interface (\texttt{\url{https://ltl.adatool.dev}})
where also source code and examples can be found.
\tool takes a \mydds $\BB$ and an LTL$_f$ formula $\psi$ and checks whether $\BB$
and the constraints $\CC$ in $\psi$
admit a finite summary
according to \criref{mcs}--\criref{bounded lookback}, or if $\BB$, $\CC$ is suitably
decomposable (cf. \secref{modularity}).
If \property is detected, \tool
visualizes the constraint graph, the NFA $\NN_\psi$,
and $\smash{\NN_\BB^\psi}$, then 
extracts a witness for $\psi$ if it exists (cf. \thmref{model checking}).
In the extended version~\cite{adax} we show results 
for 
relevant examples, including \exaref{auction}
and processes converted from Petri nets with data~\cite{MannhardtLRA16}.
\tool is written in Python and uses the Z3 SMT solver~\cite{Z3}.

\smallskip
\noindent
\textbf{Future work.}
We see many possibilities for extensions: 
we expect finite summary to cover further known decidable cases,
e.g. \myddss with integer periodicity constraints \cite{Demri06}; and flat systems with Presburger-definable loop
effects~\cite{BDD13}.
For the criteria \criref{mcs}--\criref{bounded lookback}, it would
be interesting to investigate the complexity bounds implied by our method.
Further decomposition results would be useful, too, e.g. 
forms of parallel execution.
Next, we want to study whether our techniques apply to branching-time properties,
as well as transition systems with full-fledged 
relational databases in the vein of \cite{DHLV18,CGGM20}.

\clearpage
\bibliography{references}
\clearpage
\appendix

\section{Proofs}

\subsection{\mydds{s} with Finite Summary}

The next result relates history constraints and symbolic runs to actual runs.
For a run $\rho$ of length $n$ and $i < n$, let 
$\rho|_i$ denote the $i$-step prefix of $\rho$, for $i \leq n$.

\begin{numberedlemma}{\ref{lem:abstraction}}
For any symbolic run $\sigma\colon b_0 \goto{a_1} b_1 \goto{a_2} \dots \goto{a_n} b_n$ and $\cseq C= \langle C_0,\dots, C_n\rangle$,
\begin{enumerate}[{label=(\arabic*)}]
\item
If $\sigma$ abstracts a run 
$\rho\colon (b_0, \alpha_0) \goto{a_1} \dots \goto{a_n} (b_n, \alpha_n)$
and $\alpha_i \models C_i$ for all $i$, $0\,{\leq}\,i\,{\leq}\,n$,
then $\alpha_n$ satisfies $\smash[t]{\hist(\sigma, \cseq C)}$; 
\item
If $\smash[t]{\hist(\sigma,\cseq C)}$
is satisfied by assignment $\alpha$ then there is a run 
$\rho\colon (b_0, \alpha_0) \goto{a_1} \dots \goto{a_n} (b_n, \alpha_n)$
that is abstracted by $\sigma$ such that $\alpha=\alpha_n$ and
$\alpha_i \models C_i$ for all $i$, $0\,{\leq}\,i\,{\leq}\,n$.
\end{enumerate}
\end{numberedlemma}
\begin{proof}
\mbox{}
\begin{enumerate}[{label=(\arabic*)}]
\item 
By induction on $n$.
If $n\,{=}\,0$ then $\sigma$ and $\rho$ must be empty.
As $\alpha\,{=}\,\alpha_0$ satisfies $C_0$ by assumption, 
$\alpha$ also satisfies 
$\hist(\sigma, \langle C_0\rangle) = \bigwedge (C_{\alpha_0} \cup C_0)$.
Otherwise, suppose $\sigma$ is a symbolic run 
$\sigma \colon b_0 \goto{*} b_n \goto{a} b_{n+1}$
that abstracts
$\rho \colon
(b_0, \alpha_0) 
\goto{*} (b_n, \alpha_n)
\goto{a} (b_{n+1}, \alpha_{n+1})
$, and let
$\cseq C\,{=}\, \langle C_0,\dots, C_n, C_{n+1}\rangle$.
Then $\sigma|_n$ also abstracts $\rho|_n$, 
so by the induction hypothesis $\alpha_n$ satisfies $\hist(\sigma|_n, \cseq C|_n)$.
By definition of a step,
the guard assignment $\beta$ given by $\beta(\vec V^r) = \alpha_n(\vec V)$ and
$\beta(\vec V^w) = \alpha_{n+1}(\vec V)$
satisfies $\guard(a)$.
For $X = V \setminus \vwrite(a)$, $g = \guard(a)$, and the formula
$\varphi = \hist(\sigma|_n, \cseq C|_n)$, we thus have
\begin{align*}
\hist(\sigma, \cseq C) &= \update(\varphi, a) \wedge C_{n+1}\\
&= \exists \vec U.\:\varphi(\vec U) \wedge \trans{a}(\vec U, \vec V) \wedge C_{n+1}\\
&= \exists \vec U.\:\varphi(\vec U) \wedge
(g \wedge\!\bigwedge_{v\in X}\! v^{w}\,{=}\,v^{r})(\vec U, \vec V) \wedge C_{n+1}
\end{align*}
Since $\alpha_{n+1}\models C_{n+1}$ by assumption, it follows that
$\alpha_{n+1}$ satisfies $\hist(\sigma, \cseq C)$, using the values 
$\alpha_n(\vec V)$ as
witnesses for the existentially quantified variables $\vec U$.
\item
By induction on $n$.
For $n\,{=}\,0$ we have
$\hist(\sigma,\cseq C) = \bigwedge (C_{\alpha_0} \cup C_0)$, 
which is only satisfied by $\alpha_0$ 
since $\alpha_0$ fixes all variables in $V$.
Any empty run has variable assignment $\alpha_0$, and $\alpha_0$ satisfies $C_0$, 
so the claim holds.
Now consider a symbolic run 
$\sigma \colon b_0 \goto{*} b_n \goto{a} b_{n+1}$ 
and $\cseq C= \langle C_0,\dots, C_n, C_{n+1}\rangle$
such that
$\hist(\sigma, \cseq C)$ is satisfied by an assignment $\alpha$. 
Since 
\begin{align*}
\hist(\sigma, \cseq C)
&= \update(\hist(\sigma|_n, \cseq C|_n), a) \wedge C_{n+1} \\
&= \exists \vec U. \hist(\sigma|_n, \cseq C|_n)(\vec U) \wedge \trans{a}(\vec U, \vec V) \wedge C_{n+1} 
\end{align*}
we have $\alpha \models C_{n+1}$ and
there must be an assignment $\alpha'$ with domain $U$ such that $\alpha' \cup \alpha$ satisfies 
$\hist(\sigma|_n, \cseq C|_n)(\vec U)$ and $\trans{a}(\vec U, \vec V)$. 
Let $\alpha_n$ be the assignment with domain $\vec V$ such that $\alpha_n(\vec V) = \alpha'(\vec U)$,
so $\alpha_n$ satisfies $\hist(\sigma|_n, \cseq C|_n)$.
Therefore, by the induction hypothesis $\sigma|_n$ abstracts a run
$\rho \colon
(b_0, \alpha_0) 
\goto{a_1} (b_1, \alpha_1)
\goto{a_2} \dots
\goto{a_n} (b_n, \alpha_n)
$ with final assignment $\alpha_n$, and such that 
$\alpha_i \models C_i$ for all $i$, $0\,{\leq}\,i\,{\leq}\,n$.
Let $\beta$ be the guard assignment such that $\beta(\vec V^r) = \alpha_n(\vec V) = \alpha'(\vec U)$ and
$\beta(\vec V^w) = \alpha(\vec V)$.
Since $\alpha' \cup \alpha$ satisfies $\trans{a}(\vec U, \vec V)$, $\beta$ satisfies $\trans{a}(\vec V^r, \vec V^w)$ and hence $guard(a)$.
Thus $\rho$ can be extended with a step $(b_n, \alpha_n)\goto{a} (b_{n+1}, \alpha_{n+1})$.
By definition of a step the assignment $\alpha_{n+1}$ coincides with $\alpha$.
Moreover, as $\alpha$ satisfies $C_{n+1}$,
$\alpha_i$ satisfies $C_i$ for all $i$, $0\,{\leq}\,i\,{\leq}\,{n+1}$.
This proves the claim.
\qedhere
\end{enumerate}
\end{proof}

\begin{numberedlemma}{\ref{lem:history set}}
$\Phi$ is a history set iff
(1) for all $C \subseteq \CC$, there is some $(b_0,\phi_0) \in \Phi$ 
such that $\phi_0 \equiv \bigwedge (C_{\alpha_0} \cup C)$, and
(2) for all $(b,\phi)\in\Phi$, $b \goto{a} b'$, and $C \subseteq \CC$,
there is some $(b',\phi')\in \Phi$ such that $\phi' \equiv \update(\phi,a) \wedge C$.
\end{numberedlemma}
\begin{proof}
($\Longleftarrow$)
Suppose $\Phi$ satisfies (1) and (2).
We show that for every history constraint $\hist(\sigma,\cseq C)$ of $\BB$ and $\CC$
where $\sigma$ has final state $b$ there is some $(b, \phi) \in \Phi$ with 
$\hist(\sigma,\cseq C) \equiv \phi$, by induction on $\sigma$.
If $\sigma$ is empty and $\cseq C = \langle C_0\rangle$ then by Condition (1) there is some $(b_0,\phi_0) \in \Phi$ such that
$\phi_0 \equiv \bigwedge (C_{\alpha_0} \cup C_0)$.
Otherwise, $\sigma$ is of the form $b_0 \goto{*} b_n \goto{a} b_{n+1}$, and 
$\cseq C= \langle C_0,\dots, C_n, C_{n+1}\rangle$.
By the induction hypothesis there is some $(b_n,\phi) \in \Phi$ such that
$\hist(\sigma|_n,\cseq C|_n) \equiv \phi$.
By Condition (2), there is hence some $(b_{n+1},\phi')\in \Phi$ such that $\phi' \equiv \update(\phi,a) \wedge C_{n+1} \equiv h(\sigma, \cseq C)$.

\noindent
($\Longrightarrow$)
Suppose $\Phi$ is a history set. We verify that Conditions (1) and (2) hold.
For (1), let $\sigma$ be the empty run.
For all $C\subseteq \CC$, there must be some $(b_0,\phi_0) \in \Phi$
such that $\hist(\sigma,\langle C\rangle) = \bigwedge (C_{\alpha_0} \cup C) \equiv \phi_0$, so Condition (1) is satisfied.
Next, let $(b,\phi)\in\Phi$.
By the assumption that every pair in $\Phi$ corresponds to a history constraint  of $\BB$ and $\CC$, there must be some $\hist(\sigma,\cseq C)$ such that $\sigma$ ends in $b$ and
$\hist(\sigma,\cseq C)\equiv \phi$.
Let $\sigma'$ be $\sigma$ extended with $b \goto{a} b'$, and $\cseq C'$ be $\cseq C$ with $C$ appended.
As $\Phi$ is a history set, there must be some $(b',\phi') \in \Phi$ such that
$\hist(\sigma',\cseq C')\equiv \phi'$.
Since $\hist(\sigma',\cseq C') = \update(\hist(\sigma, \cseq C), a) \wedge C \equiv \update(\phi, a) \wedge C$, also Condition (2) holds.
\end{proof}

\subsection{Checking the Existence of Witnesses}

Before proving correctness of our approach in \thmref{model checking}, we 
establish relevant properties of our NFA construction (\defref{NFA}).
To that end, we define more precise consistency notions.
Let $\Sigma' =  2^{S \cup\{\last, \neg \last\}}$.
\begin{definition}
\begin{compactenum}
\item
$\varsigma \inn \Sigma$ is \emph{consistent with step $i$} of a symbolic run $\sigma \colon b_0 \goto{a_1} b_2 \goto{a_1} \dots \goto{a_n} b_n$ if
either $i=0$ and $\varsigma$ is disjoint from $B \setminus \{b_0\}$, or
$\varsigma$ is consistent with $b_{i-1} \goto{a_{i}} b_{i}$.
\item
$\varsigma \inn \Sigma$ is \emph{consistent with step $i$} of a run
\begin{equation}
\label{eq:therun}
\rho\colon
(b_0, \alpha_0) \goto{a_1}
(b_1, \alpha_1) \goto{a_2} \dots 
\goto{} (b_n, \alpha_n)
\end{equation}
if $\alpha_i$ satisfies $constr(\varsigma)$, and $\varsigma$ is consistent
with step $i$ of the abstraction $\sigma$ of $\rho$.
\item 
$\varsigma \inn \Sigma'$ is \emph{$\lambda$-consistent with step $i$} of a run $\rho$
if $\varsigma$ is consistent with step $i$ of $\rho$, 
if $i<n$ then $\last \not\in \varsigma$, and 
if $i=n$ then $\neg \last \not\in \varsigma$.
\end{compactenum}
\end{definition}

\noindent
By definition, a word $\varsigma_0 \varsigma_1 \cdots \varsigma_n\in \Sigma^*$ is consistent
with a symbolic run $\sigma$ (run $\rho$) if $\varsigma_i$ is consistent with step $i$ of $\sigma$ ($\rho$) for all $i$, $0\,{\leq}\,i\,{\leq}\,n$.
\smallskip

We first note that the function $\delta$ is total in the sense that its result set
contains an entry that is consistent with any pair of an assignment $\alpha$ and
a run $\sigma$:

\begin{lemma}
\label{lem:delta total}
For every run $\rho$ of the form \eqref{therun}, every $i$, $0\leq i \leq n$,
and $\varphi \in \LBC \cup \{\top,\bot\}$,
there is some $(\inquotes{\varphi'}, \varsigma) \in \delta(\inquotes{\varphi})$ 
such that 
$\varsigma$ is $\lambda$-consistent with step $i$ of $\rho$.
\end{lemma}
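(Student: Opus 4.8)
The plan is to proceed by induction on the structure of $\varphi$, following the clauses defining $\delta$; we may assume $\varphi$ does not use the $\langle a\rangle$-modality, since $\delta$ is only ever applied after the preprocessing that rewrites each $\langle a\rangle\psi'$ to $\langle\cdot\rangle(a\wedge\psi')$. Before the induction I would record two elementary facts about $\lambda$-consistency that do essentially all the work. First, the empty label $\emptyset$ is $\lambda$-consistent with step $i$ of every run: it is disjoint from $B$, $\AA$ and $\CC$, it satisfies $\constr(\emptyset)=\top$ under $\alpha_i$, and it contains neither $\last$ nor $\neg\last$; similarly $\{\neg\last\}$ is $\lambda$-consistent with step $i$ whenever $i<n$, and $\{\last\}$ is $\lambda$-consistent with step $i$ whenever $i=n$. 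Second, if $\varsigma_1,\varsigma_2\in\Sigma'$ are both $\lambda$-consistent with step $i$ of $\rho$, then so is $\varsigma_1\cup\varsigma_2$: disjointness is preserved by union, $\constr(\varsigma_1\cup\varsigma_2)=\constr(\varsigma_1)\cup\constr(\varsigma_2)$ is still satisfied by $\alpha_i$, and the non-membership conditions on $\last$, $\neg\last$ are closed under union. None of the labels selected below ever contains both $\last$ and $\neg\last$, so all constructed symbols stay well formed.

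For the base cases, $\delta(\inquotes{\top})$ and $\delta(\inquotes{\bot})$ each contain exactly one tuple, whose label is $\emptyset$, and for an atom $p\in\CC\cup B\cup\AA$ the set $\delta(\inquotes{p})$ contains $(\inquotes{\bot},\emptyset)$; the first fact applies in each case. For a Boolean combination $\psi_1\odot\psi_2$ with $\odot\in\{\wedge,\vee\}$, I would apply the induction hypothesis to $\psi_1$ and $\psi_2$ to obtain tuples $(\inquotes{\psi_1'},\varsigma_1)\in\delta(\inquotes{\psi_1})$ and $(\inquotes{\psi_2'},\varsigma_2)\in\delta(\inquotes{\psi_2})$ with $\varsigma_1,\varsigma_2$ $\lambda$-consistent with step $i$; then $(\inquotes{\psi_1'\odot\psi_2'},\varsigma_1\cup\varsigma_2)$ lies in $\delta(\inquotes{\psi_1\odot\psi_2})$, and the second fact finishes the case (the simplification of $\psi_1'\odot\psi_2'$ does not affect the label). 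For $\langle\cdot\rangle\psi$ I would argue directly from $\delta(\inquotes{\langle\cdot\rangle\psi})=\{(\inquotes{\psi},\{\neg\last\}),(\inquotes{\bot},\{\last\})\}$: take the first tuple if $i<n$ and the second if $i=n$, then apply the first fact.

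The remaining cases $\Diamond\psi$, $\Box\psi$ and $\psi_1\until\psi_2$ each combine a label obtained from the induction hypothesis on a strict subformula with one read off directly from a $\langle\cdot\rangle$-clause (and, for $\Box$, from $\delta_\lambda$). The point to keep in mind is that $\langle\cdot\rangle\Diamond\psi$, $\langle\cdot\rangle\Box\psi$ and $\langle\cdot\rangle(\psi_1\until\psi_2)$ are \emph{not} subformulas of the formula being treated, so I would unfold them by the $\langle\cdot\rangle$-clause rather than by the induction hypothesis. For $\Diamond\psi$, where $\delta(\inquotes{\Diamond\psi})=\delta(\inquotes{\psi})\vee\delta(\inquotes{\langle\cdot\rangle\Diamond\psi})$, I would take a $\lambda$-consistent tuple from $\delta(\inquotes{\psi})$ by induction and one from $\delta(\inquotes{\langle\cdot\rangle\Diamond\psi})$ as in the previous case, and combine their labels by the second fact. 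For $\Box\psi=\delta(\inquotes{\psi})\wedge(\delta(\inquotes{\langle\cdot\rangle\Box\psi})\vee\delta_\lambda)$, I would take a $\lambda$-consistent label $\varsigma_1$ from $\delta(\inquotes{\psi})$ by induction, and from $\delta(\inquotes{\langle\cdot\rangle\Box\psi})\vee\delta_\lambda$ pick, if $i<n$, the tuple whose label is $\{\neg\last\}\cup\{\neg\last\}=\{\neg\last\}$, and if $i=n$ the one with label $\{\last\}\cup\{\last\}=\{\last\}$; the union with $\varsigma_1$ is again $\lambda$-consistent. The case $\psi_1\until\psi_2$ is entirely analogous, combining $\lambda$-consistent labels drawn from $\delta(\inquotes{\psi_1})$, $\delta(\inquotes{\psi_2})$ and $\delta(\inquotes{\langle\cdot\rangle(\psi_1\until\psi_2)})$.

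The one step requiring genuine care, and the only place an error could slip in, is the bookkeeping in this last group of cases: one must resist applying the induction hypothesis to the $\langle\cdot\rangle$-guarded recursive occurrences (which would destroy well-foundedness), and within $\delta(\inquotes{\langle\cdot\rangle\Box\psi})\vee\delta_\lambda$ one must make the $i<n$ versus $i=n$ choice coherently in both factors so that the resulting label does not mention both $\last$ and $\neg\last$. Everything else follows routinely from the two facts stated at the outset.
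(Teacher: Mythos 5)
Your proof is correct and follows essentially the same route as the paper, which disposes of the lemma with a one-line structural induction on $\varphi$; you have simply filled in the details (the closure of $\lambda$-consistency under union, the case split on $i<n$ versus $i=n$, and the careful unfolding of the $\langle\cdot\rangle$-guarded occurrences rather than recursing on them). No gaps.
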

\begin{proof}
By induction on the structure of $\varphi$. The claim is easy to check for every base
case of the definition of $\delta$, and in all other cases it follows from the induction hypothesis.
\end{proof}

\smallskip
We next show a crucial feature of the $\delta$ function,
namely that it preserves and reflects the property of a run satisfying a formula.
Both directions are proven by laborious but straightforward induction proofs on the formula
structure.

\begin{lemma}
\label{lem:delta}
Let $\varphi \in \LBC \cup \{\top,\bot\}$,
$\rho$ a run of the form \eqref{therun}, and $0\,{\leq}\,i\,{\leq}\,n$.
Then
$\rho,i \models \varphi$ holds if and only if
there is some $(\inquotes{\varphi'}, \varsigma)\in \delta(\inquotes{\varphi})$ such that\\
\noindent
\begin{tabular}{@{\ }r@{\ }p{8cm}}
$(a)$ & $\varsigma$ is $\lambda$-consistent with step $i$ of $\rho$, \\
$(b)$ &either $i<n$ and $\rho,i{+}1 \models \varphi'$, or $i\,{=}\,n$ and $\varphi' = \top$.
\end{tabular}
\end{lemma}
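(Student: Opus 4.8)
The plan is to prove both directions simultaneously by structural induction on the formula $\varphi$, following the equations that define the auxiliary function $\delta$. In the base cases $\varphi\in\{\top,\bot\}$ and $\varphi = p$ for an atom $p\in\CC\cup B\cup\AA$, the equivalence will be read off directly from the definition of $\delta$ and of $\lambda$-consistency, using the preprocessing convention on action atoms: $\rho,i\models p$ amounts to a condition on $\alpha_i$ (if $p\in\CC$), on the state $b_i$ (if $p\in B$), or on the action $a_i$ (if $p\in\AA$), which is exactly what makes $\{p\}$ $\lambda$-consistent with step $i$; moreover the only tuple of $\delta(\inquotes{p})$ able to satisfy clause (b) is $(\inquotes{\top},\{p\})$, since $(\inquotes{\bot},\emptyset)$ never does.

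For the Boolean cases $\varphi = \psi_1\odot\psi_2$, $\odot\in\{\wedge,\vee\}$, I would use that every tuple of $\delta(\inquotes{\varphi})=\delta(\inquotes{\psi_1})\odot\delta(\inquotes{\psi_2})$ has the shape $(\inquotes{\psi_1'\odot\psi_2'},\varsigma_1\cup\varsigma_2)$ with $(\inquotes{\psi_j'},\varsigma_j)\in\delta(\inquotes{\psi_j})$. The step rests on two simple facts: (i) a union $\varsigma_1\cup\varsigma_2$ is $\lambda$-consistent with step $i$ exactly when both $\varsigma_1$ and $\varsigma_2$ are, since each requirement ($\alpha_i\models\constr(\cdot)$, disjointness from $\AA\setminus\{a_i\}$ and from $B\setminus\{b_i\}$, and the $\last$/$\neg\last$ condition) is preserved and reflected by union, the last one because at a fixed step all $\lambda$-consistent symbols agree on which of $\last,\neg\last$ they may contain; and (ii) the simplifications applied to $\psi_1'\odot\psi_2'$ preserve logical equivalence, so clause (b) is unaffected, and at $i=n$, where $\psi_1'=\psi_2'=\top$, the combined formula simplifies to $\top$. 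For the $(\Rightarrow)$ direction of $\vee$, where only one disjunct holds at $i$, I would additionally invoke \lemref{delta total} to supply a $\lambda$-consistent tuple for the other disjunct.

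For the modal and temporal connectives I would unfold $\delta$ by one step. The case $\langle\cdot\rangle\psi$ is direct: the tuple $(\inquotes{\psi},\{\neg\last\})$ is $\lambda$-consistent with step $i$ only if $i<n$, in which case clause (b) becomes $\rho,i{+}1\models\psi$, matching the semantics, while $(\inquotes{\bot},\{\last\})$ never satisfies (b). The cases $\Diamond\psi$, $\Box\psi$ and $\psi_1\until\psi_2$ then combine the induction hypothesis for the immediate subformulas with this $\langle\cdot\rangle$-reasoning applied to $\langle\cdot\rangle\Diamond\psi$, $\langle\cdot\rangle\Box\psi$ and $\langle\cdot\rangle(\psi_1\until\psi_2)$, whose $\delta$-values expand without further recursion, so the structural induction stays well founded; the result formulas such as $\Box\psi$ that reappear in these tuples are treated as black boxes, their truth at $i{+}1$ being supplied by the semantic recursion rather than by an induction hypothesis.

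The step I expect to be the main obstacle is the $\Box$ case, and symmetrically the last-position branch of $\until$, because the semantic disjunction ``$i=n$ or $\rho,i{+}1\models\Box\psi$'' is not mirrored by a surviving formula disjunct but is encoded through the $\last$ markers and the gadget $\delta_\lambda=\{(\inquotes{\top},\{\last\}),(\inquotes{\bot},\{\neg\last\})\}$. Concretely, at $i=n$ one must pick $(\inquotes{\bot},\{\last\})$ from $\delta(\inquotes{\langle\cdot\rangle\Box\psi})$ together with $(\inquotes{\top},\{\last\})$ from $\delta_\lambda$, merge them with the tuple obtained for $\psi$ by the induction hypothesis, which at $i=n$ carries result formula $\top$, and verify that the union of symbols stays $\lambda$-consistent and the combined formula collapses to $\top$; at $i<n$ one instead takes the $\{\neg\last\}$-tuples, and the combined formula simplifies to $\psi'\wedge\Box\psi$ with $\rho,i{+}1\models\psi'\wedge\Box\psi$. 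The soundness of this encoding hinges on the fact that any symbol containing both $\last$ and $\neg\last$ fails $\lambda$-consistency, so incompatible branches of the computation are discarded automatically, exactly as the transition relation of $\NFApsi$ discards them in \defref{NFA}. Beyond this bookkeeping, the argument is a routine but lengthy case analysis.
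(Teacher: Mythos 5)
Your plan is correct and follows essentially the same route as the paper's proof: structural induction on $\varphi$, with \lemref{delta total} supplying the missing disjunct in the $\vee$ case, the temporal operators handled by one-step unfolding of $\delta$ through the $\langle\cdot\rangle$ tuples, and the $\Box$ case resolved via $\delta_\lambda$ and the $\last$/$\neg\last$ markers exactly as you describe. The only (inessential) difference is that the paper runs the two directions as separate inductions rather than one simultaneous induction.
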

\begin{proof}
($\Longrightarrow$)
We first note that if $\varphi' = \top$ then $(b)$ holds for both $i<n$ and $i=n$ ($\star$).
The proof is by induction on $\varphi$.
\begin{compactitem}
\item
If $\varphi = \top$, we can choose $(\inquotes{\varphi'},\varsigma) = (\top, \emptyset)$.
\item
If $\rho,i \models p$ for some $p \inn \AA \cup B$, 
we must have either $p = b_i$, or 
$p = a_i$ and $i>0$. By choosing
$(\inquotes{\top},\{p\})\in\delta(\inquotes{p})$, consistency holds and we use ($\star$)
for $(b)$.
\item
If $\rho,i \models c$ for some $c \in \CC$, we may take
$(\inquotes{\top}, \{c\})\in \delta(\inquotes{c})$.
As $\rho,i \models c$, $\alpha_i$ satisfies 
$c$, so consistency holds and we use ($\star$)
for $(b)$.
\item 
Suppose $\rho,i \models \langle \cdot\rangle \varphi$, so that
$i\,{<}\,n$ and $\rho,i{+}1 \models \varphi$.
For
$(\inquotes{\varphi}, \{\neg \last\}) \in \delta(\inquotes{\langle \cdot\rangle \varphi})$,
part $(a)$ holds since 
$\neg \last \in \varsigma$ and $i\,{<}\,n$, and
$(b)$ because of $\rho,i{+}1 \models \varphi$.
\item 
Suppose $\varphi = \varphi_1\wedge \varphi_2$.
By assumption $\rho,i \models \varphi_1\wedge \varphi_2$, and hence 
$\rho,i \models \varphi_1$ and $\rho,i \models \varphi_2$.
By the induction hypothesis, there are
$(\inquotes{\varphi_1'}, \varsigma_1) \in \delta (\inquotes{\varphi_1})$ and
$(\inquotes{\varphi_2'}, \varsigma_2) \in \delta (\inquotes{\varphi_2})$ such that 
for both $k\in \{1,2\}$,
$(a')$ $\varsigma_k$ is consistent with step $i$ of $\rho$ and
$(b')$ either $\rho,i{+}1 \models \varphi_k'$, or $i\,{=}\,n$ and $\varphi_k' = \top$.
By definition of $\delta$,  we can choose 
$(\inquotes{\varphi_1' \wedge \varphi_2'}, \varsigma_1 \cup \varsigma_2) \in 
\delta(\inquotes{\varphi_1 \wedge \varphi_2})$.
Then
$(a)$ follows from $(a')$ and $\varsigma = \varsigma_1 \cup \varsigma_2$, and
$(b)$ if $i=n$ then $(b')$ implies $\varphi' = \top$, and otherwise
$\rho,i{+}1 \models \varphi_1' \wedge \varphi_2'$.
\item 
Suppose $\varphi = \varphi_1\vee \varphi_2$.
By assumption $\rho,i \models \varphi_1\vee \varphi_2$, and hence 
$\rho,i \models \varphi_1$ or $\rho,i \models \varphi_2$.
We assume the former.
By the induction hypothesis, there is some
$(\inquotes{\varphi_1'}, \varsigma_1) \in \delta (\inquotes{\varphi_1})$ such that 
$(a')$ $\varsigma_1$ is consistent with step $i$ of $\rho$, and
$(b')$ $\rho,i{+}1 \models \varphi_1'$, or $i\,{=}\,n$ and $\varphi_1' = \top$.
As $\delta$ is total (\lemref{delta total}), there must be some 
$(\inquotes{\varphi_2'}, \varsigma_2) \in \delta (\inquotes{\varphi_2})$ such that
$\varsigma_2$ is $\lambda$-consistent with step $i$ of $\rho$.
By definition of $\delta$,  we can choose 
$(\inquotes{\varphi'},\varsigma)$ as 
$(\inquotes{\varphi_1' \vee \varphi_2'}, \varsigma_1 \cup \varsigma_2) \in 
\delta(\inquotes{\varphi_1 \vee \varphi_2})$.
Then
$(a)$ follows from $(a')$ and $\varsigma_2$ being $\lambda$-consistent with step $i$ of $\rho$, and
$(b)$ if $i\eqn n$ then $(b')$ implies $\varphi_1'\eqn \top$, hence $\varphi' \eqn \top$; otherwise $\rho,i{+}1 \models \varphi_1' \vee \varphi_2'$.
\item 
Suppose $\rho,i \models \Diamond \varphi$, so either 
(1) $\rho,i \models \varphi$, or 
(2) $i\,{<}\,n$ and $\rho,i{+}1 \models \Diamond \varphi$.
We have $\delta(\inquotes{\Diamond \varphi}) = \delta(\inquotes{\varphi}) \vee \delta(\inquotes{\langle\cdot\rangle \Diamond \varphi})$.
In case (1), by the induction hypothesis there is some
$(\inquotes{\varphi'}, \varsigma') \in \delta (\inquotes{\varphi})$ such that
$(a_1)$ $\varsigma'$ is $\last$-consistent with step $i$ of $\rho$, and
$(b_1)$ $\rho,i{+}1 \models \varphi'$, or $i\,{=}\,n$ and $\varphi' = \top$.
In case (2), let $(\inquotes{\varphi'}, \varsigma')$ be 
$(\inquotes{\Diamond \varphi}, \{\neg \last\}) \in 
\delta(\inquotes{\langle\cdot\rangle \Diamond \varphi})$, so that
$(a_2)$ $\varsigma'$ is $\last$-consistent with step $i$ of $\rho$ because $i<n$, and
$(b_2)$ $\rho,i{+}1 \models \varphi$.
In either case, the claim follows from $(a_k)$ and $(b_k)$ by a similar
reasoning as in the case for disjunction.
\item
Suppose $\rho,i \models \Box \varphi$, so $\rho,i \models \varphi$ and either 
(1) $i=n$, or 
(2) $\rho,i{+}1 \models \Box \varphi$.
We have 
$\delta(\inquotes{\Box \varphi}) = \delta(\inquotes{\varphi}) \wedge (\delta(\inquotes{\langle\cdot\rangle \Box \varphi}) \vee \delta_\last) = 
(\delta(\inquotes{\varphi}) \wedge \delta(\inquotes{\langle\cdot\rangle \Box \varphi}))
\vee (\delta(\inquotes{\varphi}) \wedge \delta_\last)$.
In either case, by the induction hypothesis there is some
$(\inquotes{\varphi'}, \varsigma') \in \delta (\inquotes{\varphi})$ such that
$(a')$ $\varsigma'$ is $\last$-consistent with step $i$ of $\rho$, and
$(b')$ $\rho,i{+}1 \models \varphi'$, or $i\,{=}\,n$ and $\varphi' = \top$.

(1) Let
$(\inquotes{\varphi_1},\varsigma_1)$ be
$(\inquotes{\varphi' \wedge \top}, \varsigma' \cup \{ \last \}) \in 
\delta(\inquotes{\varphi}) \wedge \delta_\last$. We have
$(a_1)$ $\varsigma_1$ is $\last$-consistent with step $i$ of $\rho$ because of $(a')$ and $i\eqn n$, and
$(b_1)$ $\rho,i{+}1 \models \varphi_1 = \top$ by $(b')$.

(2) Let 
$(\inquotes{\varphi_2},\varsigma_2)$ be
$(\inquotes{\varphi' \wedge \Box \varphi}, \varsigma' \cup \{ \neg \last \}) \in 
\delta(\inquotes{\varphi}) \wedge \delta(\inquotes{\langle\cdot\rangle \Box \varphi})$.
Then
$(a_2)$ $\varsigma_2$ is $\last$-consistent with step $i$ of $\rho$ by $(a')$ and $i<n$, and
$(b_2)$ $\rho,i{+}1 \models \varphi_2 = \varphi' \wedge \Box \varphi$, using $(b')$ and $\rho,i{+}1 \models \Box \varphi$.
Thus the two cases can be combined as in the case for disjunction,
using $(a_1)$, $(b_1)$ and $(a_2)$, $(b_2)$.
\item The case for the $\until$ operator is similar.
\end{compactitem}
\noindent
($\Longleftarrow$)
Note that the assumptions exclude $\phi' = \bot$.
We apply induction on $\varphi$, and use the definition of $\delta$ for each case.
\begin{compactitem}
\item 
If $\varphi = \top$ then $(\inquotes{\varphi'}, \varsigma)\in \delta(\inquotes{\varphi})$ implies $\phi' = \top$, and 
$\rho, i \models \top$ holds.
\item
If $\varphi\,{=}\,c\in \CC$, we must have
$\varphi' = \top$ and $\varsigma =\{c\}$.
As $\alpha_{i}$ satisfies $constr(\varsigma)$ by $\last$-consistency, $\rho,i \models c$.
\item
If $\varphi = p$ for $p \in B \cup \AA$, we must have
$\varphi' = \top$ and $\varsigma = \{p\}$.
By consistency, either 
$i=0$ and $\varsigma$ is disjoint from $B \setminus \{b_0\}$, or
$i>0$ and $\varsigma$ is consistent with $b_{i-1} \goto{a_{i}} b_{i}$.
In the former case, we must have $p=b_0$, 
so $\rho,0 \models p$.
Otherwise, by consistency either $p = b_{i}$ or $p = a_i$.
In the former case $\rho,i \models b_i$ clearly holds.
If $p = a_i$, this occurrence of $a_i$ must be the result of replacing a subformula
$\langle a_i\rangle \psi'$ in $\psi$ by
$\langle \cdot\rangle (a_i \wedge \psi')$, for some $\psi'$.
We have $\rho,i \models a_i$, equivalent to $\rho,i-1 \models \langle a_i\rangle \psi'$.
\item 
Let $\varphi = \langle \cdot \rangle \chi$.
As $\varphi'=\top$ or $\rho, i{+}1 \models \varphi'$, 
by definition of $\delta$ the only possibility is
$\varphi' = \inquotes{\chi}$ and $\varsigma = \{\neg \last\}$.
As $\varsigma$ is consistent with step $i$ of $\rho$ and $\neg \last \in \varsigma$,
we must have $i\,{<}\,n$, so $\rho, i{+}1 \models \varphi'$
and hence $\rho, i \models \varphi$ by \defref{witness}.
\item
If $\varphi = \varphi_1 \wedge \varphi_2$ then 
by $(\inquotes{\varphi'}, \varsigma) \in \delta(\inquotes{\varphi})$
and the definition of $\delta$ there are $\varphi_1'$ and $\varphi_2'$ such that
$(\inquotes{\varphi_1'}, \varsigma_1')\in \delta(\inquotes{\varphi_1})$ and $(\inquotes{\varphi_2'}, \varsigma_2')\in \delta(\inquotes{\varphi_2})$, and
$\varphi' = \varphi_1' \wedge \varphi_2'$ and $\varsigma = \varsigma_1' \cup \varsigma_2'$.
Therefore, either $i=n$ and $\varphi' = \varphi_1' = \varphi_2' = \top$, or
$i<n$ and $\rho, i{+}1 \models \varphi'$, which implies
$\rho, i{+}1 \models \varphi_1'$ and $\rho, i{+}1 \models \varphi_2'$.
In either case, 
$\rho, i \models \varphi_1$ and $\rho, i \models \varphi_2$ 
hold by the induction hypothesis,
so $\rho, i \models \varphi_1 \wedge \varphi_2$.
\item
Similarly,
if $\varphi = \varphi_1 \vee \varphi_2$ then there are $\varphi_1'$ and $\varphi_2'$ such that
$(\inquotes{\varphi_1'}, \varsigma_1')\in \delta(\inquotes{\varphi_1})$ and $(\inquotes{\varphi_2'}, \varsigma_2')\in \delta(\inquotes{\varphi_2})$,
$\varphi' = \varphi_1' \vee \varphi_2'$ and $\varsigma = \varsigma_1' \cup \varsigma_2'$.
If $i=n$ and $\varphi' = \top$, then $\varphi_1' = \top $ or $\varphi_2' = \top$.
If otherwise $i<n$ then $\rho, i{+}1 \models \varphi'$ implies
$\rho, i{+}1 \models \varphi_1'$ or $\rho, i{+}1 \models \varphi_2'$.
From the induction hypothesis we obtain in either case
$\rho, i \models \varphi_1$ or $\rho, i \models \varphi_2$,
so $\rho, i \models \varphi_1 \vee \varphi_2$.
\item
If $\varphi = \Diamond \chi$ then by the definition of $\delta$, 
there are $\varphi_1'$ and $\varphi_2'$ such that
$(\inquotes{\varphi_1'}, \varsigma_1')\in \delta(\inquotes{\chi})$, 
$(\inquotes{\varphi_2'}, \varsigma_2')\in \delta(\inquotes{\langle\cdot\rangle\Diamond \chi})$,
$\varphi' = \varphi_1' \vee \varphi_2'$, and $\varsigma = \varsigma_1' \cup \varsigma_2'$.
If $i=n$ and $\varphi' = \top$, we must have $\varphi_1' = \top$, and
by the induction hypothesis, $\rho, i \models \chi$.
If $i<n$, since $\rho, i{+}1 \models \varphi' = \varphi_1' \vee \varphi_2'$, by \defref{witness} either
(1) $\rho, i{+}1 \models \varphi_1'$ or (2) $\rho, i{+}1 \models \varphi_2'$.
In case (1), by the induction hypothesis, $\rho, i \models \chi$.
In case (2),
$(\inquotes{\varphi_2'}, \varsigma_2')\in \delta(\inquotes{\langle\cdot\rangle\Diamond \chi})$
and $\rho, i{+}1 \models \varphi_2'$
implies $\varphi_2' = \Diamond \chi$ and we have $i < n$, so
$\rho, i \models \langle\cdot\rangle\Diamond \chi$ by \defref{witness}.
Either way, $\rho, i \models \Diamond \chi$ holds.
\item
If $\varphi = \Box \chi$ then we can distinguish two cases:

(1) There are $\varphi_1$ and $\varphi_2$ such that
$(\inquotes{\varphi_1'}, \varsigma_1)\in \delta(\inquotes{\chi})$, $(\inquotes{\varphi_2'}, \varsigma_2)\in \delta_{\last}$, 
$\varphi' = \varphi_1' \wedge \varphi_2'$ and $\varsigma = \varsigma_1 \cup \varsigma_2$.
As $(\inquotes{\varphi_2'}, \varsigma_2)\in \delta_{\last}$, we must have $\varphi_2' = \top$ and
$\varsigma_2 = \{\last\}$ (otherwise, we would have $\varphi'=\bot$).
By consistency, $\last \in \varsigma$ implies $i = n$, so $\varphi' = \top$
by assumption and therefore we must have $\varphi_1' = \top$.
From the induction hypothesis and
$(\inquotes{\varphi_1'}, \varsigma_1)\in \delta(\inquotes{\chi})$ 
we conclude $\rho, i \models \chi$, so by \defref{witness}
$\rho, i \models \Box \chi$.

(2)
There are $\varphi_1'$ and $\varphi_2'$ such that
$(\inquotes{\varphi_1'}, \varsigma_1)\in \delta(\inquotes{\chi})$, 
$(\inquotes{\varphi_2'}, \varsigma_2)\in \delta(\inquotes{\langle\cdot\rangle\Box \chi})$,
$\varphi' = \varphi_1' \wedge \varphi_2'$ and $\varsigma = \varsigma_1 \cup \varsigma_2$.
We have $\neg\last \in \varsigma_2$, so by consistency $i<n$.
As $\rho, i{+}1 \models \varphi' = \varphi_1' \wedge \varphi_2'$, by \defref{witness}  
$\rho, i{+}1 \models \varphi_1'$ and $\rho, i{+}1 \models \varphi_2'$.
By the induction hypothesis, 
$(\inquotes{\varphi_1'}, \varsigma_1)\in \delta(\inquotes{\chi})$ and
$\rho, i{+}1 \models \varphi_1'$
imply $\rho, i \models \chi$.
Moreover,
$(\inquotes{\varphi_2'}, \varsigma_2)\in \delta(\inquotes{\langle\cdot\rangle\Box \chi})$
and 
$\rho, i{+}1 \models \varphi_2'$ imply 
$\varphi_2' = \Box \chi$ by \defref{witness}, so we have
$\rho, i \models \langle\cdot\rangle\Box \chi$.
Thus $\rho, i \models \Box \chi$.
\item
The case for $\until$ is similar. \qedhere
\end{compactitem}
\end{proof}

Let a word $\varsigma_0 \varsigma_1 \cdots \varsigma_{n}\in \Sigma'^*$ be \emph{well-formed}
if $\last \not\in \varsigma_i$ for all $0\leq i < n$, and $\neg \last \not\in \varsigma_n$.

\begin{lemma}
\label{lem:deltastar}
A well-formed word $w\in \Sigma'^*$ that is consistent with a run $\rho$
satisfies $\inquotes{\top} \in \delta^*(\inquotes{\psi},w)$
iff
$\rho \models \psi$.
\end{lemma}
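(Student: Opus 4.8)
The plan is to obtain the statement as the word-level iteration of \lemref{delta}. Throughout, let $\rho$ be the run in \eqref{therun}, of length $n$; the claim asserts that $\rho \models \psi$ (i.e., $\rho,0 \models \psi$) holds iff there is \emph{some} well-formed word $w = \varsigma_0 \cdots \varsigma_n$ consistent with $\rho$ with $\inquotes{\top} \in \delta^*(\inquotes{\psi}, w)$. I would prove the following slightly more general invariant by induction on $n - i$: for every $\varphi \in \LBC \cup \{\top,\bot\}$ and every $i$ with $0 \le i \le n$, $\rho,i \models \varphi$ holds iff there is a word $\varsigma_i \cdots \varsigma_n$ with each $\varsigma_j$ $\last$-consistent with step $j$ of $\rho$ such that $\inquotes{\top} \in \delta^*(\inquotes{\varphi}, \varsigma_i \cdots \varsigma_n)$. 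Observe that ``$\varsigma_i \cdots \varsigma_n$ well-formed and consistent with the steps $i,\dots,n$ of $\rho$'' is equivalent to ``each $\varsigma_j$ is $\last$-consistent with step $j$'', so the lemma is the instance $i = 0$, $\varphi = \psi$. The two cases $\varphi \in \{\top,\bot\}$ are immediate: $\rho,i \models \top$ always, $\rho,i \models \bot$ never, and $\delta(\inquotes{\top}) = \{(\inquotes{\top},\emptyset)\}$, $\delta(\inquotes{\bot}) = \{(\inquotes{\bot},\emptyset)\}$.

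For the base case $i = n$, the word is a single symbol $\varsigma_n$ and $\inquotes{\top} \in \delta^*(\inquotes{\varphi}, \varsigma_n)$ holds iff $\delta(\inquotes{\varphi})$ contains some pair $(\inquotes{\top},\varsigma)$ fired by $\varsigma_n$, with $\varsigma$ necessarily $\last$-consistent with step $n$ of $\rho$; this is exactly \lemref{delta} instantiated at $i = n$, where option~$(b)$ collapses to ``$\varphi' = \top$'' precisely because $i = n$. For the inductive step $i < n$, I would unfold $\delta^*$ once: $\inquotes{\top} \in \delta^*(\inquotes{\varphi}, \varsigma_i \cdots \varsigma_n)$ holds iff there is $(\inquotes{\varphi'},\varsigma) \in \delta(\inquotes{\varphi})$ fired by $\varsigma_i$ with $\inquotes{\top} \in \delta^*(\inquotes{\varphi'}, \varsigma_{i+1} \cdots \varsigma_n)$. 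By the induction hypothesis (applied to $\varphi'$ at position $i{+}1$) the latter is equivalent to $\rho,i{+}1 \models \varphi'$, which is the $i<n$ branch of option~$(b)$ of \lemref{delta}; together with condition~$(a)$, which asks the fired label to be $\last$-consistent with step $i$ of $\rho$, this makes the two directions go through. From $\rho,i \models \varphi$, \lemref{delta} supplies a $\last$-consistent label $\varsigma_i$ and a $\varphi'$ with $\rho,i{+}1 \models \varphi'$; prepending $\varsigma_i$ to the word obtained from the induction hypothesis for $\varphi'$ at position $i{+}1$ gives the required word. Conversely, from such a word, the first transition provides some $(\inquotes{\varphi'},\varsigma) \in \delta(\inquotes{\varphi})$ with $\varsigma$ $\last$-consistent with step $i$, the induction hypothesis applied to the (again well-formed, consistent) suffix gives $\rho,i{+}1 \models \varphi'$, and \lemref{delta} then gives $\rho,i \models \varphi$.

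The only delicate point is the $\last$/$\neg\last$ bookkeeping. In the forward (word-building) direction one must note that the label returned by \lemref{delta} never contains $\last$ when $i < n$ --- because $\last$-consistency with a non-final step forbids $\last$ --- and never contains $\neg\last$ when $i = n$, so that the concatenated word is well-formed. In the backward direction one must observe that well-formedness of $\varsigma_i \cdots \varsigma_n$ passes to the suffix $\varsigma_{i+1} \cdots \varsigma_n$ and still yields $\last$-consistency of the individual symbols with their steps, so that both the induction hypothesis (for the suffix) and \lemref{delta} (for position $i$) apply. Once the invariant ``the label fired at step $j$ is $\last$-consistent with step $j$ of $\rho$'' is maintained, the argument is a routine lift of \lemref{delta} --- the ``laborious but straightforward induction'' the paper alludes to.
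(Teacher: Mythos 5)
Your proposal is correct and follows essentially the same route as the paper: both reduce the claim to an iteration of \lemref{delta} by induction on $n-i$, with the same observation that well-formedness plus consistency of the (suffix) word amounts to $\last$-consistency of each symbol with its step. The only cosmetic difference is that you package the two directions into a single biconditional invariant with an existential over suffix words, whereas the paper proves them separately (downward induction over the given witnessing formula sequence for one direction, explicit word construction for the other).
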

\begin{proof}
($\Longrightarrow$)
Let $w = \varsigma_0 \varsigma_1 \cdots \varsigma_{n}$ and
$\chi_0,\chi_1,\dots, \chi_{n+1}$
be the sequence of formulas witnessing
$\inquotes{\top} \in \delta^*(\inquotes{\psi},w)$,
so that $\chi_0 = \psi$, $\chi_{n+1} = \top$,
and $(\inquotes{\chi_{i{+}1}}, \varsigma_{i}) \in \delta(\inquotes{\chi_i})$
for all $i$, $0\leq i \leq n$.
As $w$ is well-formed and consistent with $\rho$, 
by definition $\varsigma_i$ is $\lambda$-consistent with 
$\rho$ at $i$ for all $i$, $0\leq i \leq n$.
In order to show that $\rho$ is a witness for $\psi$, we verify that
$\rho, i \models \chi_i$
for all $i$, $0\,{\leq}\,i\,{\leq}\,n$, by induction on $n-i$.
In the base case $i\,{=}\,n$.
We have $\chi_{n+1} = \top$ and 
$(\inquotes{\chi_{n{+}1}}, \varsigma_{n}) \in \delta(\inquotes{\chi_n})$,
and from \lemref{delta}\:($\Longrightarrow$) it follows that $\rho, n \models \chi_n$.
If $i < n$, we assume by the induction hypothesis that $\rho, i{+}1 \models \chi_{i{+}1}$.
We have
$(\inquotes{\chi_{i{+}1}}, \varsigma_{i}) \in \delta(\inquotes{\chi_i})$, so
$\rho, i \models \chi_i$ follows again from \lemref{delta}\:($\Longrightarrow$), which concludes the induction step.
Finally, the claim follows for the case $i\,{=}\,0$ because 
$\chi_0 = \psi$.

($\Longleftarrow$)
Let $\rho$ be of the form \eqref{therun}.
We show that for all $i$, $0 \leq i \leq n$,
and every formula $\chi$,
if $\rho, i \models \chi$ then 
there is a word $w_i = \varsigma_i \varsigma_{i+1} \cdots \varsigma_{n}$
of length $n-i+1$
such that
$\inquotes{\top} \in \delta^*(\inquotes{\chi},w_i)$,
and 
$\varsigma_j$ is $\lambda$-consistent with step $j$ of $\rho$ for all $j$, $i\leq j \leq n$.
The proof of is by induction on $n-i$.

In the base case where $i=n$, we assume that $\rho,n \models \chi$. 
By \lemref{delta}\:($\Longleftarrow$) there is some $\varsigma_n$ such that
$(\inquotes{\top}, \varsigma_n)\in \delta(\inquotes{\chi})$,
and $\varsigma_n$ is $\lambda$-consistent with step $n$ of $\rho$.
For the induction step, assume $i\,{<}\,n$ and $\rho, i \models \chi$.
By \lemref{delta}\:($\Longleftarrow$) there is some
$(\inquotes{\chi'}, \varsigma_i) \in \delta(\inquotes{\chi})$ such that
$\rho, i{+}1 \models \chi'$, and moreover
$\varsigma_i$ is $\lambda$-consistent with $\rho$ at step $i$.
By the induction hypothesis, 
there is a word $w_{i+1} = \varsigma_{i+1} \cdots \varsigma_{n}$
such that
$\inquotes{\top} \in \delta^*(\inquotes{\chi'},w_{i+1})$,
and 
$\varsigma_j$ is $\lambda$-consistent with $\rho$ at instant $j$ for all $j$, $i\,{<}\,j\,{\leq}\, n$.
Thus, we can define $w_{i} = \varsigma_i\varsigma_{i+1} \cdots \varsigma_{n}$,
which satisfies
$\inquotes{\top} \in \delta^*(\inquotes{\chi},w_{i})$ and 
$\varsigma_j$ is $\lambda$-consistent with $\rho$ at instant $j$ for all $j$, $i\,{\leq}\,j\,{\leq}\, n$.
This concludes the induction step.

By assumption, $\rho,0 \models \psi$  holds.
From the case $i = 0$ of the above statement, we obtain a word $w=w_0$
such that
$\inquotes{\top} \in \delta^*(\inquotes{\psi},w)$ and $w$ is
$\lambda$-consistent with all steps of $\rho$, i.e., $w$ is well-formed and consistent with $\rho$.
\end{proof}

\begin{lemma}
\label{lem:delta last}
Let $\psi \in \LBC$ and
$(\chi,\varsigma) \in \delta(\inquotes{\psi})$.
\begin{compactenum}
\item[(1)] If $\last \in \varsigma$ and $\neg \last \not\in\varsigma$
then $\chi = \top$ or $\chi = \bot$.
\item[(2)] Suppose $\neg \last \in \varsigma$, $\last \not\in\varsigma$,
and $\chi = \top$, and $\varsigma$ is consistent with some step $i$ of some run $\rho$.
Then there is some $(\top,\varsigma') \in \delta(\inquotes{\psi})$
such that $\neg \last\not\in \varsigma'$ and $\varsigma'$ is consistent with step $i$ of $\rho$ as well.
\item[(3)] If $\chi$ is not $\top$ or $\bot$ then
$\varsigma$ contains $\last$ or $\neg \last$.
\end{compactenum}
\end{lemma}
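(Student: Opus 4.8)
The plan is to prove all three parts by structural induction on the (preprocessed) formula $\psi$, i.e.\ the formula in which every $\langle a\rangle\psi'$ has been rewritten to $\langle\cdot\rangle(a\wedge\psi')$ as in the construction of $\NFApsi$; in each inductive step one peels off the outermost connective and decomposes the given $(\chi,\varsigma)\in\delta(\inquotes{\psi})$ according to the defining clause of $\delta$, using the simplification convention that $\top,\bot$ collapse under $\wedge,\vee$ as usual. I would establish the statements in the order (3), then (1), then (2), since (1) will invoke (3) and (2) follows from an auxiliary strengthening. The observation used throughout is that in every clause of $\delta$ for a modal formula ($\langle\cdot\rangle\chi'$, $\Diamond\chi'$, $\Box\chi'$, $\chi_1\until\chi_2$) the contribution of the ``next-step'' sub-gadget ranges over a fixed finite set, each of whose members already contains the marker $\last$ or $\neg\last$, and from which one can always pick the marker-variant whose formula component is $\bot$ (or $\top$ in the $\Box$ case, where it arises inside $\delta(\inquotes{\langle\cdot\rangle\Box\chi'})\vee\delta_\lambda$).

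For part (3): atoms (and $\top,\bot$) are vacuous, their $\varsigma$ being $\emptyset$ or a singleton symbol in $S=B\cup\AA\cup\CC$; the modal cases are immediate since the next-step part already forces a marker into $\varsigma\supseteq\varsigma_2$ irrespective of $\chi$; and for $\psi=\psi_1\odot\psi_2$ ($\odot\in\{\wedge,\vee\}$) a short case distinction on how $\chi_1\odot\chi_2$ simplifies reduces matters to the induction hypothesis, since if $\chi_1\odot\chi_2\notin\{\top,\bot\}$ then some $\chi_k\notin\{\top,\bot\}$, so $\varsigma\supseteq\varsigma_k$ carries a marker. Part (1) is analogous: for $\langle\cdot\rangle\chi'$, $\last\in\varsigma$ forces $(\chi,\varsigma)=(\inquotes{\bot},\{\last\})$; for $\Diamond,\Box,\until$ the hypothesis $\neg\last\notin\varsigma$ excludes every $\neg\last$-carrying variant of the next-step gadget, leaving only the one whose formula component is $\top$ or $\bot$, and the remaining first component lies in $\{\top,\bot\}$ by the induction hypothesis (if $\last\in\varsigma_1$) and by part (3) otherwise (as then $\varsigma_1$ carries no marker); for $\psi=\psi_1\odot\psi_2$ one fixes WLOG $\last\in\varsigma_1$, obtains $\chi_1\in\{\top,\bot\}$ by the induction hypothesis and $\chi_2\in\{\top,\bot\}$ by the induction hypothesis or part (3) as before, and concludes since $\{\top,\bot\}$ is closed under $\wedge,\vee$.

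For part (2) I would prove the auxiliary claim: \emph{for every $(\chi,\varsigma)\in\delta(\inquotes{\psi})$ there is $(\chi^\circ,\varsigma^\circ)\in\delta(\inquotes{\psi})$ with $\neg\last\notin\varsigma^\circ$, with $\varsigma^\circ\cap S=\varsigma\cap S$, and with $\chi^\circ=\top$ whenever $\chi=\top$}. Part (2) follows at once: under its hypotheses the claim yields $(\top,\varsigma^\circ)$ with $\neg\last\notin\varsigma^\circ$, and $\varsigma^\circ$ is consistent with step $i$ of $\rho$ because consistency of a symbol with a step depends only on its restriction to $S$ (satisfaction of $\constr(\cdot)$ and disjointness from subsets of $B\cup\AA$), which $\varsigma^\circ$ shares with the consistent $\varsigma$. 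The induction for the claim: if $\neg\last\notin\varsigma$, take $(\chi,\varsigma)$ itself; atoms never carry $\neg\last$; for $\wedge$ and $\vee$, apply the claim to both arguments and recombine, using that $\chi=\top$ forces both (for $\wedge$), resp.\ one (for $\vee$), argument to equal $\top$, which the induction hypothesis transports; for $\Diamond$ and $\until$, replace the next-step choice by $(\inquotes{\bot},\{\last\})$, which the outer $\vee$ absorbs ($\chi_1\vee\bot=\chi_1$, resp.\ $\chi_2'\vee(\chi_1'\wedge\bot)=\chi_2'$, so $\chi=\top$ forces the surviving component to be $\top$), then apply the claim to that component; for $\Box$, replace the $\delta(\inquotes{\langle\cdot\rangle\Box\chi'})\vee\delta_\lambda$-choice by $(\inquotes{\top},\{\last\})$ and apply the claim to the $\chi'$-component, using $\chi=\chi_1\wedge\top=\chi_1$.

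The main obstacle is pinning down the auxiliary claim for part (2) so that the induction closes: the bare existence of \emph{some} marker-free transition is too weak to survive recombination under $\vee$, so the claim must additionally carry the invariants ``preserves the value $\top$'' and ``agrees with $\varsigma$ on $S$'' — the former needed to re-propagate through the boolean operators, the latter needed both for that recombination and for transferring consistency to $\varsigma^\circ$. Once these are in place, the modal cases reduce to the bookkeeping fact that the $\last$-branch of each next-step gadget is exactly $(\inquotes{\bot},\{\last\})$ or $(\inquotes{\top},\{\last\})$ and thus collapses harmlessly inside the enclosing boolean operator; parts (1) and (3) are then routine, the only non-obvious point being the dependency of (1) on (3).
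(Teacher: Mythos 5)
Your proposal is correct, and for parts (1) and (3) it coincides with the paper's proof: a structural induction on $\psi$ driven by the clauses of $\delta$, with (1) falling back on (3) in exactly the subcase you identify (a conjunct/disjunct whose label carries neither marker). The genuine difference is in part (2). The paper proves all three statements simultaneously and, in the boolean cases of (2), handles the component that is \emph{not} covered by the induction hypothesis by invoking the totality lemma (\lemref{delta total}) to produce a $\lambda$-consistent companion tuple for the other disjunct/conjunct, then recombines. You instead prove a strengthened auxiliary claim — every $(\chi,\varsigma)$ has a $\neg\last$-free variant $(\chi^\circ,\varsigma^\circ)$ with $\varsigma^\circ\cap S=\varsigma\cap S$ that preserves the value $\top$ — and derive (2) as a corollary, since consistency with a step depends only on the restriction to $S$. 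This buys two things: it removes the dependence on \lemref{delta total} (whose guarantee is $\lambda$-consistency, i.e.\ marker-freeness relative to the position $i$, which is slightly weaker than the unconditional $\neg\last\not\in\varsigma'$ the recombination needs), and the $S$-agreement invariant makes the transfer of consistency automatic rather than re-argued per case. The cost is that you must carry both invariants through the induction, and in the $\until$/next-step cases you should be explicit that the claim is applied to \emph{all} surviving components (e.g.\ the $\psi_1$-conjunct of $\until$ may still contribute a $\neg\last$ even after the next-step gadget is replaced by $(\inquotes\bot,\{\last\})$) — but this is exactly the $\vee$/$\wedge$ recombination you already describe, so the argument closes.
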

\begin{proof}
All three statements are shown simultaneously by induction on $\psi$.
\begin{compactitem}
\item
If $\psi$ is $\top$, $\bot$, or an atom $p$ then $\chi$ is $\top$ or $\bot$, so (1) and (3) hold, and $\neg \last \not \in \varsigma$, so also (2) is satisfied.
\item
If $\psi = \langle \cdot\rangle \psi'$ then 
$\delta(\inquotes{\psi}) = \{(\inquotes{\psi'},\{\neg \last\}), (\inquotes{\bot}, \{\last\})\}$.
(1) is satisfied by $(\inquotes{\bot}, \{\last\})$,
(2) holds because $\psi'$ cannot be $\top$ since $\top$ does not occur in $\LBC$, and
(3) is satisfied anyway.
\item
If $\psi = \psi_1 \vee \psi_2$ then we must have $\chi = \chi_1\vee \chi_2$
such that $(\chi_i,\varsigma_i) \in \delta(\inquotes{\psi_i})$ for both 
$i\in \{1,2\}$, and $\varsigma = \varsigma_1 \cup \varsigma_2$.

(1)
Suppose $\last \inn \varsigma$ and $\neg \last \not\in \varsigma$.
First, assume $\last \in \varsigma_1$, 
$\neg \last \not\in \varsigma_1$,
and $\neg \last \not\in \varsigma_2$.
By the induction hypothesis (1), $\chi_1$ is either $\top$ or $\bot$.
In the former case, $\chi = \top$, so the claim holds.
Otherwise, $\chi = \chi_2$.
Then, if $\last \in \varsigma_2$ we can again use the induction hypothesis
to conclude that $\chi = \chi_2$ is $\top$ or $\bot$.
Otherwise, we have $\last \not\in \varsigma_2$ and
$\neg \last \not\in \varsigma_2$, so
$\chi_2$ must be $\top$ or $\bot$
by the induction hypothesis (3).

(2)
Suppose $\neg \last \inn \varsigma$, $\last \not\in \varsigma$, and $\chi=\top$,
and $\varsigma$ is consistent with step $i$ of $\rho$.
W.l.o.g., we can assume $\neg \last \in \varsigma_1$, 
$\last \not\in \varsigma_1$, $\last \not\in \varsigma_2$,
and $\chi_1 = \top$.
By the induction hypothesis (2) applied to $\chi_1$, there is some
$(\top,\varsigma_1') \in \delta(\inquotes{\psi_1})$
such that $\neg \last \not\in\varsigma_1'$ and $\varsigma_1'$ is consistent with step $i$ of $\rho$.
By \lemref{delta total}, there is some 
$(\chi_2',\varsigma_2') \in \delta(\inquotes{\psi_2})$ such that
$\varsigma_2'$ is consistent with step $i$ of $\rho$, and such that $\neg \last \not\in \varsigma_2'$.
Thus $(\top,\varsigma') \in \delta(\inquotes{\psi})$
with $\varsigma' = \varsigma_1'\cup \varsigma_2'$ satisfies the claim.

(3)
If $\chi$ is not $\top$ or $\bot$ then at least one of $\chi_1$ or
$\chi_2$ is not $\top$ or $\bot$, so by the induction hypothesis (3),
$\varsigma_1$ or $\varsigma_2$ contains $\last$ or $\neg \last$, hence so
does $\varsigma$.
\item
If $\psi = \psi_1 \wedge \psi_2$ then we must have $\chi = \chi_1\wedge \chi_2$
such that $(\chi_i,\varsigma_i) \in \delta(\inquotes{\psi_i})$ for both 
$i\in \{1,2\}$.

(1)
Suppose $\last \in \varsigma$ and $\neg \last \not\in \varsigma$.
W.l.o.g., we can assume $\last \in \varsigma_1$, 
$\neg \last \not\in \varsigma_1$,
and $\neg \last \not\in \varsigma_2$.
By the induction hypothesis (1), $\chi_1$ is either $\top$ or $\bot$.
In the latter case, $\chi = \bot$, so the claim holds.
Otherwise, $\chi = \chi_2$, and due to the assumption that
$\neg \last \not\in \varsigma_2$, by the induction hypothesis (3), 
$\chi_2$ must be $\top$ or $\bot$.

(2)
Suppose $\neg \last \in \varsigma$, $\last \not\in \varsigma$, and $\chi=\top$,
and $\varsigma$ is consistent with step $i$ of $\rho$.
We can assume  $\last \not\in \varsigma_1$, $\last \not\in \varsigma_2$ and
$\chi_1 = \chi_2=\top$.
We must have $\last \in \varsigma_1$, $\last \in \varsigma_2$, or both.
However, for each $i\in \{1,2\}$ such that $\last \in \varsigma_i$, by
the induction hypothesis (2) there is some
$(\top,\varsigma_i') \in \delta(\inquotes{\psi_i})$
such that $\last \not\in\varsigma_i'$ and $\varsigma_i'$ is consistent with step $i$ of $\rho$.
If $\last \not\in \varsigma_i$, set $\varsigma_i' = \varsigma_i$.
Hence $(\top,\varsigma_1'\cup \varsigma_2') \in \delta(\inquotes{\psi})$ such that
$\varsigma_1'\cup \varsigma_2'$ is consistent with step $i$ of $\rho$ and 
$\neg \last\not\in \varsigma_1'\cup \varsigma_2'$.

(3)
If $\chi$ is not $\top$ or $\bot$ then at least one of $\chi_1$ or
$\chi_2$ is not $\top$ or $\bot$, so by the induction hypothesis (3),
$\varsigma_1$ or $\varsigma_2$ contains $\last$ or $\neg \last$, hence so
does $\varsigma$.
\item For $\psi = \Box \psi'$, note that
$\delta_\lambda = \{(\inquotes{\top},\{ \lambda\}),(\inquotes{\bot},\{\neg\lambda\})\}$ satisfies the properties.
The result then follows from the cases for $\vee$ and $\wedge$.
\item All other cases follow from the cases for $\vee$ and $\wedge$.
\qedhere
\end{compactitem}
\end{proof}

\begin{numberedlemma}{\ref{lem:NFA acceptance}}
$\NFApsi$ accepts a word $w$ that is consistent with a run $\rho$ iff 
$\rho \models \psi$.
\end{numberedlemma}
\begin{proof}
($\Longrightarrow$)
Let $w = \varsigma_0 \varsigma_1 \cdots \varsigma_{n}$, and
$q_0 \goto{\varsigma_0} q_1 \goto{\varsigma_1} \dots \goto{\varsigma_{n}} q_{n+1}$  ($\star$) 
be the respective accepting run of $\NFApsi$.
By \defref{NFA}, there are $\varsigma_i'$, 
such that $\varsigma_i = \varsigma_i' \setminus \{\last, \neg \last\}$
and $\{\last, \neg \last\} \not\subseteq \varsigma_i'$
for all $i$, $0\leq i \leq n$.
Let $w'$ be the word $w' = \varsigma_0' \varsigma_1' \cdots \varsigma_{n}'$.
Then $w'$ is consistent with $\rho$ because so is $w$.
Moreover,
by \lemref{delta last} (2) we can choose $\varsigma_{n}'$ such that
$\neg \last \not\in \varsigma_{n}'$, and $\varsigma_{n}'$ is consistent with $\rho$ at $n$.
Then $w'$ is well-formed: indeed,
since edges to $\inquotes{\top}$ labeled $\last$ are redirected to $q_e$
and $\inquotes{\bot}$ cannot occur in ($\star$), by \lemref{delta last} (1)
we have $\last\not\in \varsigma_i'$ for $i<n$.
Thus by \defref{NFA} we have
$\inquotes{\top} \in \delta^*(\inquotes{\psi},w')$.
According to \lemref{deltastar}, $\rho \models \psi$.

($\Longleftarrow$)
If $\rho \models \psi$ then by \lemref{deltastar} there is a 
well-formed word 
$w = \varsigma_0 \varsigma_1 \cdots \varsigma_{n}$ 
that is consistent with $\rho$ such that $\inquotes{\top} \in \delta^*(\inquotes{\psi},w)$.
As $w$ is well-formed, no $\varsigma_i$ contains both $\last$
and $\neg \last$. Hence all $\delta$-steps are reflected by transitions in 
$\NFApsi$. Since moreover $\inquotes{\psi}$ is the initial state,
by \defref{NFA}, there is an accepting run
in $\NFApsi$, leading either to $\inquotes{\top}$ or $q_e$.
\end{proof}

These preliminary results about $\NN_\psi$ allow us to prove
correctness of our product construction.
First, we relate paths in $\smash{\NN_\BB^\psi}$ 
to symbolic runs $\sigma$ and consistent words $w$ accepted by $\NN_\psi$.
Below, given a path $\pi$ in $\NN_\BB^\psi$ of the form
\begin{equation}
\label{eq:pi}
\pi\colon
(b_0',q_0,\varphi_0)
\goto{a_0} (b_0, q_1,\varphi_1)
\goto{*} (b_n, q_{n+1},\varphi_{n+1})
\end{equation}
we write $\sigma(\pi)$ for the symbolic run
$\sigma\colon b_0 \goto{a_1} b_1 \goto{*} b_n$
(ignoring the initial step in $\pi$).

\begin{numberedlemma}{\ref{lem:PC sat}}
Let $\psi \in \LBC$.
\begin{enumerate}
\item
For a path $\pi$ of the form \eqref{pi}
in $\NN_\BB^\psi$ there is a run in $\NN_\psi$ labeled $w$ such that 
$\varphi_{n+1} \equivBC h(\sigma(\pi),w)$,
$\hist(\sigma(\pi),w)$ is satisfiable, and
$w$ is consistent with $\sigma(\pi)$.
\item
If a word $w$ is accepted by $\NN_\psi$ and consistent with symbolic run
$\sigma$ such that
$\hist(\sigma,w)$ is satisfiable, 
there is a path $\pi$ of the form \eqref{pi}
in $\smash{\NN_\BB^\psi}$ such that $\sigma = \sigma(\pi)$ and $\varphi_{n+1} \equivBC \hist(\sigma,w)$.
\end{enumerate}
\end{numberedlemma}
\begin{proof}
\begin{enumerate}[{label=(\arabic*)}]
\item
By induction on $n$.
If $n\,{=}\,0$ then $\pi$ consists of the single step
$(b_0',q_0,C_{\alpha_0}) \goto{a_0} (b_0,q_1,\varphi_1)$
and $\sigma$ consists only of state $b_0$. 
By \defref{product construction},
this step labeled $a_0$ exists because $\update(\bigwedge C_{\alpha_0}, a_0) \wedge constr(\varsigma_0) = \bigwedge (C_{\alpha_0} \wedge constr(\varsigma_0))$ is satisfiable,
for some $\varsigma_0$ such that $q_0 \goto{\varsigma_0} q_1$ and
$\varsigma_0$ is consistent with $b_0' \goto{a_0} b_0$.
The formula $\varphi_1$ must satisfy
$\varphi_1 \equivBC \bigwedge (C_{\alpha_0} \wedge constr(\varsigma_0))$, 
as $\guard(a_0)=\top$.
For $w = \varsigma_0$ we indeed have 
$\hist(\sigma,w) = \bigwedge (C_{\alpha_0} \wedge constr(\varsigma_0))$, and
$\varsigma_0$ is consistent with $b_0' \goto{a_0} b_0$ by construction,
so the claim holds.

In the inductive step, consider a path
$\pi \colon p_0 \to^* p_{n+1} \goto{a} p_{n+2}$
for $p_0$ the initial node of $\smash{\NN_\BB^\psi}$
and $p_i = (b_{i-1}, q_{i},\varphi_{i})$ for all $i$, $1\,{\leq}\,i\,{\leq}\,n\,{+}\,2$.
Let $\sigma = \sigma(\pi)$ 
be the symbolic run $b_0 \goto{*} b_{n} \goto{a} b_{n+1}$. 
By the induction hypothesis, 
there is a run
$q_0 \goto{\varsigma_0} q_1 \goto{\varsigma_2} \dots \goto{\varsigma_n} q_{n+1}$ 
in $\NN_\psi$ such that 
$w' = \varsigma_0 \dots \varsigma_n$ is consistent with $\sigma|_n$, 
$\hist(\sigma|_n,w')$ is satisfiable, and
$\varphi_{n+1} \equivBC \hist(\sigma|_n,w')$ ($\star$).
Since there is an edge 
$(b_n, q_{n+1},\varphi_{n+1}) \goto{a} (b_{n+1}, q_{n+2},\varphi_{n+2})$, 
by \defref{product construction} there must be a transition
$q_{n+1} \goto{\varsigma_{n+1}} q_{n+2}$ in $\NN_\psi$, 
such that $\varphi_{n+2} \equivBC \update(\varphi_{n+1}, a) \wedge constr(\varsigma_{n+1})$ ($\star\star$), 
the formula $\varphi_{n+2}$ is satisfiable,
and
$\varsigma_{n+1}$ is consistent with $b_n \goto{a} b_{n+1}$.
Thus, as $w'$ is consistent with $\sigma|_n$,
$w$ is consistent with $\sigma$.
Let $C = constr(\varsigma_{n+1})$.
Note that as $\psi \in \LBC$, we must have $C \subseteq \CC$, so that
by \defref{finite summary} and ($\star$),
\begin{align*}
\update(\varphi_{n+1}, a) \wedge C 
& \equivBC \update(\hist(\sigma|_n,w'), a) \wedge C\\
&= \hist(\sigma,w)
\end{align*}
and the two formulas are equisatisfiable.
From ($\star\star$) it follows that
$\varphi_{n+2} \equivBC \hist(\sigma,w)$ and $\hist(\sigma,w)$ is satisfiable because so is $\varphi_{n+2}$.
\item 
By induction on the length $n$ of $\sigma$.
If $n\,{=}\,0$ then $\sigma$ is empty and
$w = \varsigma$ for some $\varsigma \in \Sigma$.
By assumption $w$ is consistent with the step $b_0' \goto{a_0} b_0$, and 
$\hist(\sigma,w) = \bigwedge (C_{\alpha_0} \cup constr(\varsigma))$ is satisfiable. 
Thus, by \defref{product construction} there is a step
$(b_0', q_0, C_{\alpha_0}) \goto{a_0} (b_0, q_f, \varphi_1)$ and we have
$\varphi_1 \equivBC \update(\bigwedge C_{\alpha_0},a_0) \wedge constr(\varsigma) = \bigwedge (C_{\alpha_0} \cup constr(\varsigma))$, using the definition of $a_0$.

In the inductive step, $\sigma$ has the form $b_0 \goto{*} b_n \goto{a} b_{n+1}$, and 
$w = \varsigma_0 \cdots \varsigma_n\varsigma_{n+1}$
is accepted by $\NN_\psi$, such that 
$\hist(\sigma,w)$ is satisfiable and $w$ is consistent with $\sigma$.
Hence $w'= \varsigma_0 \cdots \varsigma_n$ is consistent with $\sigma|_n$.
By the induction hypothesis, $\NN_\BB^\psi$ has a 
node $p_{n+1} = (b_n, q_{n+1},\varphi_{n+1})$
and a path 
$\pi \colon p_0 \to^* p_{n+1}$
such that $\varphi_{n+1} \equivBC \hist(\sigma|_n,w')$ ($\star$).
The constraint set $C = constr(\varsigma_{n+1})$ satisfies $C \subseteq \CC$ 
because $\psi \in \LBC$.
Therefore, by the properties of $\equivBC$ (\defref{finite summary}) and ($\star$),
\begin{align*}
\update(\varphi_{n+1}, a) \wedge C &\equivBC \update(\hist (\sigma|_n,w'), a) \wedge C \\
&= \hist(\sigma,w)
\end{align*}
and as $\hist(\sigma,w)$ is satisfiable, so is $\update(\varphi_{n+1}, a) \wedge C$.
By assumption
$\varsigma_{n+1}$ is consistent with $b_n \goto{a} b_{n+1}$.
Therefore, $\NN_\BB^\psi$ has a node $p' = (b_{n+1}, q_{n+2}, \varphi_{n+2})$ 
such that $\varphi_{n+2} \equivBC \update(\varphi_{n+1}, a) \wedge C$
and an edge $p_n \goto{a} p'$ can be appended to $\pi$.
\qedhere
\end{enumerate}
\end{proof}

\subsection{Gap-order Constraints}
\label{sec:gc appendix}

The following result about quantifier elimination of GC formulas is known
(see e.g., \cite[Sec. 3]{Revesz93}), but we sketch the procedure for the sake of self-containedness.
\begin{lemma}
\label{lem:gc qe}
For a GC formula $\phi$ over variables $X \cup \{y\}$ and constants $\Const$, 
there some GC formula $\phi'$ over $X$ and $\Const$
such that $\exists y. \phi \equiv \phi'$.
\end{lemma}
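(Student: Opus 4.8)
The plan is to give a Fourier–Motzkin-style quantifier elimination procedure tailored to gap-order constraints, checking that it keeps us inside the class of GC formulas over the original constant set $\Const$. First I would put $\phi$ into disjunctive normal form, so that it suffices to eliminate $y$ from a single conjunction $\bigwedge_i \ell_i$ of gap-order literals (since $\exists y. (\phi_1 \vee \phi_2) \equiv (\exists y.\phi_1) \vee (\exists y.\phi_2)$, and a disjunction of GC formulas is a GC formula). Within such a conjunction, I would classify the literals mentioning $y$: each has the form $y - t \geq k$ (a \emph{lower bound} on $y$) or $t - y \geq k$ (an \emph{upper bound} on $y$), where $t \in X \cup \Const$ and $k \in \mathbb{N}$; literals not mentioning $y$ are simply retained unchanged.

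The core step is the elimination itself: $\exists y$ is satisfiable (given the other literals) precisely when every lower bound lies below every upper bound, i.e.\ for each pair of a lower bound $y - t_1 \geq k_1$ and an upper bound $t_2 - y \geq k_2$, we need a value of $y$ with $t_1 + k_1 \leq y \leq t_2 - k_2$. Adding these gives the resolvent $t_2 - t_1 \geq k_1 + k_2$. Now $t_1, t_2 \in X \cup \Const$ and $k_1 + k_2 \in \mathbb{N}$, so this resolvent is again a gap-order constraint — \emph{over the same constant set $\Const$} when both $t_1, t_2$ are variables, but when one or both is a constant $c \in \Const$ the resolvent $c - x \geq m$ or $x - c \geq m$ or $c - c' \geq m$ already fits the GC shape (constants are allowed as the endpoints $x,y$ in \defref{constraints}'s restriction for GCs). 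The replacement formula $\phi'$ is the conjunction of all $y$-free literals of the conjunction together with all pairwise resolvents; over $\mathbb{Z}$ (and since gap-orders define convex, integrally-nonempty ranges once consistent) one checks $\exists y.\bigwedge_i \ell_i \equiv \phi'$, and then reassembling the disjuncts yields the claimed $\phi'$ for the original $\phi$.

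The main obstacle I anticipate is the bookkeeping around constants rather than any deep difficulty: one must be careful that the endpoints of a gap-order literal are allowed to be drawn from $X \cup \Const$ (not just $X$), so that resolvents involving constant endpoints stay syntactically within the GC fragment without enlarging $\Const$ — it is the \emph{magnitudes} $k$ that may grow (they are sums $k_1 + k_2$), but those are unconstrained naturals, whereas $\Const$ is the fixed finite set of allowed endpoint symbols. A secondary point to verify cleanly is the equivalence over $\mathbb{Z}$: once the resolvents guarantee $\max(\text{lower bounds}) \leq \min(\text{upper bounds})$, the interval they cut out contains an integer, so the projection is exactly characterized; if there are no upper bounds (or no lower bounds) then $y$ can be pushed arbitrarily high (or low) and one simply drops those literals. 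I would also handle the degenerate case where $\phi$ has no literal mentioning $y$ trivially ($\phi' = \phi$), and note that equalities and disequalities on $y$, if present in the broader constraint grammar, are first rewritten into conjunctions/disjunctions of gap-order literals before the procedure is applied.
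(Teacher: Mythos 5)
Your proposal is correct and follows essentially the same route as the paper's proof: convert to DNF, then eliminate $y$ from each conjunct Fourier--Motzkin-style by adding each upper-bound literal for $y$ to each lower-bound literal and retaining the $y$-free literals. Your write-up is in fact more detailed than the paper's sketch (which defers to Revesz), in particular on why the resolvents stay within the GC fragment over the same $\Const$ and why integrality of the endpoints makes the projection exact over $\mathbb{Z}$.
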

\begin{proof}
Initially, one can convert the input formula to disjunctive normal form, and perform quantifier elimination separately on every disjunct (which is a conjunction of GCs).
Thus, let $\phi$ be a conjunction of GCs. One can apply the following procedure:
let $\psi$ be the conjunction of GCs that is obtained by adding all upper bound inequalities for $y$ in $\phi$
to all lower bound inequalities for $y$. Moreover, let $\phi'$ be obtained from $\phi$
by removing all GCs from $\phi$ that mention $y$. Then $\phi' \wedge \psi$ is a quantifier-free GC-formula over variables $X$ and constants $\Const$, and satisfies
$\phi' \wedge \psi \equiv \phi$. 
\end{proof}

\begin{numberedtheorem}{\ref{thm:gap order}}
$(\Phi_\GC,\equivGC)$ is a finite summary for $\BB$ and $\CC$.
\end{numberedtheorem}
\begin{proof}
To show that $\Phi_\GC$ is a history set, we first note that for all
$C \subseteq \CC$, 
$\bigwedge (C_{\alpha_0} \cup C)$ is in $\GC_\Const$.
Now suppose $(b,\phi) \in \Phi_\GC$ and $b \goto{a} b'$.
We can write 
$\update(\phi,a) \wedge C = 
\exists \vec U. \phi(\vec U) \wedge \chi$ for some GC-formula $\chi$ over $U \cup V$ and $\Const$, and from quantifier elimination we obtain
a GC-formula $\phi'$ with $\phi' \equiv \exists \vec U. \phi(\vec U) \wedge \chi$ over $V$ and $\Const$, with $\phi' \in \GC_\Const$.

It remains to verify \defref{finite summary}:
Condition (1) holds by definition of $\equivGC$ and decidability of linear arithmetic.
For Condition (2), suppose  $\phi \equivGC \psi$, so $\cutoff[K]{\phi} \equiv \cutoff[K]{\psi}$. 
Equisatisfiability of $\phi$ and $\psi$ follows from  \lemref{cutoff} (1).
We can write
$\update(\phi,a) \wedge \bigwedge C = \exists \vec U. \phi(\vec U) \wedge \chi$ and
$\update(\psi,a) \wedge \bigwedge C = \exists \vec U. \psi(\vec U) \wedge \chi$
for some GC-formula $\chi$ over $U \cup V$ and $\Const$.
With \lemref{cutoff} (2) we thus obtain
\begin{align*}
\cutoff[K]{\exists \vec U. \phi(\vec U) \wedge \chi}
 &\equiv \cutoff[K]{\exists \vec U. \cutoff[K]{\phi(\vec U)} \wedge \cutoff[K]{\chi}}\\
 &\equiv \cutoff[K]{\exists \vec U. \cutoff[K]{\psi(\vec U)} \wedge \cutoff[K]{\chi}}\\
&\equiv \cutoff[K]{\exists \vec U. \psi(\vec U) \wedge \chi}
\end{align*}
Finally, $(\Phi_\GC,\equivGC)$ has finitely many equivalence classes 
because there are only finitely many GC-formulas that differ in their
$K$-bounded approximation.
\qedhere
\end{proof}

\subsection{Bounded Lookback}

\begin{numberedtheorem}{\ref{thm:bounded lookback}}
If $(\BB,\CC)$ has $K$-bounded lookback for some $K$, it has \property. 
\end{numberedtheorem}
\begin{proof}
Let $\Psi$ be the set of formulas with free variables $V$, 
quantifier depth at most $K\cdot |V|$, and using as vocabulary all constraints in 
$\CC$, $C_{\alpha_0}$, and
guards of $\BB$.
Note that as a set of formulae with bounded quantifier depth over a finite set of constraints, $\Psi$ is finite up to equivalence.
To prove that $B \times\Psi$ is a history set, 
we show that for every history constraint $h(\sigma,\cseq C)$ of $\BB$ and $\CC$
such that $\sigma$ has final state $b$
there is some $(b,\phi) \in B \times\Psi$ with $\phi \equiv h(\sigma,\cseq C)$.
This implies the claim by \lemref{simple finite summary}.
The proof is by induction of $\sigma$.
If $\sigma$ is empty,  $h(\sigma,\cseq C) = \bigwedge (C_{\alpha_0} \cup C_0)$ for some $C_0$, so that $h(\sigma,\cseq C) \in \Psi$, and hence
$(b_0, h(\sigma,\cseq C)) \in \Psi$.
Otherwise, $\sigma$ is of the form $b_0 \to^* b_n \goto{a} b_{n+1}$, and
$\hist(\sigma, \cseq C) = \update(\hist(\sigma|_{n}, \cseq C|_{n}), a) \wedge C$ for some $a\in \AA$ and $C\subseteq \CC$.
By induction hypothesis there is some $\phi \in \Psi$ with $\phi \equiv \hist(\sigma|_{n}, \cseq C|_{n}), a)$.
By \defref{update}, we can thus write $\hist(\sigma, \cseq C)$ as 
$\phi' = \exists \vec U. \phi(\vec U) \wedge \chi$ 
for some quantifier free formula  $\chi$.
Let $\eqc{\phi'}$ be obtained from $\phi'$ by eliminating all equality literals $x=y$ in $\phi'$ and uniformly substituting all variables in an equivalence class by a representative.
Since $(\BB,\CC)$ has $K$-bounded lookback, and $\eqc{\phi'}$ encodes a part of
$\eqc{G_{\sigma,\CC}}$, 
$\eqc{\phi'}$ is equivalent to a formula $\psi$ of quantifier depth at most $K\cdot |V|$ over the same vocabulary, 
obtained by dropping irrelevant literals and existential quantifiers.
Hence $(b_{n+1}, \psi) \in B \times \Psi$.
\end{proof}

\begin{numberedlemma}{\ref{lem:feedback free bounded lookback}}
If $(\BB, \CC)$ is feedback-free then it has $2|V|$-bounded lookback.
\end{numberedlemma}
\begin{proof}
The proof uses the main ideas from \cite[Lem. 5.4]{DDV12}.
We consider the graph $\eqc{G_{\sigma,\CC}}$ obtained from $G_{\sigma,\CC}$ 
by collapsing equivalence classes of $\equiv_E$.
Let $\sigma$ be a symbolic run of $\BB$ of length $n$, and $\cseq C$ a
verification constraint sequence.
We can write $h(\sigma, \cseq C)$ as $\exists V_0 \dots V_{n-1}.\,\chi$, for some 
quantifier-free formula $\chi$ (using the variables $V_i$ as the set $U$ in \defref{update}). Let $\eqc{\chi}$ be obtained from $\chi$
by eliminating all equality literals $x_i = y_j$ for $x_i, y_j \in \mc V$
and uniformly substituting all variables in $\mc V$ by their representative
with respect to $\equiv_E$.
Consider now a conjunction $\nu$ of a subset of literals in $\eqc{\chi}$, 
such that the restriction $\eqc{G_\nu}$ of $\eqc{G_{\sigma,\CC}}$ to variable equivalence
classes in $\nu$ is connected, and let $X \subseteq V$ be the set of variables 
$X = \{x \mid x_i \text{ occurs in $\nu$ for some $i$}\}$.
We show by induction on $|X|$ that $\eqc{G_\nu}$ is a tree of depth at most $|X|$. 
Then it follows that the length of all paths in $\eqc{G_\nu}$ is bounded by $2|X|$.
If $|X|=0$, i.e., $X = \emptyset$, there can be no paths in $\eqc{G_\nu}$, so the claim is vacuously true.
Otherwise, we distinguish two cases: if the length of the longest path in  $\eqc{G_\nu}$ is at most $|X|$,
the claim is obvious as well. Otherwise, $\eqc{G_\nu}$ contains a path of length greater than $|X|$, 
so there must be a variable $x\in X$ and instants $i,j \leq n$ with $i \neq j$ 
such that both $x_i$ and $x_j$ occur in $\nu$. 
Since $\nu$ is connected, there must be a path from $x_i$ to $x_j$ in $\eqc{G_\nu}$.
By the property of feedback freedom, there must be a variable $y_k$ such that
$span(\eqc{x_i}) \cup span(\eqc{x_j}) \subseteq span(\eqc{y_k})$.
We may choose $y_k$ such that $span(\eqc{y_k})$ is maximal with this property.
Then one observes that there is no $y_m$ in $\nu$ with $k \neq m$ and $\eqc{y_k} \neq \eqc{y_m}$ $(\star)$:
indeed, since $\eqc{G_\nu}$ is connected, 
there would be a path between $\eqc{y_k}$ and $\eqc{y_m}$, 
and by feedback freedom such a path must contain a node $\eqc{z}$ such that
$span(\eqc{y_k}) \cup span(\eqc{y_m}) \subseteq span(\eqc{z})$, contradicting maximality of $span(\eqc{y_k})$.
Let $H_1, \dots, H_p$ be the connected components of $\eqc{G_\nu}$ without $\eqc{y_k}$.
By $(\star)$, the set of variables occurring in $H_1, \dots, H_p$ is $X\setminus\{y\}$.
Therefore we can apply the induction hypothesis to conclude that $H_q$ is a tree of depth at most $|X|-1$ for all $q$,
$1\leq q \leq p$, and it follows that $\eqc{G_\nu}$ is a tree of depth at most $|X|$.
\end{proof}
%

\subsection{Modularity}


\begin{lemma}
\label{lem:seq decompose}
Suppose $\BB$ is sequentially decomposable into a {\mydds}s $\BB_1$ 
with finite summary $(\Phi_1, \equiv)$
and $\BB_2$ with finite summary $(\Phi_2, \equiv)$, both with respect to $\CC$.
Then $(\BB, \CC)$ has finite summary.
\end{lemma}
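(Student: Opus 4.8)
Following the construction sketched for \thmref{decompose}(a), the plan is to assemble a \emph{finite history set} $\Phi$ for $(\BB,\CC)$ out of the given finite history sets $\Phi_1$ for $(\BB_1,\CC)$ and $\Phi_2$ for $(\BB_2,\CC)$, and then conclude with \lemref{simple finite summary}. Let $\{b\} = B_1 \cap B_2$ be the glue state and let $\vec U$ be the fresh variables of $\BB_2$. I would set
\[
\Phi \,=\, \Phi_1 \,\cup\, \bigl\{\, (b',\, \exists \vec U.\, \varphi_1(\vec U) \wedge \varphi_2) \;\bigm|\; (b,\varphi_1) \in \Phi_1 \text{ and } (b',\varphi_2) \in \Phi_2 \,\bigr\},
\]
where $\varphi_1(\vec U)$ is $\varphi_1$ with its free variables $\vec V$ renamed to $\vec U$. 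The intuition is that a symbolic run of $\BB$ either stays inside $B_1$ — then it is a symbolic run of $\BB_1$ and its accumulated constraints are captured by $\Phi_1$ — or it runs inside $B_1$ up to the glue state $b$ and then proceeds inside $B_2$; since $\BB_2$ treats the values present at $b$ as unconstrained (recording them symbolically in $\vec U$), existentially quantifying $\vec U$ subject to the $\BB_1$-summary $\varphi_1$ rebuilds the full summary. As $\Phi_1,\Phi_2$ are finite, so is $\Phi$.

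I would then check that $\Phi$ is a history set using \lemref{history set}. Condition (1) is inherited directly: for $C\subseteq\CC$, condition (1) for $\Phi_1$ gives $(b_0,\varphi_0)\in\Phi_1\subseteq\Phi$ with $\varphi_0\equiv\bigwedge(C_{\alpha_0}\cup C)$, and $\BB,\BB_1$ share $b_0$ and $\alpha_0$. For condition (2) I would use the shape restrictions of \defref{seq decomposable}: since $T$ has no edge from $B_2$ to $B_1$ and none from $B_1\setminus\{b\}$ to $B_2\setminus\{b\}$, every transition of $\BB$ has both endpoints in $B_1$ (hence lies in $T_1$) or both in $B_2$ (hence lies in $T_2$). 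Given a pair of $\Phi$ and a transition $b''\goto{a} b'''$ with $C\subseteq\CC$, there are three cases: (i) the pair is $(b'',\varphi)\in\Phi_1$ and $b'''\in B_1$ — then condition (2) for $\Phi_1$ yields the needed pair in $\Phi_1\subseteq\Phi$, using that $\update$ is literally identical in $\BB$ and $\BB_1$; (ii) the pair is $(b'',\varphi)\in\Phi_1$ and $b'''\in B_2\setminus B_1$ — then $b''\in B_1\cap B_2=\{b\}$, and I combine $\varphi$ with the $\Phi_2$-representative $\psi_2$ of $\update_{\BB_2}(\bigwedge C_{\alpha_U},a)\wedge C$ (which exists by conditions (1),(2) for $\Phi_2$ applied to the $T_2$-transition $b\goto{a} b'''$), taking $(b''',\exists\vec U.\,\varphi(\vec U)\wedge\psi_2)\in\Phi$; (iii) the pair is a combined pair $(b'',\exists\vec U.\,\varphi_1(\vec U)\wedge\varphi_2)$ with $b''\in B_2$ — then the transition is in $T_2$, and I take $(b''',\exists\vec U.\,\varphi_1(\vec U)\wedge\varphi_2')\in\Phi$ where $\varphi_2'$ is the $\Phi_2$-representative of $\update_{\BB_2}(\varphi_2,a)\wedge C$ (condition (2) for $\Phi_2$). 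Once (1) and (2) hold, $\Phi$ is a finite history set, and \lemref{simple finite summary} gives that $(\Phi,\equiv)$ is a finite summary for $(\BB,\CC)$.

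The technical heart — and the step I expect to be the main obstacle — is verifying the two equivalences
\begin{gather*}
\exists\vec U.\,\varphi(\vec U)\wedge\psi_2 \;\equiv\; \update_{\BB}(\varphi,a)\wedge C,\\
\update_{\BB}\bigl(\exists\vec U.\,\varphi_1(\vec U)\wedge\varphi_2,\,a\bigr)\wedge C \;\equiv\; \exists\vec U.\,\varphi_1(\vec U)\wedge\varphi_2'
\end{gather*}
by unfolding \defref{update}. Both hinge on the fact that the variables $U$ are never written by any action of $\BB_2$ — the guards of $\BB_2$ are those of $\BB$, hence over $V^r\cup V^w$ only — so in the transition formula $\trans{a}$ of $\BB_2$ each $u\in U$ is simply propagated by inertia. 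From this one computes $\update_{\BB_2}(\bigwedge C_{\alpha_U},a) \equiv \trans{a}(\vec U,\vec V)$, which turns the first equivalence into the very definition of $\update_{\BB}(\varphi,a)$; and, for any $\chi$ over $V\cup U$, $\update_{\BB_2}(\chi,a)\equiv\exists\vec W.\,\chi(\vec W,\vec U)\wedge\trans{a}(\vec W,\vec V)$, so the outer $\exists\vec U$ commutes past $\update_{\BB_2}$ (the $\vec U$-block being disjoint from the fresh $\vec W$-block and from $\trans{a}(\vec W,\vec V)$), giving $\update_{\BB}(\exists\vec U.\,\varphi_1(\vec U)\wedge\varphi_2,a)\equiv\exists\vec U.\,\varphi_1(\vec U)\wedge\update_{\BB_2}(\varphi_2,a)$; conjoining with $C$ (which is over $\vec V$ and so moves freely inside $\exists\vec U$) and rewriting by $\update_{\BB_2}(\varphi_2,a)\wedge C\equiv\varphi_2'$ closes the case. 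Everything else is routine bookkeeping: keeping the vectors $\vec V$, $\vec U$ and the fresh quantifier blocks apart to avoid capture, and checking the $T_1$/$T_2$ case split is exhaustive.
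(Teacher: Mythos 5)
Your proof follows essentially the same route as the paper's: the identical set $\Phi = \Phi_1 \cup \{(b',\exists \vec U.\,\varphi_1(\vec U)\wedge\varphi_2)\}$, verified to be a finite history set via \lemref{history set} using the same quantifier-shift computation (exploiting that $\vec U$ is untouched by $\trans{a}$), and closed off by \lemref{simple finite summary}. The only divergence is that you explicitly isolate the boundary case of a transition leaving the glue state from a pure $\Phi_1$-pair, which the paper folds into its ``$a\in\AA_2$'' case via the equivalence $\exists\vec U.\,\varphi_1(\vec U)\wedge C_{\alpha_U}\equiv\varphi_1$; your treatment is, if anything, slightly more careful there.
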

\begin{proof}
Since $(\Phi_1, \equiv)$ and $(\Phi_2, \equiv)$ are finite summaries,
$\Phi_1$ and $\Phi_2$ are finite history sets.
Let the given DDSAs be
$\BB_1 = \langle B_1, b_0, \AA_1, T_1, \{b\}, V, \alpha_0, \guard\rangle$ and
$\BB_2 = \langle B_2, b, \AA_2, T_2, F, V\cup U, \alpha_U, \guard\rangle$.
Let $\Phi_1^b \subseteq \Phi_1$ be the set of all $(b,\varphi) \in \Phi_1$ 
for some $\varphi$, and the set $\Phi$ given by
\[\Phi_1 \cup \{ (b',\exists \vec U.\varphi_1(\vec U) \wedge \varphi_2) \mid 
(b,\varphi_1) \in \Phi_1^b\text{, }(b',\varphi_2) \in \Phi_2 \}.
\]
We show that $\Phi$ is a finite history set for $(\BB,\CC)$,
so that the claim follows from \lemref{simple finite summary}.
Finiteness is immediate from finiteness of $\Phi_1$ and $\Phi_2$.

To verify that $\Phi$ is a history set for $\BB$, we use \lemref{history set}.
First, for all $C \subseteq \CC$, there is some $(b_0,\phi_0) \in \Phi_1$ 
such that $\phi_0 \equiv \bigwedge (C_{\alpha_0} \cup C)$ by finite summary of $\BB_1$,
and by definition of $\Phi$, we have $(b_0,\phi_0) \in \Phi$.
Second, let $(b_1,\phi)\in\Phi$, $b_1 \goto{a} b_2$, and $C \subseteq \CC$.
Two further cases can be distinguished. If $a \in \AA_1$ then by finite summary of
$\BB_1$, there is some $(b_2,\phi')\in \Phi_1$ such that 
$\phi' \equiv \update(\phi,a) \wedge C$, hence $(b_2,\phi')\in \Phi$.
Otherwise $a \in \AA_2$, and we can write
$\phi = \exists \vec U.\varphi_1(\vec U) \wedge \varphi_2$ such that
$\varphi_1 \in \Phi_1^b$ and $\varphi_2 \in \Phi_2$.
Then there is a formula $\chi$ with free variables $X$ and $V$ such that we can write
\begin{align*}
\update(\phi,a) &\wedge C = \exists \vec X. \phi(\vec X) \wedge \chi \\
&= \exists \vec X. 
(\exists \vec U.\varphi_1(\vec U) \wedge \varphi_2)(\vec X) \wedge \chi \\
&\equiv 
\exists \vec U.\varphi_1(\vec U) \wedge \exists \vec X.(\varphi_2(\vec X) \wedge \chi) \\
&\equiv 
\exists \vec U.\varphi_1(\vec U) \wedge \update(\phi_2,a) \wedge C & =:\psi
\end{align*}
where the quantifier shift in the third step is allowed because $\vec X$ does not occur in $\varphi_1$.
Since $\Phi_2$ is a history set, there must be some $(b_2,\phi_2')\in \Phi_2$
such that $\phi_2' \equiv \update(\phi_2,a) \wedge C$.
Therefore, we have $\psi \equiv \exists \vec U.\varphi_1(\vec U) \wedge \phi_2' =:\phi'$.
By definition of $\Phi$, it contains $(b_2,\phi')$, which proves the claim.
\end{proof}

\begin{lemma}
\label{lem:var decompose}
Suppose $(\BB,\CC)$ is variable-decomposable into {\mydds}s 
$(\BB_1, \CC|_{V_1})$ and $(\BB_2, \CC|_{V_2})$,
such that
$(\BB_i, \CC|_{V_i})$ has finite summary $(\Phi_i, \sim_i)$, 
for $i\in \{1,2\}$.
Then $(\BB, \CC)$ has finite summary.
\end{lemma}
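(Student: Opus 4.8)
The plan is to build a history set $\Phi$ for $(\BB,\CC)$ together with an equivalence relation combining $\sim_1$ and $\sim_2$, and then verify \defref{finite summary}. The key observation is a factorisation of the transition dynamics along the split $V\eqn V_1\uplus V_2$: since each guard $\guard(a)$ is a constraint over $V_1$ or over $V_2$ and the inertial equations $v^w\eqn v^r$ split accordingly, the transition formula of $a$ in $\BB$ satisfies $\trans{a}\equiv\trans{a}^{(1)}\wedge\trans{a}^{(2)}$, where $\trans{a}^{(i)}$ is the transition formula of $a$ in $\BB_i$ and mentions only $V_i^r\cup V_i^w$. As a formula $\phi_i$ from $\Phi_i$ has free variables in $V_i$ only, the fresh block of existential quantifiers in $\update$ likewise splits, and a one-line computation gives $\update(\phi_1\wedge\phi_2,a)\equiv\update^{\BB_1}(\phi_1,a)\wedge\update^{\BB_2}(\phi_2,a)$. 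Combining this with $\bigwedge C\equiv\bigwedge(C|_{V_1})\wedge\bigwedge(C|_{V_2})$ for $C\subseteq\CC$ and $C_{\alpha_0}=C_{\alpha_0|_{V_1}}\cup C_{\alpha_0|_{V_2}}$, an induction on the symbolic run $\sigma$ (which, since $\BB$, $\BB_1$, $\BB_2$ share states, actions and transitions, is also a symbolic run of each $\BB_i$) shows $\hist(\sigma,\cseq C)\equiv\hist^{\BB_1}(\sigma,\cseq C|_{V_1})\wedge\hist^{\BB_2}(\sigma,\cseq C|_{V_2})$, with the conjuncts being history constraints of $(\BB_1,\CC|_{V_1})$ and $(\BB_2,\CC|_{V_2})$.

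Given this, I would take $\Phi$ to be a history set of $(\BB,\CC)$ consisting of formulas equivalent to history constraints; by the factorisation above and because $\Phi_1,\Phi_2$ are history sets, every $(b,\phi)\in\Phi$ has $\phi\equiv\phi_1\wedge\phi_2$ for some $(b,\phi_i)\in\Phi_i$, so up to $\equiv$ we may assume $\Phi\subseteq\{(b,\phi_1\wedge\phi_2)\mid(b,\phi_1)\in\Phi_1,\ (b,\phi_2)\in\Phi_2\}$. I would then define $\equivBC$ on $\Phi$ by: $(b,\phi)\equivBC(b,\psi)$ iff they admit decompositions $\phi\equiv\phi_1\wedge\phi_2$, $\psi\equiv\psi_1\wedge\psi_2$ with $(b,\phi_i),(b,\psi_i)\in\Phi_i$, $\phi_1\sim_1\psi_1$ and $\phi_2\sim_2\psi_2$ (placing all unsatisfiable members of $\Phi$ into one class to cover the degenerate case). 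Over the disjoint vocabularies $V_1$, $V_2$ a satisfiable conjunction $\phi_1\wedge\phi_2$ determines each $\phi_i$ up to $\equiv$, so $\equivBC$ is well defined; it is decidable because $\sim_1$, $\sim_2$ and satisfiability are. Then $\equivBC$ contains $\equiv$ (as each $\sim_i$ does), $\equivBC$-related formulas are equisatisfiable (since $\phi_1\wedge\phi_2$ is satisfiable iff both $\phi_i$ are, and each $\sim_i$ preserves satisfiability), and $\Phi/{\equivBC}$ embeds into $(\Phi_1/{\sim_1})\times(\Phi_2/{\sim_2})$, hence has finitely many classes.

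For \defref{finite summary}(2), I would fix the representative function of $\Phi$ so that $[\phi_1\wedge\phi_2]=[\phi_1]_1\wedge[\phi_2]_2$, using the representative functions of $\Phi_1$ and $\Phi_2$ (legitimate because $\Phi$ contains, up to $\equiv$, exactly these conjunctions). Then the factorisation of $\update$ yields $[\update(\phi,a)\wedge C]\equiv[\update^{\BB_1}(\phi_1,a)\wedge C|_{V_1}]_1\wedge[\update^{\BB_2}(\phi_2,a)\wedge C|_{V_2}]_2$, and similarly for $\psi$; if $(b,\phi)\equivBC(b,\psi)$, the matching components on the two sides are $\sim_1$- resp.\ $\sim_2$-related by \defref{finite summary}(2) for $\BB_1$ and $\BB_2$, so the two representatives are $\equivBC$-related. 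This establishes that $(\Phi,\equivBC)$ is a finite summary for $(\BB,\CC)$.

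The algebra of $\update$ is routine once $\trans{a}\equiv\trans{a}^{(1)}\wedge\trans{a}^{(2)}$ is observed; the part I expect to require the most care is the bookkeeping for the combined relation $\equivBC$ — making its definition genuinely independent of the chosen $V_1/V_2$-decomposition (which is unique only up to $\equiv$, and only for satisfiable formulas) and aligning the representative function of $\Phi$ with those of $\Phi_1$ and $\Phi_2$, so that condition (2) of \defref{finite summary} reduces cleanly to the two subsystems.
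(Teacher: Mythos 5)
Your proposal is correct and follows essentially the same route as the paper: both form the history set from conjunctions $\phi_1\wedge\phi_2$ with $\phi_i\in\Phi_i$, combine $\sim_1$ and $\sim_2$ componentwise, and rely on the factorisation $\trans{a}\equiv\Delta_1\wedge\Delta_2$ (hence of $\update$ and of the history constraints) across the variable split. Your additional care about well-definedness of the combined relation for unsatisfiable conjunctions is a refinement the paper glosses over, but it does not change the argument.
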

\begin{proof}
Suppose $\BB$ is variable-decomposable into the {\mydds}s
$\BB_1 = \langle B_1, b_0, \AA, T, F, V_1, \alpha_0|_{V_1}, guard_1\rangle$ and
$\BB_2 = \langle B_2, b_0, \AA, T, F, V_2, \alpha_0|_{V_2}, guard_2\rangle$.
By assumption, $\BB_1$ has a finite summary $(\Phi_1, \sim_1)$ and 
$\BB_2$ has a finite summary $(\Phi_2, \sim_2)$.
Let
\[\Phi = \{ \varphi_1\wedge \varphi_2 \mid 
\varphi_1 \in \Phi_1\text{ and }\varphi_2 \in \Phi_2 \}.
\]
and $\sim$ be defined as 
$\varphi_1\wedge \varphi_2 \sim \psi_1\wedge \psi_2$ iff  $\phi_1 \sim_1 \psi_1$ and $\phi_2 \sim_2 \psi_2$.
To show that $(\Phi, \sim)$ is a \property for $\BB$ with respect to $\CC$,
i.e., for every symbolic run $\sigma$ of $\BB$ and state constraint sequence $\cseq C$
over $\CC$, there is some $\varphi \in \Phi$ such that
$\hist(\sigma, \cseq C) \sim \varphi$.
The proof is by induction on $\sigma$.
If $\sigma$ is empty then $\hist(\sigma, \cseq C) = \bigwedge (C_{\alpha_0} \cup C_0)$
can be split into $ \bigwedge (C_{\alpha_0|_{V_1}} \cup C_0|_{V_1}) \wedge  \bigwedge (C_{\alpha_0|_{V_2}} \cup C_0|_{V_2}) = \hist(\sigma_1, \cseq C|_{V_1}) \wedge \hist(\sigma_2, \cseq C|_{V_2})$ for $\sigma_i$ the empty run in $\BB_i$.
Otherwise, 
$\hist(\sigma, \cseq C) = \update(\hist(\sigma|_{n-1}, \cseq C|_{n-1}), a) \wedge C$ for some $a\in \AA$ and $C \subseteq \CC$.
By the induction hypothesis, there is some $\varphi_1\wedge \varphi_2 \in \Phi$ such that
$\hist(\sigma|_{n-1}, \cseq C|_{n-1}) \equiv \varphi_1\wedge \varphi_2$, 
$\varphi_1 \in \Phi_1$, and $\varphi_2 \in \Phi_2$.
For a suitable formula $\chi$ 
we can hence write $\update(\hist(\sigma|_{n-1}, \cseq C|_{n-1}), a) \wedge C$ as
\begin{align*}
\exists \vec U.&\hist(\sigma|_{n-1}, \cseq C|_{n-1})(\vec U) \wedge \chi\\
&\equiv\exists \vec U. \varphi_1(\vec U) \wedge \varphi_2(\vec U) \wedge \chi\\
&\equiv\exists \vec U_1. \varphi_1(\vec U_1) \wedge \chi_1 \wedge 
\exists \vec U_2. \varphi_2(\vec U_2) \wedge \chi_2 &(\star)
\end{align*}
We have
$\equiv\exists \vec U_i. \varphi_i(\vec U_i) \wedge \chi_i = 
\update(\varphi_i, a) \wedge C|_{V_i}$ for both $i\in \{1,2\}$,
and by finite summary there is some $\varphi_i' \in \Phi_i$ such that
$\varphi_i' \equiv \update(\varphi_i, a) \wedge C|_{V_i}$.
Therefore, $(\star)$ can be written as $\varphi_1' \wedge \varphi_2' \in \Phi$.

Condition (1) of \defref{finite summary} follows from the respective properties of $\sim_1$ and $\sim_2$.
For Condition (2), suppose $\phi_1 \wedge \phi_2 \sim \psi_1 \wedge \psi_2$, 
so $\phi_1 \sim_1 \psi_1$ and $\phi_2 \sim_2 \psi_2$.
If $\phi_1 \wedge \phi_2$ is satisfiable then so are both conjuncts,
so that by finite summary of $\BB_i$ also $\psi_i$ is satisfied by a valuation
$\alpha_i$. As $\alpha_1$ has domain $V_1$ and $\alpha_2$ has domain $V_2$, 
we can form their union and this valuation satisfies $\psi_1 \wedge \psi_2$

Now let $a\in \AA$ and $C \subseteq \CC$, 
which can be split into $C = C|_{V_1} \wedge C|_{V_2}$.
By the assumption of variable decomposition also 
$\trans a(\vec U, \vec V) = \Delta_1 \wedge \Delta_2$ for some formulas $\Delta_1$ over $U_1 \cup V_1$ and $\Delta_2$ over $U_2 \cup V_2$.
Hence $\update(\phi_1 \wedge \phi_2, a) \wedge C 
\equiv\exists \vec U. \varphi_1(\vec U) \wedge \varphi_2(\vec U) \wedge \Delta_1 \wedge \Delta_2 \wedge C_1 \wedge C_2 
\equiv\exists \vec U_1. \varphi_1(\vec U_1) \wedge \Delta_1 \wedge C_1 \wedge 
\exists \vec U_2. \varphi_2(\vec U_2) \wedge \Delta_2 \wedge C_2 = 
\update(\phi_1, a) \wedge C_1 \wedge \update(\phi_2, a) \wedge C_2$.
Similarly, $\update(\psi_1 \wedge \psi_2, a) \wedge C \equiv 
\update(\psi_1, a) \wedge C_1 \wedge \update(\psi_2, a) \wedge C_2$.
By finite summary of $\BB_1$ and $\BB_2$,
$\update(\phi_i, a) \wedge C_i \sim_i \update(\psi_i, a) \wedge C_i$ for $i\in \{1,2\}$. Thus 
$\update(\phi_1 \wedge \phi_2, a) \wedge C \equiv \update(\psi_1 \wedge \psi_2, a) \wedge C$ holds by definition.
\end{proof}

\noindent
\lemsref{seq decompose}{var decompose} then imply the decomposition theorem
(\thmref{decompose}).


\section{Examples}

\subsection{NFA for a Formula}
\label{sec:example:NFA}

\begin{example}
\label{exa:nfa}
Let $\psi = \Diamond (b \wedge \langle \cdot\rangle(x-y \geq 2))$.
We abbreviate the constraint $x-y \geq 2$ by $c$.

\medskip
\noindent
\textit{--first, note that for any $\phi$:}

\noindent
\begin{tabular}{@{~}l@{~}l}
& $\delta(\inquotes{\langle\cdot\rangle \phi})$ = $\{(\inquotes{\phi}, \{\neg \last\}), (\inquotes{\bot}, \{\last\})\}$ \\
\end{tabular}

\smallskip
\noindent
\textit{--this captures the fact that either $\phi$ is true next and the word on $2^\Sigma$ didn't end (i.e. the symbol $\neg \last$ does not appear), or $\phi$ is false as the word ended. It is used below for $\phi=c$:}
\bigskip

\noindent
\begin{tabular}{@{~}l@{~}l}
& $\delta(\inquotes{b \wedge \langle \cdot\rangle c})$ = $\delta(\inquotes{b}) \wedge \delta(\inquotes{\langle \cdot\rangle c})$ \\
&= $\{(\inquotes{\top}, \{b\}), (\inquotes{\bot}, \{\neg b\})\} \wedge \{(\inquotes{c}, \{\neg \last\}), (\inquotes{\bot}, \{\last\})\}$ \\
&= $\{(\inquotes{c}, \{b, \neg \last\}), (\inquotes{\bot}, \{\neg b, \neg \last\}), (\inquotes{\bot},$ \\
&\qquad  $\{\neg b,\last\}),(\inquotes{\bot}, \{b,\last\})\}$
\end{tabular}

\smallskip
\noindent
\textit{--the four elements above are obtained as
$R_1\,{\wedge}\, R_2 = \{ (\psi_1\,{\wedge}\,\psi_2, \varsigma_1\,{\cup}\, \varsigma_2) \mid (\psi_1, \varsigma_1) \inn R_1, (\psi_2, \varsigma_2) \inn R_2 \}$ by definition. The result for 
$\delta(\inquotes{b \wedge \langle \cdot\rangle c})$ is used below: }
\smallskip

\noindent
\begin{tabular}{@{~}l@{~}l}
& $\delta(\inquotes{\psi})$ = $\delta(\inquotes{b \wedge \langle \cdot\rangle c}) \vee \delta(\inquotes{\langle\cdot\rangle \psi})$\\
&= $\{(\inquotes{c \vee \psi}, \{b, \neg \last\}), (\inquotes{\psi}, \{\neg b, \neg \last\}), (\inquotes{\bot}, \{\neg b, \last\}),$ \\
&\qquad  $(\inquotes{\bot}, \{b,\last\})\}$ \\
\end{tabular}

\smallskip
\noindent
\textit{--these four elements are obtained by applying the definition of $R_1 \vee R_2 = \{ (\psi_1 \vee \psi_2, \varsigma_1 \cup \varsigma_2) \mid (\psi_1, \varsigma_1) \in R_1, (\psi_2, \varsigma_2) \in R_2 \}$. They correspond to four edges from $(\inquotes{\psi})$. }
\bigskip

\noindent
\begin{tabular}{@{~}l@{~}l}
& $\delta(\inquotes{c{\vee}\psi})$ = $\delta(\inquotes{c}) \vee \delta(\inquotes{\psi})$\\
&= $\{(\inquotes{\top}, \{c\}), (\inquotes{\bot}, \emptyset) \} \vee \delta(\psi)$ \\
&= $\{(\inquotes{\top}, \{c, b, \last\}),
 (\inquotes{\top}, \{c, b, \neg \last\}), 
 (\inquotes{\top}, \{c, \neg b, \last\}) $\\
&\qquad $
 (\inquotes{\top}, \{c, \neg b, \neg \last\})\} \cup \delta(\psi)$
\end{tabular}

\smallskip
\noindent
\textit{--these eight elements are obtained by applying the definition of $R_1 \vee R_2$. They correspond to eight edges from $(\inquotes{c{\vee}\psi})$, of which four reach $q_f=(\inquotes{\top})$.}
\bigskip

\noindent
\textit{--the remaining edges (self loops) are due to $\delta(\inquotes{\top}) = \{(\inquotes{\top},\emptyset)\}$ and $\delta(\inquotes{\bot}) = \{(\inquotes{\bot},\emptyset)\}$.}
The computed edges could be combined in the following NFA:

\resizebox{\columnwidth}{!}{
\centering
\begin{tikzpicture}[scale=.7]
\tikzstyle{formula} = [draw,rectangle, rounded corners, inner sep=4pt, scale=.55]
\tikzstyle{goto} = [->]
\tikzstyle{action} = [scale=.45]
\node[formula] (1)  {$\inquotes{\psi}$};
\node[formula] at (2.5, 1) (2) {$\inquotes{c\vee \psi}$};
\node[formula] at (2.5, -1) (bot) {$\inquotes{\bot}$};
\node[formula, double] (top) at (5,0) {$\inquotes{\top}$};
\draw[goto] ($(1) + (-.6,0)$) -- (1);
\draw[goto] (1) to[loop above] node[action, yshift=-1mm]{$\{\neg b, \neg \last\}$} (1);
\draw[goto, bend left=10] (1) to node[above,action] {$\{b,\neg \last\}$} (2);
\draw[goto, bend left=10] (2) to node[below,action, near start, xshift=4mm] {$\{\neg b,\neg \last\}$} (1);
\draw[goto] (1) to node[below,action, near start,xshift=-1mm]{$\{b,\last\}$} node[below,action, near end,xshift=-1mm]{$\{\neg b, \last\}$} (bot);
\draw[goto] (2) to node[right,action]{$\{b,\last\}$} 
 node[right,action, near end]{$\{\neg b,\last\}$} (bot);
\draw[goto] (2) to 
 node[above,action, near start,xshift=3mm]{$\{c, b, \last\}$}
 node[above,action,xshift=3mm]{$\dots$}
 node[above,action, very near end,xshift=5mm]{$\{c, \neg b, \neg \last\}$}
 (top);
\draw[goto] (2) to[loop above] node[action, right,yshift=-1mm]{$\{b, \neg \last\}$} (2);
\draw[goto] (bot) to[loop right] node[action, yshift=-1mm]{$\emptyset$} (bot);
\draw[goto] (top) to[loop right] node[action, yshift=-1mm]{$\emptyset$} (top);
\end{tikzpicture}
}
In order to obtain the NFA $\NN_\psi$ without $\last$, we use \defref{NFA}. If we moreover omit the non-accepting sink state $\inquotes{\bot}$,
we obtain the following automaton:\\
\resizebox{\columnwidth}{!}{
\centering
\begin{tikzpicture}[scale=.7]
\tikzstyle{formula} = [draw,rectangle, rounded corners, inner sep=4pt, scale=.55]
\tikzstyle{goto} = [->]
\tikzstyle{action} = [scale=.45]
\node[formula] (1)  {$\inquotes{\psi}$};
\node[formula] at (2.5, 0) (2) {$\inquotes{c\vee \psi}$};
\node[formula, double] (top) at (5,.7) {$\inquotes{\top}$};
\node[formula, double] (qe) at (5,-.7) {$q_e$};
\draw[goto] ($(1) + (-.6,0)$) -- (1);
\draw[goto] (1) to[loop above] node[action, yshift=-1mm]{$\{\neg b\}$} (1);
\draw[goto, bend left=10] (1) to node[above,action] {$\{b\}$} (2);
\draw[goto, bend left=10] (2) to node[below,action, near start, xshift=4mm] {$\{\neg b\}$} (1);
\draw[goto] (2) to 
 node[above,action]{$\{c, b\}$}
 node[below,action, near end, xshift=1mm]{$\{c, \neg b\}$}
 (top);
\draw[goto] (2) to 
 node[below,action]{$\{c, b\}$}
 node[above,action, near end, xshift=1mm]{$\{c, \neg b\}$}
 (qe);
\draw[goto] (2) to[loop above] node[action, right,yshift=-1mm]{$\{b\}$} (2);
\draw[goto] (top) to[loop right] node[action, yshift=-1mm]{$\emptyset$} (top);
\end{tikzpicture}
}
While the additional final state does not add anything in this case,
it becomes relevant for formulas $\psi$ containing $\Box$.
\end{example}

\subsection{Product Construction}
\label{sec:example:pc}

\newcommand{\sstart}{start}
\newcommand{\smain}{main}
\newcommand{\schange}{change}
\newcommand{\send}{end}
\newcommand{\ssold}{sold}
We consider the auction system $\BB$ from \exaref{auction}, 
and add names to identify states in the following construction:\\
\begin{tikzpicture}[node distance=40mm]
\tikzstyle{action}=[scale=.6]
\node[state] (1)  {};
\node[state, below of=1, yshift=24mm] (2) {};
\node[state, left of=2, xshift=-12mm] (3) {};
\node[state, right of=2, xshift=5mm] (4) {};
\node[state, right of=4, double, xshift=-5mm] (5) {};
\node[scale=.8, xshift=-5mm] at (1) {$\m{\sstart}$};
\node[scale=.8, xshift=2mm, yshift=-4mm] at (2) {$\m{\smain}$};
\node[scale=.8, xshift=2mm, yshift=-7mm] at (3) {$\m{\schange}$};
\node[scale=.8, yshift=-3mm, xshift=2mm] at (4) {$\m{\send}$};
\node[scale=.8, yshift=-3mm] at (5) {$\m{\ssold}$};
\draw[edge] (1) to node[right,action, very near start]
{$\m{init}\colon[\,\RED{d^w\,{>}\,0}\wedge \BLUE{t^w\,{>}\,0}\,]$} (2);
\draw[edge] (3) to node[above,action] 
 {$\m{bid}\colon[\,\RED{0\,{<}\,b^w} \wedge \BLUE{o^w\,{>}\,o^r}\,]$} (2);
\draw[edge, rounded corners] (2) -- ($(2) + (-.2,.4)$) -- node[above,action]{$\m{check}\colon[\,\RED{d^r\,{>}\,0}\,]$} ($(3) + (.2,.4)$) -- (3);
\draw[edge, rounded corners] (3) -- ($(3) + (.2,-.4)$) to node[above,action]
{$\m{dec}\colon[\,\RED{d^r\,{-}\,d^w\,{\geq}\,1}\,]$} ($(2) + (-.2,-.4)$) -- (2);
\draw[edge, rounded corners] (2) -- ($(2) + (.2,.2)$)
  -- node[above,action]{$\m{exp}\colon[\,\RED{d^r\,{\leq}\,0} \wedge \RED{b^r\,{>}\,0}\,]$}  ($(4) + (-.2,.2)$)  -- (4);
\draw[edge, rounded corners] (2) -- ($(2) + (.2,-.2)$)
  -- node[above,action]{$\m{sell\: now}\colon[\,\BLUE{o^r\,{>}\,t^r}\,]$}  ($(4) - (.2,.2)$)  -- (4);
\draw[edge, rounded corners] (4) -- node[above,action]{$\m{fee}\colon[\,\GREEN{s^w\,{=}\,o^r\,{+}\,10}\,]$} (5);
\end{tikzpicture}
We want to verify that even if a bidder places a bid above the threshold, she need not win. To that end, we intend to find a witness for
$\psi = \Diamond(b\,{=}\,1 \wedge o\,{>}\,t \wedge \Diamond(\m{sold} \wedge b\,{\neq}\,1))$.
To cover all constraints in $\psi$, we set
$\CC = \{ b\,{=}\,1,\, o\,{>}\,t,\, b\,{\neq}\,1\}$.
We obtain the following NFA $\NN_\psi$ according to \defref{NFA}:
\begin{center}
\begin{tikzpicture}[node distance=30mm]
\tikzstyle{formula} = [draw,rectangle, rounded corners, inner sep=4pt, scale=.8]
\tikzstyle{goto} = [->]
\node[formula] (1)  {$\psi$};
\node[formula, right of=1] (2) {$\psi'$};
\node[formula, below of=1, yshift=20mm] (bot) {$\bot$};
\node[formula, right of=2, double] (top) {$\top$};
\draw[goto] ($(1) + (-.6,0)$) -- (1);
\draw[goto] (1) to[loop above,looseness=5] node[action, yshift=-1mm]{$\emptyset$} (1);
\draw[goto] (1) -- node[action, above]{$\{o\,{>}\,t, b\,{=}\,1\}$} (2);
\draw[goto] (2) to[loop above,looseness=5] node[action, yshift=-1mm]{$\emptyset$} (2);
\draw[goto] (2) -- node[action, above]{$\{\m{sold}, b\,{\neq}\,1\}$} (top);
\draw[goto] (1) -- node[action, left]{$\emptyset$} (bot);
\draw[goto] (2) -- node[action, right, xshift=4mm]{$\emptyset$} (bot);
\draw[goto] (bot) to[loop left] node[action, yshift=-1mm]{$\emptyset$} (bot);
\draw[goto] (top) to[loop right] node[action, yshift=-1mm]{$\emptyset$} (top);
\draw[goto] (psi) to[bend left=30] node[action, above]{$\{\m{sold}, o>t, b=1, b \neq 1\}$} (top);
\end{tikzpicture}
\end{center}
where $\psi' = \psi \vee \Diamond(\m{sold} \wedge b\,{\neq}\,1)$.
As an optimization, for the product construction below both the edge
from $\psi$ to $\top$ (having an unsatisfiable label) and the state $\bot$
(not contributing to accepting runs) can be omitted, so that we obtain
$\NFApsi'$:
\begin{center}
\begin{tikzpicture}[node distance=30mm]
\tikzstyle{formula} = [draw,rectangle, rounded corners, inner sep=4pt, scale=.8]
\tikzstyle{goto} = [->]
\node[formula] (1)  {$\psi$};
\node[formula, right of=1] (2) {$\psi'$};
\node[formula, right of=2, double] (top) {$\top$};
\draw[goto] ($(1) + (-.6,0)$) -- (1);
\draw[goto] (1) to[loop above,looseness=5] node[action, yshift=-1mm]{$\emptyset$} (1);
\draw[goto] (1) -- node[action, above]{$\{o\,{>}\,t, b\,{=}\,1\}$} (2);
\draw[goto] (2) to[loop above,looseness=5] node[action, yshift=-1mm]{$\emptyset$} (2);
\draw[goto] (2) -- node[action, above]{$\{\m{sold}, b\,{\neq}\,1\}$} (top);
\draw[goto] (top) to[loop right] node[action, yshift=-1mm]{$\emptyset$} (top);
\end{tikzpicture}
\end{center}
We construct $\smash{\NN_\BB^\psi}$
by combining $\NFApsi'$ with $\BB$ as in
\defref{product construction}.
To that end, we first need to determine a summary for $\BB$:
As explained in \exaref{auction2}, finite summary of $\BB$ holds by a
variable decomposition into a GC-\mydds $\BB_1$ over $\{b,d\}$ and a \mydds
$\BB_2$ over $\{o,t,s\}$. Note that also $\CC$ can be split accordingly
into $\CC = \{b\,{=}\,1, b\,{\neq}\,1\} \uplus \{o\,{>}\,t\}$ 
over the respective variables.
$\BB_2$ is subsequently split sequentially into two subsystems, but both
have finite history sets, so that also $\BB_2$ has a finite history set by \thmref{decompose}, and hence a finite summary $(\Phi_2,\equiv)$, for some
$\Phi_2$. On the other hand, being a GC-\mydds with maximal constant 0, 
$\BB_1$ has finite summary $(\Phi_1, \equivGC[1])$.
Hence, following the proof of \thmref{decompose}, for $\BB$ we use the equivalence relation $\sim$ on formulas defined as follows:
$\phi \sim \phi'$ if $\phi|_{\{b,d\}} \equivGC[1] \phi'|_{\{b,d\}}$ and
$\phi|_{\{o,t,s\}} \equiv \phi'|_{\{o,t,s\}}$, where $\phi|_{\{b,d\}}$
is the subformula of $\phi$ mentioning only variables $\{b,d\}$, and
similar for $\phi'$ and $\{o,t,s\}$ (see the proof of \lemref{var decompose} for details).

Our tool computes a product construction with 35 states, of which
\figref{product} shows a part.
For the sake of readability, we omitted all edge labels $\emptyset\in \Sigma$.
\begin{figure}[t]
\begin{tikzpicture}[node distance=12mm]
\tikzstyle{node} = [draw,rectangle split, rectangle split parts=3,rectangle split horizontal, rectangle split draw splits=true, inner sep=3pt, scale=.7, rounded corners]
\tikzstyle{goto} = [->]
\tikzstyle{accepting path} = [red!80!black, line width=.8pt]
\tikzstyle{accepting state} = [fill=red!80!black!15]
\tikzstyle{action}=[scale=.6, black]
\node[node] (0)  {\pcnode{$b_0'$}{$\psi$}{$b\,{=}\,d\,{=}\,o\,{=}\,s\,{=}\,t\,{=}\,0$}};
\node[node, below of=0] (1)
 {\pcnode{$\m{\sstart}$}{$\psi$}{$b\,{=}\,d\,{=}\,o\,{=}\,s\,{=}\,t\,{=}\,0$}};
\node[node, below of=1] (2)
 {\pcnode{$\m{\smain}$}{$\psi$}{$d\,{\geq}\,1\wedge t\,{>}\,0 \wedge b\,{=}\,o\,{=}\,s\,{=}\,0$}};
\node[node, below of=2] (3) {\pcnode{$\m{\schange}$}{$\psi$}{$d\,{\geq}\,1\wedge t\,{>}\,0 \wedge b\,{=}\,o\,{=}\,s\,{=}\,0$}};
\node[node, below of=3, yshift=-5mm] (4) {\pcnode{$\m{\smain}$}{$\psi$}{$d\,{\geq}\,1\wedge b\,{\geq}\,1 \wedge o\,{>}\,0 \wedge s\,{=}\,0\wedge t\,{>}\,0$}};
\node[node, below of=3, xshift=28mm, yshift=5mm] (5) {\pcnode{$\m{\smain}$}{$\psi$}{$t\,{>}\,0\wedge o\,{=}\,b\,{=}\,s\,{=}\,0$}};
\node[node, below of=3, xshift=-28mm, yshift=-24mm] (6) {\pcnode{$\m{\smain}$}{$\psi'$}{$b\,{=}\,1\wedge o\,{>}\,t \wedge d\,{\geq}\,1 \wedge t\,{>}\,0\wedge s\,{=}\,0$}};
\node[node, below of=6, xshift=0mm, yshift=0mm] (7) {\pcnode{$\m{\schange}$}{$\psi'$}{$b\,{=}\,1\wedge o\,{>}\,t \wedge d\,{\geq}\,1 \wedge t\,{>}\,0\wedge s\,{=}\,0$}};
\node[node, below of=7, xshift=0mm, yshift=0mm] (8) {\pcnode{$\m{\smain}$}{$\psi'$}{$b\,{>}\,0\wedge o\,{>}\,t \wedge d\,{\geq}\,1 \wedge t\,{>}\,0\wedge s\,{=}\,0$}};
\node[node, below of=8, xshift=0mm, yshift=-5mm] (9) {\pcnode{$\m{\send}$}{$\psi'$}{$b\,{\geq}\,1\wedge o\,{>}\,t \wedge d\,{\geq}\,1 \wedge t\,{>}\,0\wedge s\,{=}\,0$}};
\node[node, below of=9, xshift=0mm, yshift=0mm,accepting state] (10) {\pcnode{$\m{\ssold}$}{$\top$}{$b\,{\geq}\,2\wedge o\,{>}\,t \wedge d\,{\geq}\,1 \wedge t\,{>}\,0\wedge s\,{=}\,o+10$}};
\node[node, below of=8, xshift=46mm, yshift=3mm] (11) {\pcnode{$\m{\schange}$}{$\psi'$}{$b\,{>}\,0\wedge o\,{>}\,t \wedge d\,{\geq}\,1 \wedge t\,{>}\,0\wedge s\,{=}\,0$}};
\node[node, below of=4, xshift=23mm, yshift=2mm] (12) {\pcnode{$\m{\schange}$}{$\psi$}{$d\,{\geq}\,1\wedge b\,{\geq}\,1\wedge o\,{>}\,0 \wedge t\,{>}\,0 \wedge s\,{=}\,0$}};
\node[node, below of=10, xshift=20mm, yshift=4mm,] (13) {\pcnode{$\m{\ssold}$}{$\psi'$}{$b\,{>}\,1\wedge o\,{>}\,t \wedge d\,{\geq}\,1 \wedge t\,{>}\,0\wedge s\,{=}\,o+10$}};
\node[node, below of=13, xshift=0mm, yshift=0mm] (14) {\pcnode{$\m{\smain}$}{$\psi'$}{$b\,{\geq}\,1\wedge o\,{>}\,t \wedge t\,{>}\,0\wedge s\,{=}\,0$}};
\node[node, below of=14] (15) {\pcnode{$\m{\send}$}{$\psi'$}{$b\,{\geq}\,1\wedge o\,{>}\,t \wedge d\,{\geq}\,0 \wedge t\,{>}\,0\wedge s\,{=}\,0$}};
\node[node, below of=15, accepting state] (16) {\pcnode{$\m{\ssold}$}{$\top$}{$b\,{\geq}\,2\wedge o\,{>}\,t \wedge d\,{\geq}\,0 \wedge t\,{>}\,0\wedge s\,{=}\,o+10$}};
\draw[goto] ($(0) + (0,.5)$) -- (0);
\draw[goto,accepting path] (0) to 
 node[action, left]{$\m{a_0}$} 
 (1);
\draw[goto,accepting path] (1) to 
 node[action, left]{$\m{init}$} 
 (2);
\draw[goto,accepting path] (2) to 
 node[action, left]{$\m{check}$} 
 (3);
\draw[goto] ($(3.south) + (-.7,0)$) to 
 node[action, left, very near end, yshift=1mm]{$\m{bid}$} 
 ($(4.north) + (-.7,0)$);
\node[yshift=-4mm, scale=.7, inner sep=5pt] (4dots) at (4.185) {\dots};
\draw[goto] (4.185) -- (4dots);
\draw[goto] (3.east) to[bend left]
 node[action, left, very near end]{$\m{dec}$} 
 (5.10);
\node[yshift=-4mm, scale=.7, inner sep=5pt] (5dots) at (5.352) {\dots};
\draw[goto] (5.352) -- (5dots);
\draw[goto,accepting path] (3.west) to[bend right]
 node[action, left, near start, yshift=-2mm, xshift=-2mm]{$\m{bid}$} 
 node[action, right, near start, yshift=-2mm, xshift=-1mm]{$\{o>t, b=1\}$} (6.170);
\draw[goto,accepting path] (6) to 
 node[action, left]{$\m{check}$} 
 (7);
\node[xshift=6mm, scale=.7, inner sep=5pt] (6dots) at (6.east) {\dots};
\draw[goto] (6) -- (6dots);
\draw[goto,accepting path] (7) to 
 node[action, left]{$\m{bid}$} 
 (8);
\node[xshift=6mm, scale=.7, inner sep=5pt] (7dots) at (7.east) {\dots};
\draw[goto] (7) -- (7dots);
\draw[goto,accepting path] ($(8.south) + (-1,0)$) to 
 node[action, left]{$\m{sell\:now}$} 
 ($(9.north) + (-1,0)$);
\draw[goto,accepting path] (9) to 
 node[action, left]{$\m{fee}$} 
 node[action, right]{$\{\m{\ssold}, b\,{\neq}\,1\}$} (10);
\draw[goto] (9.east) to[bend left] 
 node[action, left, yshift=0mm]{$\m{fee}$} 
 node[action, right, xshift=1mm]{$\{\m{\ssold}, b\,{\neq}\,1\}$} (13.6);
\draw[goto] (8.east) to[bend left]
 node[action, left, near end, xshift=-1mm]{$\m{check}$} 
 ($(11.north) + (.5,0)$);
\draw[goto] (11.west) to[bend left=20]
 node[action, below, near start]{$\m{bid}$} 
 (8.194);
\draw[goto] (4.east) to[bend left]
 node[action, left, near end]{$\m{check}$} 
 ($(12.north) + (1.3,0)$);
\draw[goto] (12.west) to[bend left]
 node[action, left, near end]{$\m{bid}$} 
 (4.190);
\draw[goto] (11.352) to[bend left]
 node[action, left]{$\m{dec}$} 
 (14.east);
\draw[goto] (14) to 
 node[action, left]{$\m{sell\:now}$} 
 (15);
\draw[goto] (15) to 
 node[action, left]{$\m{fee}$} 
 node[action, right]{$\{\m{\ssold}, b\,{\neq}\,1\}$} (16);
\end{tikzpicture}
\caption{Product construction for the auction system.\label{fig:product}}
\end{figure}
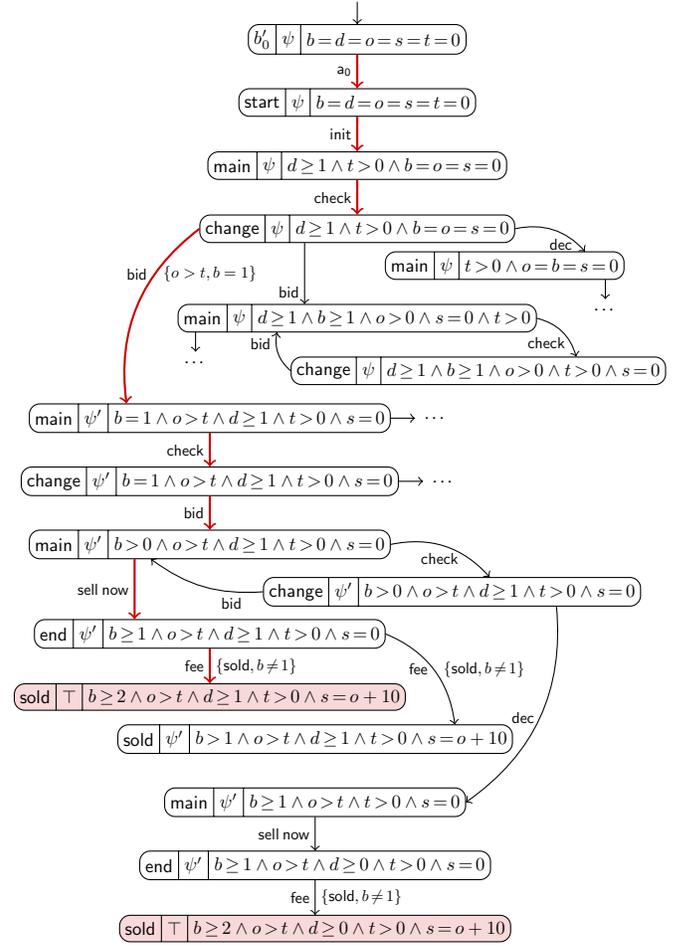
Since $\NN_\BB^\psi$ has final states (shown shaded),
by \thmref{model checking} a witness exists.
We consider the accepting path $\pi$ drawn in red,
and use the soundness part (direction $\Longrightarrow$) of the proof
to extract a respective witness.
By \lemref{PC sat} (1), $\pi$ contains an
accepting transition sequence with word $w$ in $\NN_\psi$ that is consistent with a symbolic run $\sigma$, such that $h(\sigma,w)$ is satisfiable.
Both can be directly read off the path,
here the former is given by
the state sequence $\vec q = (\psi^5\:\psi'^3\:\top)$ with
\\[.5ex]
$w = \emptyset\:\emptyset\:\emptyset\:\{o>t, b=1\}\:
\emptyset\:\emptyset\:\emptyset\:\{\m{\ssold}, b\,{\neq}\,1\}$\\[.5ex]
in $\Sigma^*$ and the symbolic run is
\begin{align*}
\sigma\colon
\m{\sstart} 
 &\goto{\m{init}} \m{\smain} 
  \goto{\m{check}} \m{\schange}
  \goto{\m{bid}} \m{\smain} 
  \goto{\m{check}} \m{\schange} \\
 &\goto{\m{bid}} \m{\smain}
  \goto{\m{sell\:now}} \m{\send}
  \goto{\m{fee}} \m{\ssold}
\end{align*}
Consistency of $w$ and $\sigma$ simply means that they are not contradictory with respect to states and actions.
This is indeed the case since only the last symbol of $w$ mentions a state ($\m{\ssold}$), which matches the last state of $\sigma$.

We now follow the proof of \thmref{model checking} ($\Longrightarrow$) to obtain a witness:
$w$ corresponds to the verification constraint sequence
$\cseq C = \langle \emptyset, \emptyset, \emptyset, \{o\,{>}\,t, b\,{=}\,1\}, 
\emptyset, \emptyset, \emptyset, \{b\,{\neq}\,1\}\rangle$.
Let $C_i$ be the $i$th element of $\cseq C$.
The formula $\hist(\sigma,w)\,{=}\,\hist(\sigma,\cseq C)$ is satisfied
e.g. by the assignment $\alpha$ that maps $(b,d,t,o,s)$ to $(2,3,100,100.5,110.5)$. By the correspondence of satisfying assignments for history constraints
and runs (\lemref{abstraction}),
there exists a run\\[.5ex]
$\rho\colon 
(\m{\sstart}, \alpha_0) \goto{\m{init}} 
(\m{\smain}, \alpha_1) \goto{\m{check}} \dots 
\goto{\m{fee}} (\m{\ssold}, \alpha_7)
$\\[.5ex]
of length $7$ abstracted by $\sigma$
such that $\alpha=\alpha_7$ and 
$\alpha_i$ satisfies $C_i$ for all $i$, $0\,{\leq}\,i\,{\leq}\,7$.
To show existence of $\rho$, the proof of \lemref{abstraction} works its way
back from $\alpha_7$ along the actions in $\sigma$ to construct suitable 
intermediate assignments, for instance the following (values are given for $(b,d,t,o,s)$):\\[.5ex]
$
\begin{array}{r@{\ }l@{\quad}r@{\ }l}
\alpha_0 \colon& (0,0,0,0,0) &
\alpha_3=\alpha_4 \colon & (1,3,100,100.1,0) \\
\alpha_1 = \alpha_2 \colon & (0,3,100,0,0) &
\alpha_5=\alpha_6 \colon & (2,3,100,100.5,0)
\end{array}
$\\[.5ex]
In \lemref{NFA acceptance} it is proven that $\rho\models \psi$. Indeed, the run $\rho$ has the property that bidder 1 submitted
a bid above the threshold but did not win the auction.

\subsection{Counterexamples for Modularity}
\label{app:example:mod}

We first illustrate that an arbitrary composition of {\mydds}s with
\property does not enjoy \property.
\begin{example}
\label{exa:butterfly}
Consider the following \mydds $\BB$, where $x_1, x_2$
are counters and $t$ is an auxiliary variable:
\begin{center}
\begin{tikzpicture}[xscale=.9]
\tikzstyle{mc}=[blue!80!black]
\tikzstyle{ff}=[red!80!black]
\node[state] (1) {};
\node[state] at (20:2cm) (2b) {};
\node[state] at (160:2cm) (2a) {};
\node[state] at (-20:2cm) (3b) {};
\node[state] at (-40:2cm) (4b) {};
\node[state] at (-160:2cm) (3a) {};
\node[state] at (-140:2cm) (4a) {};
\node[state, double] (f) at (0, -1.4) {};
\draw[edge] ($(1) + (0,.4)$) -- (1);
\draw[edge,mc] (1) to[bend right=20] node[above,action, near end]{$\mathsf{copy}_1$} (2a);
\draw[edge,ff] (2a) to[bend right=20] node[below,action, near start, anchor=east, xshift=-2mm]{$\mathsf{inc}_1$} (1);
\draw[edge,mc] (1) to[bend left=20] node[above,action, near end]{$\mathsf{copy}_2$} (2b);
\draw[edge,ff] (2b) to[bend left=20] node[below,action, near start, anchor=west, xshift=2mm]{$\mathsf{inc}_2$} (1);
\draw[edge,mc] (1) to[bend right=20] node[above,action, near end, anchor=east, yshift=1mm]{$\mathsf{if}_1$} (3a);
\draw[edge,mc] (3a) to[bend right=20] node[left,action, anchor=east]{$\mathsf{copy}_1$} (4a);
\draw[edge,mc] (1) to[bend left=20] node[above,action, near end, anchor=west, yshift=1mm]{$\mathsf{if}_2$} (3b);
\draw[edge,mc] (3b) to[bend left=20] node[right,action, anchor=west]{$\mathsf{copy}_2$} (4b);
\draw[edge,ff] (4b) to[bend left=20] node[right,action, anchor=west, xshift=1mm]{$\mathsf{dec}_2$} (1);
\draw[edge,ff] (4a) to[bend right=20] node[left,action, anchor=east, xshift=-1mm]{$\mathsf{dec}_1$} (1);
\draw[edge,mc] (1) --node[right,action, near end]{$\mathsf{check}$}  (f);
\end{tikzpicture}
\end{center}
The action guards are given by
$t^w = x_i^r$ for $\mathsf{copy}_i$,
$x_i^w = t^r + 1$ for $\mathsf{inc}_i$,
$x_i^r > 0$ for $\mathsf{if}_i$,
$x_i^w = t^r - 1$ for $\mathsf{dec}_i$, for both $i \in \{1,2\}$, and
$x_1^r = x_2^r$ for $\mathsf{check}$. 
All the actions can be seen as single-edge \myddss,
have bounded lookback because they are obviously acyclic.
(While bounded lookback is actually a property of a \mydds $\BB$ together with
a constraint set $\CC$, it suffices here to take $\CC=\emptyset$.)
However, $\BB$ models a two-counter system, so that even reachability of control states is undecidable, hence $\BB$ cannot enjoy finite summary.
\end{example}

The next example shows that the parallel composition of {\mydds}s with finite summary need not yield a system with finite summary if the subsystems share variables.

\begin{example}
\label{exa:parallel}
Consider the following two {\mydds}s:\\
\begin{tikzpicture}[scale=1.1]
\begin{scope}
\tikzstyle{mc}=[blue!80!black]
\node at (-2, .4) {$\BB_1\colon$};
\node[state] (1) {};
\node[state] at (-20:1.5cm) (3b) {};
\node[state] at (-160:1.5cm) (3a) {};
\node[state, double] (f) at (0, -1) {};
\draw[edge] ($(1) + (0,.4)$) -- (1);
\draw[edge,mc] (1) to[loop, out=110, in=160, looseness=20] node[left,action, anchor=east, near end]{$\mathsf{copy}_1$} (1);
\draw[edge,mc] (1) to[loop, out=70, in=20, looseness=20] node[right,action, anchor=west, near end]{$\mathsf{copy}_2$} (1);
\draw[edge,mc] (1) to[bend right=20] node[above,action, anchor=east,  near end,yshift=1mm]{$\mathsf{if}_1$} (3a);
\draw[edge,mc] (3a) to[bend right=20] node[action, below, near start]{$\mathsf{copy}_1$} (1);
\draw[edge,mc] (1) to[bend left=20] node[above,action,anchor=west, near end, yshift=1mm]{$\mathsf{if}_2$} (3b);
\draw[edge,mc] (3b) to[bend left=20] node[below,action, near start]{$\mathsf{copy}_2$} (1);
\draw[edge,mc] (1) --node[right,action, near end]{$\mathsf{check}$}  (f);
\end{scope}
\begin{scope}[xshift=40mm]
\node at (-1.7, .4) {$\BB_2\colon$};
\tikzstyle{ff}=[red!80!black]
\node[state, inner sep=3pt, double] (1) {};
\draw[edge] ($(1) + (0,.4)$) -- (1);
\draw[edge,ff] (1) to[loop, out=110, in=160, looseness=20] node[left,action, anchor=east, xshift=-2mm]{$\mathsf{inc}_1$} (1);
\draw[edge,ff] (1) to[loop, out=70, in=20, looseness=20] node[right,action, anchor=west, xshift=2mm]{$\mathsf{inc}_2$} (1);
\draw[edge,ff] (1) to[loop, out=-20, in=290, looseness=20] node[right,action, anchor=west, xshift=1mm]{$\mathsf{dec}_2$} (1);
\draw[edge,ff] (1) to[loop, out=200, in=250, looseness=20] node[left,action, anchor=east, xshift=-1mm]{$\mathsf{dec}_1$} (1);
\end{scope}
\end{tikzpicture}
Both $\BB_1$ and $\BB_2$ on their own have 2-bounded lookback (they are even feedback-free), and hence have finite summary.
However, a suitable interleaving of the actions of $\BB_1$ and $\BB_2$ emulates the system in \exaref{butterfly}, hence such a parallel composition cannot preserve finite summary.
\end{example}

\subsection{Tool Results on Examples}
\label{sec:examples:tool}

We report on results obtained with \tool on some examples.
However, the tool is a prototype implementation,
and we intend to provide a proof of feasibility
rather than a comprehensive evaluation, which is left for future work.
\begin{compactitem}
 \item For the auction system $\BB_a$ in \exaref{auction}, we considered several relevant properties.
 For each of the properties $\psi_{1i}$ below, \tool
 can detect finite summary with respect to the set $\CC$ of constraints in $\psi$
 by using our decomposition results, as explained in \exaref{auction2}.
 Here, the unsatisfiability of $\psi_{11}$ shows that if the item is sold before 
 the auction expires, the price exceeds the threshold, and $\psi_{12}$ expresses
 that the first to bid above the threshold need not win (cf. \exaref{auction2}). 
Unsatisfiability of $\psi_{13}$ shows that when the $\mathsf{sold}$ state is reached, the bidder variable $b$ was set.
Satisfiability of $\psi_{14}$ shows that there is a run where
$s$ is not set until the auction expired or the threshold is reached;
and in fact we can show that its negation $\psi_{15}$ is unsatisfiable.
\item In \cite{MannhardtLRA16} a Petri net with data (DPN) is used
to describe a road fine management process by the Italian police.
This net has a single token at any point of time, and is hence easily transformed
into a \mydds that we call $\BB_r$ here.
It has nine states, 19 transitions, eight integer and rational variables,
and uses monotonicity constraints as well as more
complex constraints like
$\mathit{total} \geq \mathit{amount} + \mathit{expenses}$ as guards.
However, the system enjoys 2-bounded lookback and hence finite summary
with respect to all constraints occurring in the formulas below.
To verify that the variables $\mathit{delaySend}$, $\mathit{delayPrefecture}$,
and $\mathit{delayJudge}$ (abbreviated $\mathit{dS}$, $\mathit{dP}$,
and $\mathit{dJ}$) remain within the desired intervals $[0,2160[$,
$[0,1440[$, and $[0,1440[$, respectively, we show that 
$\psi_{21}$, $\psi_{22}$, and $\psi_{23}$ are unsatisfiable.
Satisfiability of $\psi_{24}$ shows that when the $\mathsf{end}$ state is 
reached, it might be the case that the total amount paid is less than the
expenses, and satisfiability of $\psi_{25}$ that the total amount paid
can remain 0 over the entire run.
To verify the liveness property
$\Box\,(\langle\mathsf{notify}\rangle\top \to \Diamond\, (\mathit{dismissal}\,{=}\, 0))$, we can check that its negation $\psi_{26}$
is unsatisfiable.
(Though $\top \not\in \LBC$, it can be encoded, e.g. as $x=x$).
\item
In~\cite{Mannhardt18} a hospital billing process is given as a DPN that is one-bounded, but has markings with more than one tokens. By enumerating all possible markings and representing string constants as integers, the DPN is transformed into a \mydds $\BB_h$ with 16 states, 40 transitions, and four variables. The system has bounded lookback with respect to the constraints in the formulas below.
We can check pure state reachability of $\mathsf{p16}$ with formulas like $\psi_{31}$.
To verify the liveness property
$\Box\,(\mathsf{p40} \wedge \mathit{closed}\,{=}\, 1 \to \Diamond\, (\mathit{ccode}\,{>}\, 0))$, we can check that its negation $\psi_{32}$
is unsatisfiable.
Finally, unsatisfiability of $\psi_{33}$ shows that a $\mathsf{reopen}$ action
never occurs right after a $\mathsf{storno}$ action.
We observe that
\tool is less efficient for $\BB_h$ than for other \myddss, likely because the summary set is comparatively large.
\end{compactitem}
\smallskip
 The following table lists \myddss with the checked properties, 
and indicates whether a witness exists, as well as the analysis time
 in seconds.\\[1ex]
\noindent
\begin{tabular}{@{}l@{\ }l@{$\colon\:$}ll@{\:\:}r@{}}
$\BB_a$ 
 & $\psi_{11}$ & $\Diamond(\mathsf{sold} \wedge d\,{>}\,0 \wedge o\,{\leq}\,t)$ & \no & 2.5\\
 & $\psi_{12}$ & $\Diamond(b\,{=}\,1 \wedge o\,{>}\,t \wedge \Diamond(\mathsf{sold} \wedge b\,{\neq}\,1))$ & \yes & 4.3\\
 & $\psi_{13}$ & $\Diamond(\mathsf{sold} \wedge b\,{=}\,0)$ & \no & 2.1\\
 & $\psi_{14}$ & $s\,{=}\,0 \until (d\,{\leq}\,0 \vee o\,{>}\,t)$ & \yes & 5.1\\
 & $\psi_{15}$ & $\Box\,(s\,{=}\,0) \vee ((d\,{>}\,0 \wedge o\,{\leq}\,t) \until s\,{\neq}\,0)$ & \no & 3.1\\[.5ex]
$\BB_r$ 
 & $\psi_{21}$ & $\Diamond\,(\mathit{dS}\,{\leq}\,0 \vee \mathit{dS}\,{\geq}\, 2160)$ & \no & 1.1\\
 & $\psi_{22}$ & $\Diamond\,(\mathit{dP}\,{\leq}\,0 \vee \mathit{dP}\,{\geq}\, 1440)$ & \no & 1.2\\
 & $\psi_{23}$ & $\Diamond\,(\mathit{dJ}\,{\leq}\,0 \vee \mathit{dJ}\,{\geq}\, 1440)$ & \no & 1.1\\
 & $\psi_{24}$ & $\Diamond\,(\mathsf{end} \wedge \mathit{total}\,{<}\, \mathit{amount}\,{+}\,\mathit{expense})$ & \yes & 1.0\\
 & $\psi_{25}$ & $\Box\,(\mathit{total}\,{=}\, 0)$ & \yes & 1.5\\
 & $\psi_{26}$ & $\Diamond\,(\langle\mathsf{notify}\rangle \top \wedge \Box\, (\mathit{dismissal} \,{\neq}\,0))$& \no & 2.1\\[.5ex]
$\BB_h$ 
 & $\psi_{31}$ & $\Diamond\,\mathsf{p16}$ & \yes & 27.1\\
 & $\psi_{32}$ & $\Diamond\,(\mathsf{p40} \wedge \mathit{closed}\,{=}\, 1 \wedge \Box\, (\mathit{ccode} \leq 0))$ & \no & 19.4 \\
 & $\psi_{33}$ & $\Diamond\,(\langle\mathsf{storno}\rangle\,\langle\mathsf{reopen}\rangle\,\top)$ & \no & 21.7\\
\end{tabular}\\[1ex]
\tool comes with a test script that checks 43 properties of 14 systems,
including the above.

We emphasize that the tool is a proof of concept implementation, and many
improvements can be seemlessly incorporated to optimize its performance. These range from
constructing more efficient automata for LTL$_f$ formulas~\cite{XiaoL0SPV21,GiacomoF21} to
a more succinct encoding when searching for witnesses (e.g., reusing formula
parts).

\end{document}